\crefname{theorem}{Thm.}{Thms.}
\crefname{lemma}{Lem.}{Lemmas}
\crefname{corollary}{Cor.}{Cors.}
\crefname{figure}{Fig.}{Figs.}
\crefname{definition}{Defn.}{Defns.}
\crefname{table}{Tab.}{Tabs.}
\crefname{example}{Ex.}{Exs.}
\crefname{item}{item}{items}
\crefname{footnote}{footnote}{footnotes}
\crefname{observation}{Obs.}{Obs.}
\crefname{remark}{Remark}{Remarks}
\crefname{proposition}{Prop.}{Props.}
\crefname{fact}{Fact}{Facts}
\crefname{challenge}{Challenge}{Challenges}
\crefname{principle}{Principle}{Principles}
\def\cb@checkPdfxy#1#2#3#4#5{%
\cb@@findpdfpoint{#1}{#2}%
\ifdim#3sp=\cb@pdfx\relax      
\ifdim#4sp=\cb@pdfy\relax      
\ifdim#5=\cb@pdfz\relax
\else
\cb@error
\fi
\else
\cb@error
\fi
\else
\cb@error
\fi
}}
\newif\iflong
\newcommand{\mathbbl}[1]{\vvmathbb{#1}}
\newcommand{\cmark}{\ding{51}}
\newcommand{\xmark}{\ding{55}}
\newtheoremstyle{acmremark}%
  {.5\baselineskip\@plus.2\baselineskip\@minus.2\baselineskip}
  {.5\baselineskip\@plus.2\baselineskip\@minus.2\baselineskip}
  {\itshape}
  {\parindent}
  {\itshape}
  {.}
  {.5em}
  {\thmname{#1}\thmnumber{ #2}\thmnote{ {(#3)}}}
  \theoremstyle{acmplain}%
  \newtheorem*{theorem*}{Theorem}%
  \newtheorem*{lemma*}{Lemma}%
  \newtheorem*{proposition*}{Proposition}%
  \newtheorem*{corollary*}{Corollary}%
  \theoremstyle{acmdefinition}%
  \theoremstyle{acmremark}%
  \newtheorem{remark}[theorem]{Remark}%
  \theoremstyle{acmplain}%
\setlist{nosep,leftmargin=\parindent}
\renewcommand{\paragraph}{%
  \@startsection{paragraph}{4}%
  {\z@}{-.1\baselineskip \@plus -2\p@ \@minus -.2\p@}{-3.5\p@}%
  {\@parfont\@adddotafter}%
}
\newcommand{\dotsim}{%
  \mathbin{\textnormal{%
    \mathsurround=0pt
    \ooalign{%
      \hidewidth\vphantom{$\div$}\raisebox{.95ex}{\scalebox{1.2}{.}}\hidewidth\cr
      $\sim$\cr
      \hidewidth\vphantom{$\div$}\raisebox{-.05ex}{\scalebox{1.2}{.}}\hidewidth\cr
    }%
  }}%
}
\newcommand{\bind}{\mathbin{\gg\!=}}
\newcommand{\many}[1]{\overrightarrow{#1}}
\newcommand{\Gm}{\Gamma}
\newcommand{\Sg}{\Sigma}
\newcommand{\m}[1]{\mathsf{#1}}
\newcommand{\dplus}{\mathbin{{+}\!\!{+}}}
\newcommand{\kif}{\kw{if}\xspace}
\newcommand{\kthen}{\kw{then}\xspace}
\newcommand{\kelse}{\kw{else}\xspace}
\newcommand{\ksample}[1]{\ensuremath{\kw{sample}(#1)}}
\newcommand{\kgets}{\ensuremath{\gets}\xspace}
\newcommand{\kproc}{\kw{proc}\xspace}
\newcommand{\kreturn}[1]{\ensuremath{\kw{return}(#1)}}
\newcommand{\kcall}{\kw{call}\xspace}
\newcommand{\ksamplei}[2]{\ensuremath{\kw{sample}_\m{rv}\{\id{#2}\}(#1)}}
\newcommand{\ksampleo}[2]{\ensuremath{\kw{sample}_\m{sd}\{\id{#2}\}(#1)}}
\newcommand{\kifi}[1]{\ensuremath{\kw{if}_\m{rv}\{\id{#1}\}}\xspace}
\newcommand{\kifo}[1]{\ensuremath{\kw{if}_\m{sd}\{\id{#1}\}}\xspace}
\newcommand{\klsample}[2]{\ensuremath{\kw{sample}(@{#2},#1)}}
\newcommand{\eabs}[3]{\ensuremath{\lambda(#1.#3)}}
\newcommand{\eapp}[2]{\ensuremath{\m{app}(#1;#2)}}
\newcommand{\etriv}{\ensuremath{\m{triv}}}
\newcommand{\etrue}{\ensuremath{\m{true}}}
\newcommand{\efalse}{\ensuremath{\m{false}}}
\newcommand{\econd}[3]{\ensuremath{\m{if}(#1;#2;#3)}}
\newcommand{\elet}[3]{\ensuremath{\m{let}(#1; #2.#3)}}
\newcommand{\ebinop}[3]{\ensuremath{\m{op}_{#1}(#2;#3)}}
\newcommand{\eber}[1]{\ensuremath{\cn{Ber}(#1)}}
\newcommand{\eunif}{\ensuremath{\cn{Unif}}}
\newcommand{\ebeta}[2]{\ensuremath{\cn{Beta}(#1;#2)}}
\newcommand{\egamma}[2]{\ensuremath{\cn{Gamma}(#1;#2)}}
\newcommand{\enormal}[2]{\ensuremath{\cn{Normal}(#1;#2)}}
\newcommand{\ecat}[1]{\ensuremath{\cn{Cat}(#1)}}
\newcommand{\egeo}[1]{\ensuremath{\cn{Geo}(#1)}}
\newcommand{\epois}[1]{\ensuremath{\cn{Pois}(#1)}}
\newcommand{\mret}[1]{\ensuremath{\m{ret}(#1)}}
\newcommand{\mbnd}[3]{\ensuremath{\m{bnd}(#1;#2.#3)}}
\newcommand{\msamplei}[2]{\ensuremath{\m{sample}_\m{rv}\{\id{#2}\}(#1)}}
\newcommand{\msampleo}[2]{\ensuremath{\m{sample}_\m{sd}\{\id{#2}\}(#1)}}
\newcommand{\mbranchi}[3]{\ensuremath{\m{cond}_\m{rv}\{\id{#3}\}(#1;#2)}}
\newcommand{\mbrancho}[4]{\ensuremath{\m{cond}_\m{sd}\{\id{#4}\}(#1;#2;#3)}}
\newcommand{\mcall}[2]{\ensuremath{\m{call}(#1;#2)}}
\newcommand{\fundec}[7]{\ensuremath{\m{fix}\{#6 ; #7 \}(#1.#4.#5)}}
\newcommand{\tbool}{\ensuremath{\mathbbl{2}}}
\newcommand{\tnat}{\ensuremath{\mathbbl{N}}}
\newcommand{\treal}{\ensuremath{\mathbbl{R}}}
\newcommand{\tureal}{\ensuremath{\mathbbl{R}_{(0,1)}}}
\newcommand{\tpreal}{\ensuremath{\mathbbl{R}_+}}
\newcommand{\tunit}{\ensuremath{\mathbbl{1}}}
\newcommand{\tdist}[1]{\ensuremath{\m{dist}(#1)}}
\newcommand{\one}{\ensuremath{\pmb{1}}}
\newcommand{\ichoice}{\ensuremath{\mathbin{\varoplus}}}
\newcommand{\echoice}{\ensuremath{\mathbin{\binampersand}}}
\newcommand{\chtype}[2]{\ensuremath{(\id{#1}\!: #2)}}
\newcommand{\evalp}[1]{\ensuremath{\mathrel{\Downarrow^{#1}}}}
\newcommand{\evalto}{\ensuremath{\mathrel{\Downarrow}}}
\newcommand{\concat}{\ensuremath{\mathbin{\dplus}}}
\newcommand{\msgobjl}[1]{\ensuremath{\textbf{\textit{val}}^{\m{P}}(#1)}}
\newcommand{\msgobjr}[1]{\ensuremath{\textbf{\textit{val}}^{\m{C}}(#1)}}
\newcommand{\dirobjl}[1]{\ensuremath{\textbf{\textit{dir}}^{\m{P}}(#1)}}
\newcommand{\dirobjr}[1]{\ensuremath{\textbf{\textit{dir}}^{\m{C}}(#1)}}
\newcommand{\foldobj}{\ensuremath{\textbf{\textit{fold}}}}
\newcommand{\vclo}[2]{\ensuremath{\m{clo}(#1, #2)}}
\newcommand{\red}{\mathrel{\vdash_\m{red}}}
\begin{document}

\title{Sound Probabilistic Inference via Guide Types}
\iflong
\subtitle{Technical Report}
\fi


\author{Di Wang}
\affiliation{
  \institution{Carnegie Mellon University}
  \country{USA}
}

\author{Jan Hoffmann}
\affiliation{
  \institution{Carnegie Mellon University}
  \country{USA}
}

\author{Thomas Reps}
\affiliation{
  \institution{University of Wisconsin}
  \country{USA}
}


\begin{abstract}
Probabilistic programming languages aim to describe and automate
Bayesian modeling and inference.
Modern languages support \emph{programmable inference}, which allows
users to customize inference algorithms by incorporating \emph{guide}
programs to improve inference performance.
For Bayesian inference to be sound, guide programs must be compatible
with model programs.
%
%
One pervasive but challenging condition for model-guide compatibility
is \emph{absolute continuity}, which requires that the model and guide
programs define probability distributions with the same support.

This paper presents a new probabilistic programming language that
\emph{guarantees} absolute continuity, and features general programming
constructs, such as branching and recursion.
Model and guide programs are implemented as \emph{coroutines} that
communicate with each other to synchronize the set of random
variables they sample during their execution.
Novel \emph{guide types} describe and enforce communication protocols between
coroutines. If the model and guide are well-typed using the same protocol,
then they are guaranteed to enjoy absolute continuity.
An efficient algorithm infers guide types from code so that users do
not have to specify the types.
%
The new programming language is evaluated with an implementation that
includes the type-inference algorithm and a prototype compiler that
targets Pyro.
Experiments show that our language is capable of expressing a variety
of probabilistic models with nontrivial control flow and recursion,
and that the coroutine-based computation does not introduce
significant overhead in actual Bayesian inference.
\end{abstract}


\begin{CCSXML}
<ccs2012>
   <concept>
       <concept_id>10003752.10003753.10003757</concept_id>
       <concept_desc>Theory of computation~Probabilistic computation</concept_desc>
       <concept_significance>500</concept_significance>
       </concept>
   <concept>
       <concept_id>10003752.10010124.10010125.10010130</concept_id>
       <concept_desc>Theory of computation~Type structures</concept_desc>
       <concept_significance>500</concept_significance>
       </concept>
   <concept>
       <concept_id>10002950.10003648.10003662</concept_id>
       <concept_desc>Mathematics of computing~Probabilistic inference problems</concept_desc>
       <concept_significance>500</concept_significance>
       </concept>
 </ccs2012>
\end{CCSXML}

\ccsdesc[500]{Theory of computation~Probabilistic computation}
\ccsdesc[500]{Theory of computation~Type structures}
\ccsdesc[500]{Mathematics of computing~Probabilistic inference problems}

\keywords{Probabilistic programming, Bayesian inference, type systems, coroutines}  

\maketitle


%
%
%

\section{Introduction}
\label{Se:Introduction}

%
\emph{Probabilistic programming languages} (PPLs)~%
\cite{misc:dippl,JSS:CGH17,ICLR:THS17,MAPL:AAD19,JRSS:GTS97,DSC:Plummer03,UAI:GMR08,AISTATS:WMM14,HASKELL:SGG15}
provide a flexible way of describing statistical models and
automatically performing Bayesian inference: a method for inferring the
posterior of a statistical model from observed data.
Bayesian inference accounts for uncertainty in latent variables
that produce the observed data.
It has applications in many fields, including artificial
intelligence~\cite{NATURE:Ghahramani15}, cognitive science~\cite{kn:GKT08}, and
applied statistics~\cite{book:GCS13}.

%
Because there is not a single known inference algorithm
that works well for all models~\cite{PLDI:MSH18},
several PPLs have recently added support for \emph{programmable inference}%
~\cite{PLDI:MSH18,PLDI:CSL19,JMLR:BCJ18,AISTATS:GXG18,UAI:ZS17,JSS:Murray15}.
This capability allows users to customize inference algorithms based on the characteristics
of a particular model or dataset.
Researchers have shown that programmable inference enables improved inference performance on
a variety of modeling problems~\cite{PLDI:MSH18,PLDI:CSL19,JMLR:BCJ18,NIPS:FJB19}.

%
Two important families of inference algorithms can be customized by incorporating
\emph{guide} programs, which are implemented by the user.
The first family is \emph{Monte-Carlo} methods, such as importance sampling
and Markov-Chain Monte Carlo,
where a guide program serves as a \emph{proposal}, which generates random samples
for latent variables.
The second family is \emph{variational inference}, where a guide program is
a parameterized program that specifies a collection of \emph{approximating}
distributions on latent variables.

%
To ensure soundness of programmable inference, the guide programs have to
be \emph{compatible} with the implemented model program;
incompatible guide programs could crash the inference process or lead to
incorrect inference results~\cite{POPL:LCS20,POPL:LYR19}.
Recently, \citet{POPL:LYR19} developed a static analysis for finding bugs
in model-guide pairs for variational inference in Pyro~\cite{JMLR:BCJ18}.
\citet{POPL:LCS20} proposed a type system that proves model-guide compatibility
for multiple inference algorithms.
\begin{changebar}
However, neither approach handles general conditional statements
that can influence the set of latent variables sampled by the model, and it
is unclear how to extend them to analyze recursive programs precisely.
\end{changebar}

%
In this paper, we develop a new PPL that supports recursion and conditional statements,
as well as guarantees \emph{absolute continuity},
one of the most pervasive conditions for ensuring model-guide compatibility.
Our PPL uses a new paradigm for writing inference code: users implement the model
and guide programs as \emph{coroutines}, which can communicate with each other during their execution.
We develop a new type system, which we dub \emph{guide types}, to describe the
communication protocols between coroutines.
These guide types can be automatically inferred and are proof certificates
of absolute continuity for model-guide pairs.
They apply to multiple kinds of Bayesian-inference algorithms.

%
In our development, we follow a common scheme of \emph{trace-based} programmable inference
that underlies Pyro~\cite{JMLR:BCJ18}, Venture~\cite{PLDI:MSH18}, Gen~\cite{PLDI:CSL19}, etc.
These PPLs define the meaning of a probabilistic program by a probability distribution on
\emph{sample traces} that record all the random samples that the program draws during its
execution.
A program $p$ is \emph{absolutely continuous} with respect to a program $q$,
if any set of sample traces with non-zero probability under the program $p$ must also have non-zero
probability under the program $q$.
In this paper, we reduce the problem of checking absolute continuity to the following verification
task:

\emph{
  Given a model program $p$ and a guide program $q$, verify that they define probability
  distributions with the same \emph{support}, i.e., they have the same set of possible sample
  traces.
}

%
The major challenge in our development is to reason about the sets of possible sample
traces for the model and guide programs, when the two programs can diverge in their
execution, as always with relational reasoning. 
Control-flow constructs make it difficult to keep track of sample sites precisely;
for example, a conditional statement can sample different sets of random variables
in its two branches.
It is intractable to enumerate all possible execution paths in the two programs and compare
the sample sites path-to-path, especially when the programs are recursive.

%
The first part of our solution is to think of the model and guide programs as \emph{coroutines}
that can exchange messages.
Conceptually, we use coroutine-style communication to \emph{synchronize} each pair of sample sites
that represent the same random variable, as well as each branch selection that influences
control flow.
%
%
%
The communication between the two coroutines should then be conducted according to a protocol
so that messages always occur in \emph{guidance} pairs:
when one partner sends, the other receives; and
when one partner offers a selection, the other branches.


The second part of our solution is to develop \emph{guide types} as
guidance protocols between the model and guide coroutines.
%
%
In our formalization, we \emph{structure} the sequence of messages between two coroutines,
rather than describe it as a collection of unrelated messages.
%
%
To handle general recursion, we parameterize the guide type for each coroutine by a
\emph{continuation} type that describes the guidance protocol for the computation that
continues after a recursive invocation.
We also develop an efficient algorithm that infers guide types automatically from the code.

There have been several type systems for coroutines~\cite{book:PFPL16,TFP:AT10,APLAS:AT10},
but all of them require that all messages from a coroutine to another have the same type;
thus, they are not sufficient to handle \emph{sample passing} and \emph{branch selection}
in our coroutine-based paradigm.
In our development of guide types, we took inspiration from
type systems for communication protocols in concurrent systems, such as
\emph{session types}~\cite{CONCUR:Honda93,ESOP:HVK98}.
Guide types have different semantics from and are simpler than session types,
and use a parametrization technique to model recursive computation.

%
We then establish formal guarantees of our new PPL.
First, we prove that guide types ensure \emph{safety} of communication between coroutines,
i.e., the coroutines send and receive messages in a consistent manner.
Second, we prove that guide types serve as proof certificates of absolute
continuity between the model and guide programs; consequently, we use guide
types to justify soundness of importance sampling, Markov-Chain Monte Carlo,
and variational inference.
Note that for variational inference, the soundness guarantee is \emph{partial},
because sound inference requires some additional conditions (e.g., differentiability), whereas
this paper focuses just on absolute continuity.

%
We implemented a type-inference algorithm for guide types and a prototype
compiler from our PPL to Pyro.
We evaluated our PPL on a broad suite of probabilistic models, and our
experimental results show that
(i) our PPL is more expressive than a state-of-the-art PPL that ensures
soundness of programmable inference~\cite{POPL:LCS20}, and
(ii) type inference completes in several milliseconds, and the performance
of Bayesian inference on the compiled code is similar to handwritten Pyro
code, i.e., coroutine communication does not introduce significant overhead.

%
\paragraph{Contributions}
We make four main contributions.
\begin{itemize}
  \item
  We develop a new PPL with a coroutine-based paradigm for implementing model and guide
  programs.
  
  \item
  We propose guide types, which
  prescribe guidance protocols between the model and guide coroutines,
  and develop an efficient inference algorithm for guide types.
  
  \item
  We prove type safety of guide types, and show that guide types ensure key soundness
  conditions of model-guide pairs for multiple kinds of Bayesian-inference algorithms. 
  
  \item
  We implemented our PPL and evaluated its effectiveness on a variety of
  probabilistic models.
\end{itemize}


\section{Overview}
\label{Se:Overview}

In this section, we first review Bayesian inference and trace-based programmable inference
(\cref{Se:Overview:Inference}).
We then demonstrate the coroutine-based paradigm for implementing inference code
and the use of guide types to enforce guidance protocols between coroutines.
(\cref{Se:Overview:Coroutines}).

\subsection{Bayesian Inference}
\label{Se:Overview:Inference}

Probabilistic programs specify generative models that sample random variables.
The semantics of a probabilistic program can be defined as a probability distribution on the
\emph{sample traces} that record all the random values that a program draws during its
execution~\cite{ICFP:BLG16,JCSS:Kozen81}.
Consider the program $\id{Model}$ in \cref{Fig:ToyModel};
it specifies a probabilistic model on random variables introduced by commands
\klsample{d}{\ell}, where $\ell$ is a \emph{label} that identifies a sample
site in a program; and $d$ is a \emph{primitive distribution}, such as \cn{Gamma}
distributions whose support is the positive real line $\tpreal$, \cn{Normal} distributions
whose support is the real line $\treal$, and \cn{Beta} distributions whose support is
the unit interval $\tureal$.
Two possible sample traces in the program $\id{Model}$ are $[ @x=1; @z={-0.5} ]$ and $[ @x = 3; @y = 0.9; @z = 0.7 ]$.
More generally, the program specifies a distribution on sample traces whose support is
\begin{equation}\label{Eq:ToyModelSupport}
  \begin{split}
  & \{ [ @x = a; @z = c ] \mid 0 < a < 2 \} \\
  {}\cup{} & \{ [ @x = a; @y = b; @z = c ] \mid a \ge 2, 0 < b < 1 \}.
  \end{split}
\end{equation}

\begin{figure}
\centering
\begin{small}
\begin{pseudo}
  \kproc $\id{Model}$() = \\+
    $v$ \kgets \klsample{\egamma{2}{1}}{x}; \\
    \kif $v < 2$ \kthen \\+
      \_ \kgets \klsample{\enormal{-1}{1}}{z}; \\
      \kreturn{v} \\-
    \kelse \\+
      $m$ \kgets \klsample{\ebeta{3}{1}}{y}; \\
      \_ \kgets \klsample{\enormal{m}{1}}{z}; \\
      \kreturn{v}
\end{pseudo}
\end{small}
\caption{A program $\id{Model}$ with a conditional statement.}
\label{Fig:ToyModel}
\end{figure}

\begingroup
\setlength{\columnsep}{7pt}%
Bayesian Inference amounts to conditioning a probabilistic model on \emph{observations}
and computing a posterior distribution on \emph{latent variables}.
For the program $\id{Model}$, we consider that $@z$ is the single $\treal$-valued observation,
while both $@x$ and $@y$ are latent variables.
Intuitively, latent variables encode knowledge about the ``ground truth'' that we cannot
observe directly, and the model program specifies a \emph{prior} distribution on the
``ground truth.''
\begin{wrapfigure}{r}{3.5cm}
\centering
\includegraphics[width=3.5cm]{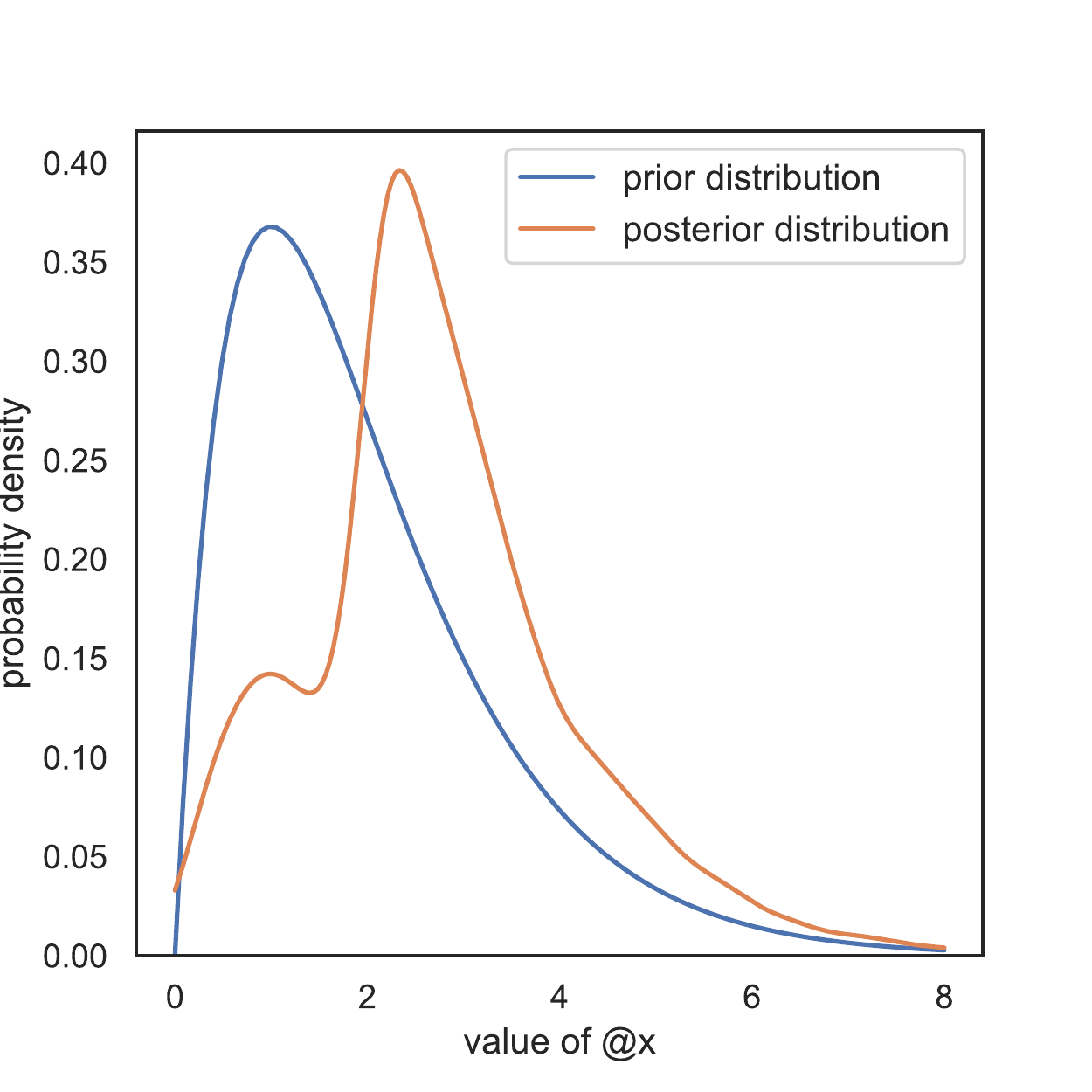}
\caption{\small Probability densities of the prior and posterior distribution of the random variable $@x$.}
\label{Fig:Curves}
\end{wrapfigure}
Given a concrete value of the observation (e.g., $@z = 0.8$), the objective of
Bayesian inference is to approximate the \emph{posterior} distribution of the latent variables
(e.g., likely values of $@x$ and $@y$ under the condition that $@z = 0.8$).
\cref{Fig:Curves} plots the prior distribution of the random variable $@x$,
and its posterior distribution under the observation $@z = 0.8$.

It is usually intractable to sample directly from or even derive posterior distributions.
There have been two popular families of inference algorithms:
\emph{Monte-Carlo methods} and \emph{variational inference}.
These inference algorithms usually require some \emph{guide} programs, which can have
a substantial influence on the performance of the inference.
Although many PPLs provide mechanisms for automatically generating those guide programs, the ability
to allow users to \emph{customize} them, has been shown to be helpful, and sometimes crucial, for effective
inference~\cite{PLDI:MSH18,PLDI:CSL19,JMLR:BCJ18,NIPS:FJB19}.
However, customizability introduces non-trivial challenges to ensuring \emph{soundness} of
Bayesian inference.
We now illustrate some mistakes when programming guide programs for Monte-Carlo methods and
for variational inference.

\endgroup

\paragraph{Monte-Carlo methods}
A Monte-Carlo method generates iteratively random samples
such that empirical distribution of the samples approximates the posterior distribution.
Two popular Monte-Carlo methods are \emph{importance sampling} (IS) and
\emph{Markov-Chain Monte Carlo} (MCMC).
IS generates independent and identically distributed samples from a \emph{proposal} distribution, and reweights
the samples by their \emph{importance}, which corrects the discrepancy
between the posterior and proposal distributions.
MCMC generates iteratively a new random sample from an old one; that is,
it constructs a Markov chain whose stationary distribution is the posterior
distribution.

We now illustrate a mistake when programming guide programs for IS. 
For IS to converge asymptotically to the posterior distribution, the posterior distribution must be
\emph{absolutely continuous} with respect to the proposal distribution, i.e., any set of samples
with non-zero probability under the posterior distribution must also have non-zero probability
under the proposal distribution.
In \cref{Se:Inference}, we will show that it suffices to verify if the model program \emph{conditioned}
with respect to a concrete observation and the guide program have the same set of possible
sample traces.
For example, for the program $\id{Model}$ shown in \cref{Fig:ToyModel}, the support of a sound guide
program could be
\begin{equation}\label{Eq:ToyProposalSupport}
  \begin{split}
    & \{ [ @x = a ] \mid 0 < a < 2 \} \\
    {}\cup{} & \{ [@x = a; @y = b] \mid a \ge 2, 0 < b < 1 \},
  \end{split}
\end{equation}
which is obtained by factoring out the observation $@z$ from the support of the unconditioned
model shown in \eqref{Eq:ToyModelSupport}.

\cref{Fig:ToyIS} presents two guide programs for performing IS from
the program $\id{Model}$ shown in \cref{Fig:ToyModel}, where the supports of the \cn{Pois} and
\cn{Unif} distributions are natural numbers $\tnat$ and the unit interval $\tureal$, respectively.
The support of the program $\id{Guide}_1$ is exactly the one shown in \eqref{Eq:ToyProposalSupport};
thus, $\id{Guide}_1$ is a sound guide program;
that is, $\id{Guide}_1$ samples the latent random variables $@x,@y$ from the same space as $\id{Model}$ does.
On the other hand, the support of the program $\id{Guide}_1'$ does not match \eqref{Eq:ToyProposalSupport},
and it is actually an \emph{unsound} guide program for two reasons:
\begin{itemize}
  \item
  In the model program, the latent variable $@x$ can be any positive real number,
  whereas the program $\id{Guide}_1'$ only samples natural numbers for $@x$.
  
  \item
  In the model program, when the value of $v$ (i.e., the latent variable $@x$) is greater than $2$,
  the other latent variable $@y$ should be present in the sample trace.
  However, when the value of $v$ is greater than $10$, the program $\id{Guide}_1'$ will not produce a sample
  for $@y$.
\end{itemize}

\begin{figure}
\begin{subfigure}{0.57\columnwidth}
\centering
\caption*{A \emph{sound} guide}
\begin{small}
\begin{pseudo}
  \kproc $\id{Guide}_1$() = \\+
    $v$ \kgets \klsample{\egamma{1}{1}}{x}; \\
    \kif $v < 2$ \kthen \\+
      \kreturn{} \\-
    \kelse \\+
      \_ \kgets \klsample{\eunif}{y}; \\
      \kreturn{}
\end{pseudo}
\end{small}
\end{subfigure}
\begin{subfigure}{0.42\columnwidth}
\centering
\caption*{An \emph{unsound} guide}
\begin{small}
\begin{pseudo}
  \kproc $\id{Guide}_1'$() = \\+
    $v$ \kgets \klsample{\epois{4}}{x}; \\
    \kif $v > 10$ \kthen \\+
      \kreturn{} \\-
    \kelse \\+
      \_ \kgets \klsample{\eunif}{y}; \\
      \kreturn{}
\end{pseudo}
\end{small}
\end{subfigure}
\caption{Sound and unsound guide programs for IS.}
\label{Fig:ToyIS}
\end{figure}

\paragraph{Variational inference (VI)}
In contrast to Monte-Carlo methods, VI uses \emph{optimization} (e.g., stochastic gradient descent)
to find a candidate from an approximating family of distributions that minimizes the distance
between the posterior distribution and the approximating distributions.
In PPLs such as Pyro, users specify the approximating family by a \emph{parameterized} probabilistic
program called a \emph{guide}; instantiating the parameters with a concrete valuation that produces a member
of the approximating family.
A widely used distance is the Kullback-Leibler (KL) divergence from the posterior
distribution to the guide distribution.
For the KL divergence to be well-defined, the guide distribution must be \emph{absolutely continuous}
with respect to the posterior distribution.
In \cref{Se:Inference}, we again reduce the verification of absolute continuity to checking a
sufficient condition, namely, that the model conditioned with respect to a concrete observation
and the guide have the same support.
Note that VI requires several more conditions (such as differentiability) for
inference to be sound~\cite{POPL:LYR19}.
In this paper, we focus on verification of absolute continuity.

\cref{Fig:ToyVI} presents two guide programs for performing VI on the program $\id{Model}$ shown
in \cref{Fig:ToyModel}.
The real-valued parameters of the guide programs are $\theta_1,\ldots,\theta_4$.
The support of the program $\id{Guide}_2$ (instantiated with concrete parameters) is exactly the one
shown in \cref{Eq:ToyProposalSupport}.
On the other hand, the program $\id{Guide}_2'$ defines an \emph{unsound} guide, because it samples $@x$
from a normal distribution, whose support is the whole real line, whereas the program $\id{Model}$
always samples a positive value for $@x$.

\begin{figure}
\begin{subfigure}{0.49\columnwidth}
\centering
\caption*{A \emph{sound} guide}
\begin{small}
\begin{pseudo}
  \kproc $\id{Guide}_2$($\theta_1$,$\theta_2$,$\theta_3$,$\theta_4$) = \\+
    $v$ \kgets $\kw{sample}(@x,$ \\+++++++
       $\egamma{\theta_1}{\theta_2})$; \\-------
    \kif $v < 2$ \kthen \\+
      \kreturn{} \\-
    \kelse \\+
      \_ \kgets $\kw{sample}(@y,$ \\+++++++
        $\ebeta{\theta_3}{\theta_4})$; \\-------
      \kreturn{}
\end{pseudo}
\end{small}
\end{subfigure}
\begin{subfigure}{0.49\columnwidth}
\centering
\caption*{An \emph{unsound} guide}
\begin{small}
\begin{pseudo}
  \kproc $\id{Guide}_2'$($\theta_1$,$\theta_2$) = \\+
    $v$ \kgets $\kw{sample}(@x,$ \\+++++++
      $\enormal{\theta_1}{\theta_2})$; \\-------
    \kif $v < 2$ \kthen \\+
      \kreturn{} \\-
    \kelse \\+
      \_ \kgets $\kw{sample}(@y,$ \\+++++++
        $\eunif)$; \\-------
      \kreturn{}
\end{pseudo}
\end{small}
\end{subfigure}
\caption{Sound and unsound guide programs for VI.}
\label{Fig:ToyVI}
\end{figure}

\subsection{Sound Bayesian Inference via Guide Types}
\label{Se:Overview:Coroutines}

\paragraph{Programs as coroutines}
Our first contribution is a \emph{coroutine}-based paradigm for implementing
the model and guide programs for Bayesian inference.
In an inference algorithm, the model program and its guide program have many connections.
The two most significant patterns we can observe in common inference algorithms are as follows:
\begin{itemize}
  \item
  The guide program is used to generate sample traces, and then the model program
  is simulated with these traces to compute likelihoods.
  
  \item
  The guide program needs to have similar control-flow structure to that of the model program.
  For example, if the model program has a conditional command whose two branches sample different
  sets of latent variables, the guide program should also have a conditional command with an
  equivalent branch condition.
\end{itemize}
The first pattern illustrates a form of \emph{sample passing} from the guide program to
the model program, and the second pattern indicates that the model program should provide
\emph{branch selection} to the guide program.
Such \emph{bidirectional guidance} inspired us to treat the model and guide programs as
\emph{coroutines} that communicate with each other during their execution, rather than as
totally independent programs.
On the other hand, we do \emph{not} want the coroutines to be tightly coupled: Bayesian practitioners
usually maintain a separation between the model and the guide so that they can refine the guide
iteratively to improve inference performance.

Therefore, we use \emph{message-passing communication} to implement the coroutines;
this formalism allows us to separate the model and the guide as individual programs,
but connect them via \emph{channels} over which coroutines exchange messages.
\cref{Fig:ProgAsCoro} reimplements the model and guide programs in \cref{Fig:ToyModel} and
\cref{Fig:ToyIS}, respectively, by making the guidance communication explicit.
The \ksample{\cdot} commands and conditional commands are annotated with $\m{rv}$ (i.e., ``receive'')
or $\m{sd}$ (i.e., ``send'') to indicate the direction of communication, and associated with
a name of the channel on which the communication is carried out.
In this example, we use two channels:
$\id{latent}$ for communication between the guide and the model,
and $\id{obs}$ for identifying observations in the model.
Every channel has a unique provider and a unique consumer.
Note that in this way we do \emph{not} need to use labeled samples---as Pyro and some other PPLs
do---because the sampling sites are synchronized through guidance communication.

Operationally, when a coroutine is executing a command associated with a channel $c$, it resumes
the other coroutine that accesses channel $c$, until the other coroutine encounters a command
that also communicates on channel $c$.
Then they perform \emph{synchronization}; for example,
\begin{itemize}
  \item
  When $\id{Model}$ executes $\ksamplei{\egamma{2}{1}}{latent}$,
  it resumes the other end of the \id{latent} channel, i.e., the coroutine $\id{Guide}_1$,
  until $\id{Guide}_1$ reaches the command $\ksampleo{\egamma{1}{1}}{latent}$.
  Recall that the guide program is used in importance sampling;
  thus, the coroutine $\id{Guide}_1$ draws a sample from the distribution $\egamma{1}{1}$,
  and then sends it to the coroutine $\id{Model}$, which uses the sample and the prior
  distribution $\egamma{2}{1}$ to calculate the importance weight.

  \item
  When $\id{Guide}_1$ executes the conditional command on line 3 (where the $\star$ symbol
  indicates that the branch selection is received from the other coroutine),
  it resumes the other end of the \id{latent} channel, i.e., the coroutine $\id{Model}$,
  until $\id{Model}$ reaches the conditional command on line 3.
  The coroutine $\id{Model}$ is the sender of the branch selection;
  thus, it evaluates the branch predicate $v<2$, and sends the result back to $\id{Guide}_1$.  
\end{itemize}
When the synchronization is completed, either coroutine can continue to execute.

\begin{figure}
\centering
\begin{subfigure}{\columnwidth}
\centering
\caption*{Model}
\begin{small}
\begin{pseudo}
  \kproc $\id{Model}$() \kw{consume} \id{latent} \kw{provide} \id{obs} = \\+
    $v$ \kgets \ksamplei{\egamma{2}{1}}{latent}; \\
    \kifo{latent} $v < 2$ \kthen \\+
      \_ \kgets \ksampleo{\enormal{-1}{1}}{obs}; \\
      \kreturn{v} \\-
    \kelse \\+
      $m$ \kgets \ksamplei{\ebeta{3}{1}}{latent}; \\
      \_ \kgets \ksampleo{\enormal{m}{1}}{obs}; \\
      \kreturn{v}
\end{pseudo}  
\end{small}
\end{subfigure}
\begin{subfigure}{\columnwidth}
\centering
\caption*{Guide}
\begin{small}
\begin{pseudo}
  \kproc $\id{Guide}_1$() \kw{consume} . \kw{provide} \id{latent} = \\+
    $v$ \kgets \ksampleo{\egamma{1}{1}}{latent}; \\
    \kifi{latent} $\star$ \kthen \\+
      \kreturn{} \\-
    \kelse \\+
      $\_$ \kgets \ksampleo{\eunif}{latent}; \\
      \kreturn{}
\end{pseudo}  
\end{small}
\end{subfigure}
\caption{Probabilistic programs as coroutines.}
\label{Fig:ProgAsCoro}  
\end{figure}

\paragraph{Guide types}
Our second contribution is \emph{guide types} that enforce guidance protocols between coroutines,
and an efficient algorithm that infers guide types from code.

We take inspiration from type systems for communication protocols in concurrent systems, such
as \emph{session types}~\cite{CONCUR:Honda93,ESOP:HVK98}.
The key idea is to \emph{structure} the sequence of guidance messages on a channel, rather than
describe it as a collection of unrelated messages.

We sketch some type constructors in our development of guide types.
The type $\one$ types an ended channel, where no messages can be exchanged.
The type $A \echoice B$ types a channel whose provider waits for a branch selection,
and continues with a protocol of type $A$ or a protocol of type $B$ based on the received selection.
The type $\tau \wedge A$ types a channel whose provider samples and sends a random value
of type $\tau$, and then continues with a type $A$ protocol.
The guide type for a channel is the same for the provider and the consumer of the channel,
but the two ends of a channel interpret the guide type for the channel \emph{dually}
(e.g., sends as receives).

With these three type constructors, we can express the protocols for the \id{latent} and \id{obs}
channels shown in \cref{Fig:ProgAsCoro} as
\begin{align}
  \id{latent} & \!: \tpreal \wedge (\one \echoice (\tureal \wedge \one)), \label{Eq:LatentType} \\
  \id{obs} & \!: \treal \wedge \one. \label{Eq:ObsType}
\end{align}

The provider and the consumer of the channel \id{latent} are the coroutines $\id{Guide}_1$ and
$\id{Model}$, respectively.
From the provider $\id{Guide}_1$'s perspective, the protocol shown as type \eqref{Eq:LatentType}
guides $\id{Guide}_1$ to draw a $\tpreal$-valued sample and send it on \id{latent}, then wait for
a branch selection, and finally end the communication on \id{latent} if the received branch
selection is then-branch, otherwise draw an $\tureal$-valued sample before ending the communication.
The coroutine $\id{Guide}_1$ implements this guidance protocol exactly.
Meanwhile, from the consumer $\id{Model}$'s perspective, the type constructors have
\emph{dual} semantics, i.e.,
\textsl{send} becomes \textsl{receive} and vice versa;
thus, the protocol for \id{latent} guides $\id{Model}$ to receive an $\tpreal$-valued sample, and then send out a
branch selection on channel \id{latent}; if $\id{Model}$ selects the else-branch,
then it further receives an $\tureal$-valued sample on channel \id{latent}.

The channel \id{obs}, whose provider is the coroutine $\id{Model}$, is used to identify observations in the
probabilistic model.
The coroutine $\id{Model}$ accesses \id{obs} on lines 4 and 8, each of which lies in a branch of the conditional
command on line 3.
Because the conditional command is associated with \id{latent}, it should \emph{not} bother with the
communication on channel \id{obs};
thus, we require that the two branches of the conditional command have the same guidance protocol for
\id{obs}.
The protocol shown as type \eqref{Eq:ObsType} specifies that the coroutine $\id{Model}$ produces a
single $\treal$-valued observation, and $\id{Model}$ implements this protocol exactly.

\paragraph{Recursion}
Probabilistic programs can use recursion to express complex generative
models, such as a \emph{probabilistic context-free grammar} (PCFG), which is a popular model for
constructing languages~\cite{SRU:JLM92}.
\cref{Fig:RecursiveModel} shows a recursive model that generates a random expression tree with two
constructors: $\m{Const}(\cdot)$ for leaf nodes and $\m{Add}(\cdot;\cdot)$ for internal nodes.

\begin{figure}
\centering
\begin{small}
\begin{pseudo}
  \kproc \id{Pcfg}() \kw{consume} \id{latent} \kw{provide} . = \\+
    $k$ \kgets \ksamplei{\ebeta{3}{1}}{latent}; \\
    \kcall \id{PcfgGen}($k$) \\-
  \\
  \kproc \id{PcfgGen}($k$) \kw{consume} \id{latent} \kw{provide} . = \\+
    $u$ \kgets \ksamplei{\eunif}{latent}; \\
    \kifo{latent} $u < k$ \kthen \\+
      $v$ \kgets \ksamplei{\enormal{0}{1}}{latent}; \\
      \kreturn{\m{Const}(v)} \\-
    \kelse \\+
      $\id{lhs}$ \kgets \kcall \id{PcfgGen}($k$); \\
      $\id{rhs}$ \kgets \kcall \id{PcfgGen}($k$); \\
      \kreturn{\m{Add}(\id{lhs}; \id{rhs})}
\end{pseudo}
\end{small}
\caption{A recursive probabilistic model.}
\label{Fig:RecursiveModel}
\end{figure}

To support recursion in probabilistic programs, we add a standard recursive-type
constructor to guide types.
However, \emph{composition} of the guide types from multiple procedure calls in a
non-tail-recursive program remains a challenge.
One straightforward approach is to add a sequencing type $A \fatsemi B$ that
types a channel whose provider starts with a type $A$ protocol and  then continues
with a type $B$ protocol, but such sequencing types will complicate the type
system, because they allow a guidance protocol to be described by \emph{different}
types.
For example, both $(\treal \wedge \treal \wedge \one)$ and
$(( \treal \wedge \one) \fatsemi (\treal \wedge \one) )$
describe a channel whose provider sends two $\treal$-valued random samples.

To sidestep the need for a nontrivial equivalence check in the type system, we adapt the idea of
\emph{type-level polymorphism}, and parameterize the guide type for a recursive
coroutine by a \emph{continuation} type that describes the communication after
a procedure call to this coroutine returns.
For example, consider the following parametric type $\m{R}[\cdot]$.
\[
\m{R}[X] \defeq \tureal \wedge ((\treal \wedge X) \echoice \m{R}[\m{R}[X]]),
\]
It specifies a guidance protocol by \emph{prepending} messages to the
continuation protocol defined by the type parameter $X$.
The type $\m{R}[X]$ precisely describes the behavior of the $\id{PcfgGen}$ coroutine shown in
\cref{Fig:RecursiveModel}:
the coroutine first receives an $\tureal$-valued random sample (line 6);
evaluates and sends out a branch selection (line 7); and then based on the branch
selection, the coroutine either receives an $\treal$-valued random sample (line 8)
and then returns (i.e., continues with the continuation protocol $X$), or
makes two recursive procedure calls (lines 11 and 12).
The guide type of the else-branch can be justified by backward reasoning:
at line 13, the coroutine returns (i.e., continues with the continuation protocol $X$);
at line 12, because the guide type \emph{after} the procedure call is $X$,
we obtain the guide type \emph{before} the procedure call by instantiating $\m{R}$ with $X$;
and at line 11, because the guide type \emph{after} the procedure call is $\m{R}[X]$,
we again instantiate $\m{R}$, but with $\m{R}[X]$, to derive the guide type of the else-branch.
Finally, for the coroutine $\id{Pcfg}$ shown in \cref{Fig:RecursiveModel},
we derive $\tureal \wedge \m{R}[\one]$ as the guidance protocol for channel $\id{latent}$.

\paragraph{Control-flow divergence.}
In \cref{Fig:ProgAsCoro}, the model program $\id{Model}$ and the guide program $\id{Guide}_1$
have very similar control flow.
In general, our type system permits the guide's control-flow structure to diverge from the model's,
as long as the two programs communicate with each other in a consistent way, i.e., the two
programs follow the same guidance protocol for the channel over which they communicate.
For example, the program below implements a part of a Bayesian linear-regression model with outliers~\cite{PLDI:CSL19},
where the latent variable $\id{prob\_outlier}$ describes how likely a data point does not
conform to the linear relationship, and $\id{is\_outlier}$ is a Boolean-valued latent variable that
indicates if a data point is an outlier.
\begin{small}
\begin{pseudo}
  $\id{prob\_outlier}$ \kgets \ksamplei{\eunif}{latent}; \\
  $\id{is\_outlier}$ \kgets \ksamplei{\eber{\id{prob\_outlier}}}{latent}; \\
  \kreturn{}
\end{pseudo}  
\end{small}
For MCMC algorithms, the guide program generates a new random sample
from an old one; thus,
for better inference performance,
an MCMC guide usually behaves differently for
different old samples.
The following program implements a part of a guide that branches on $\id{is\_outlier}$
from the old sample~\cite{PLDI:CSL19}.
Intuitively, this guide proposes the negation (with a small amount of noise) of the old $\id{is\_outlier}$,
which is bound to a program variable $\id{old\_is\_outlier}$;
i.e., if the old $\id{is\_outlier}$ is true (resp., false), then the guide is likely to propose false (resp., true).
\begin{small}
\begin{pseudo}
  $\id{prob\_outlier}$ \kgets \ksampleo{\ebeta{2}{5}}{latent}; \\
  \kif $\id{old\_is\_outlier}$ \kthen \\+
    $\id{is\_outlier}$ \kgets \ksampleo{\eber{0.1}}{latent}; \\
    \kreturn{} \\-
  \kelse \\+
    $\id{is\_outlier}$ \kgets \ksampleo{\eber{0.9}}{latent}; \\
    \kreturn{}
\end{pseudo}  
\end{small}
Although the model and the guide have divergent control-flow structures, 
in our type system, we can express the guidance protocol for channel $\id{latent}$ as
$\tureal \wedge \tbool \wedge \one$; that is, both programs sample an
$\tureal$-valued random variable and then sample a Boolean-valued one.

\paragraph{Type inference}
Guide types can be automatically inferred from code; in practice, they can still be used
as specifications of the programs for better understanding.
Our implementation can infer guide types for the examples mentioned so far, including
the recursive one shown in \cref{Fig:RecursiveModel}.


\section{A Coroutine-Based PPL}
\label{Se:Technical}

In this section, we formulate a core monadic calculus for coroutine-based probabilistic programming.


\paragraph{Syntax}
\cref{Fig:Syntax} presents the grammar of basic types $\tau$, expressions $e$, values $v$,
commands $m$, and programs $\calD$ in the core calculus via abstract binding trees~\cite{book:PFPL16}.
There is a modal distinction in the core language:
expressions describe \emph{purely deterministic} computations, while
commands describe \emph{probabilistic} computations.
Intuitively, we treat randomness as a kind of monadic effect~\cite{LICS:Moggi89}.

\begin{figure}
\centering
\begin{small}
\begin{align*}
  \tau & \Coloneqq \tunit \mid \tbool \mid \tureal \mid \tpreal \mid \treal \mid \tnat_n \mid \tnat \mid \tau_1 \to \tau_2 \mid \tdist{\tau}   \\
  e & \Coloneqq x \mid \etriv \mid \etrue \mid \efalse \mid \econd{e}{e_1}{e_2} \mid \bar{r} \mid \bar{n} \mid \ebinop{\Diamond}{e_1}{e_2} \\
  & \mid \eabs{x}{\tau}{e} \mid \eapp{e_1}{e_2} \mid \elet{e_1}{x}{e_2} \\
  & \mid \eber{e} \mid \eunif \mid \ebeta{e_1}{e_2} \mid \egamma{e_1}{e_2} \\
  & \mid \enormal{e_1}{e_2} \mid \ecat{e_1,\cdots,e_n} \mid \egeo{e} \mid \epois{e} \\
  v & \Coloneqq \etriv \mid \etrue \mid \efalse \mid \bar{r} \mid \bar{n} \mid \vclo{V}{\eabs{x}{\tau}{e}} \\
  & \mid \eber{v} \mid \eunif \mid \ebeta{v_1}{v_2} \mid \egamma{v_1}{v_2} \\
  & \mid \enormal{v_1}{v_2} \mid \ecat{v_1,\cdots,v_n} \mid \egeo{v} \mid \epois{v} \\
  m & \Coloneqq \mret{e} \mid \mbnd{m_1}{x}{m_2} \mid \mcall{f}{e} \\
  & \mid \msamplei{e}{a} \mid \msampleo{e}{a} \\
  & \mid \mbranchi{m_1}{m_2}{a} \mid \mbrancho{e}{m_1}{m_2}{a} \\
  \calD & \Coloneqq \many{\fundec{f}{\tau_1}{\tau_2}{x}{m}{a}{b}}
\end{align*}
\end{small}
\caption{Syntax of the core calculus.}
\label{Fig:Syntax}  
\end{figure}

The purely deterministic fragment is a simply-typed lambda calculus augmented with
\emph{scalar} types (i.e., nullary products $\tunit$, Booleans $\tbool$, unit interval $\tureal$,
positive real numbers $\tpreal$, real numbers $\treal$, integer rings $\tnat_n$,
and natural numbers $\tnat$),
as well as a \emph{distribution} type $\tdist{\tau}$.
The syntactic form $\ebinop{\Diamond}{e_1}{e_2}$ represents expressions that perform
built-in binary operations $\Diamond$ on scalar values.
Inhabitants of $\tdist{\tau}$ are the primitive distributions from which probabilistic
programs can draw a random value of type $\tau$;
for example, Bernoulli distributions $\eber{\cdot}$ have type $\tdist{\tbool}$,
the uniform distribution on unit interval $\eunif$ has type $\tdist{\tureal}$,
and geometric distributions $\egeo{\cdot}$ have type $\tdist{\tnat}$.
For each primitive distribution $d$, we assume that it admits two fields:
$d.\mathrm{support}$ and $d.\mathrm{density}$ are the support and the density function
of the distribution, respectively.
In the core calculus, the type of a primitive distribution characterizes the support of the
distribution \emph{precisely}: for a distribution $d$ of type $\tdist{\tau}$ and a value $v$,
it holds that $v \in d.\mathrm{support}$ if and only if $v$ is an inhabitant of type $\tau$.
Primitive distributions can be generalized to density-carrying expressions~\cite{POPL:BAV12,TACAS:BBG13}
to further improve language expressibility.

The probabilistic fragment is a monadic calculus augmented with probabilistic constructs
and communication primitives for coroutine-based programming.
The \emph{sampling} commands $\msampleo{e}{a}$ and $\msamplei{e}{a}$ first evaluate
the expression $e$ to a primitive distribution $d$.
Then the \textsl{send} version $\msampleo{d}{a}$ draws a value from $d$ and sends it on
channel $a$, whereas the \textsl{receive} version $\msamplei{d}{a}$ receives a value from
channel $a$ and treats it as a sample from $d$.
The random samples can influence the \emph{likelihoods} of computations; thus, randomness
can be seen as a source of side effects.
The \emph{branching} commands also have a \textsl{send} version $\mbrancho{e}{m_1}{m_2}{a}$,
which evaluates $e$ to a Boolean value and sends it as the branch selection on channel $a$;
and a \textsl{receive} version $\mbranchi{m_1}{m_2}{a}$, which receives a branch selection from
channel $a$.
The syntactic form $\mcall{f}{e}$ represents a procedure call, where $f$ is a procedure
name and $e$ is the argument.

A probabilistic program $\calD$ is a collection of (mutually recursive) procedures,
each of which has the form $\fundec{f}{\tau_1}{\tau_2}{x}{m}{a}{b}$, where $f$ is the procedure name,
$x$ is the parameter, $m$ is a command that represents the procedure body,
$a$ is the name of the channel consumed by $f$, and $b$ is the name of the channel provided by $f$.
Note that both $a$ and $b$ are optional; that is, the procedure $f$ might not consume any channel,
and it might not provide any channel.

\paragraph{Semantics}
We develop a big-step operational semantics for the core calculus.
Details of the semantics are included in
\iflong
\cref{Se:FullSpec}.
\else
the technical report~\cite{Techreport}.
\fi
The evaluation judgments for expressions have the form $V \vdash e \evalto v$, where $V$ is an
\emph{environment} that maps program variables to values.
The evaluation rules for expressions are skipped here because they are standard.

We adopt a trace-based approach~\cite{ICFP:BLG16,JCSS:Kozen81} in our semantics of
probabilistic computations.
A \emph{guidance trace} $\sigma$ is a finite sequence of guidance messages exchanged on a channel;
each \emph{guidance message} has the form
$\msgobjl{v}$ (resp., $\dirobjl{v}$) for a sample value $v$ (resp., a branch selection $v$)
from the \underline{p}rovider to the consumer, the form
$\msgobjr{v}$ (resp., $\dirobjr{v}$) for a sample value $v$ (resp., a branch selection $v$)
from the \underline{c}onsumer to the provider,
or a procedure-call indicator $\foldobj$.\footnote{
The $\foldobj$ message is only useful in the theoretical development;
it can be seen as the introduction form for guidance traces whose type is a type-operator instantiation (see \cref{Se:GuideTypes}).
}
The evaluation judgments for commands have the form $V \mid \chtype{a}{\sigma_a} ; \chtype{b}{\sigma_b}
\vdash m \evalp{w} v$, where $V$ is an environment,
$m$ is a command that consumes channel $a$ and provides channel $b$,
$\sigma_a$ and $\sigma_b$ are guidance traces on the channels,
$v$ is the evaluation result,
and $w \ge 0$ is a \emph{weight} that expresses how likely the guidance traces are.
Intuitively, a probabilistic program specifies a probability distribution on guidance traces,
and the weights represent \emph{probability densities} with respect to the distribution. 

\begin{figure}
\centering 
\begin{mathpar}\small
  \Rule{EM:Ret}
  { V \vdash e \evalto v
  }
  { V \mid \chtype{a}{[]} ; \chtype{b}{[]} \vdash \mret{e} \evalp{1} v }
  \and
  \Rule{EM:Bnd}
  { V  \mid \chtype{a}{\sigma_a} ; \chtype{b}{\sigma_b} \vdash m_1 \evalp{w_1} v_1 \\
    V[x \mapsto v_1] \mid \chtype{a}{\sigma_a'} ; \chtype{b}{\sigma_b'} \vdash m_2 \evalp{w_2} v_2
  }
  { V \mid \chtype{a}{\sigma_a \concat \sigma_a'} ; \chtype{b}{\sigma_b \concat \sigma_b'} \vdash \mbnd{m_1}{x}{m_2} \evalp{w_1 \cdot w_2} v_2 }
  \and
  \Rule{EM:Sample:Recv:L}
  { V \vdash e \evalto d \\
    v \in d.\mathrm{support} \\
    w = d.\mathrm{density}(v)
  }
  { V \mid \chtype{a}{[\msgobjl{v}]} ; \chtype{b}{[]} \vdash \msamplei{e}{a} \evalp{w} v }
  \and
  \Rule{EM:Sample:Send:R}
  { V \vdash e \evalto d \\
    v \in d.\mathrm{support} \\
    w = d.\mathrm{density}(v)
  }
  { V \mid \chtype{a}{[]} ; \chtype{b}{[\msgobjl{v}]} \vdash \msampleo{e}{b} \evalp{w} v }
  \and
  \Rule{EM:Cond:Send:L}
  { V \vdash e \evalto v_e \\
    i = \m{ite}(v_a, 1, 2) \\
    V \mid \chtype{a}{\sigma_a} ; \chtype{b}{\sigma_b} \vdash m_i \evalp{w} v
  }
  { V \mid \chtype{a}{[\dirobjr{v_a}] \concat \sigma_a} ; \chtype{b}{\sigma_b} \vdash \mbrancho{e}{m_1}{m_2}{a} \evalp{w \cdot [v_a = v_e]} v }
  \and
  \Rule{EM:Cond:Recv:R}
  { i = \m{ite}(v_b, 1, 2) \\
    V \mid \chtype{a}{\sigma_a} ; \chtype{b}{\sigma_b} \vdash m_i \evalp{w} v
  }
  { V \mid \chtype{a}{\sigma_a} ; \chtype{b}{[\dirobjr{v_b}] \concat \sigma_b} \vdash \mbranchi{m_1}{m_2}{b} \evalp{w} v }
  \and
  \Rule{EM:Call}
  { \calD(f) = \fundec{f}{\tau_1}{\tau_2}{x_f}{m_f}{a}{b} \\
    V \vdash e \evalto v_1 \\
    \emptyset[x_f \mapsto v_1] \mid \chtype{a}{\sigma_a} ; \chtype{b}{\sigma_b} \vdash m_f \evalp{w} v_2
  }
  { V \mid \chtype{a}{[\foldobj] \concat \sigma_a} ; \chtype{b}{[\foldobj] \concat \sigma_b} \vdash \mcall{f}{e} \evalp{w} v_2 }  
\end{mathpar}
\caption{Selected evaluation rules for commands.}
\label{Fig:EvalElab}
\end{figure}

\cref{Fig:EvalElab} shows the evaluation rules for selected commands.
We use the following notational conventions.
We denote the empty environment by $\emptyset$, and
updating a binding of $x$ in an environment $V$ to $v$ by $V[x \mapsto v]$.
We use the $\dplus$ operator to concatenate two traces.
We write $\m{ite}$ as a shorthand for $\m{if{\text{-}}then{\text{-}}else}$.
The \emph{Iverson brackets} $[\cdot]$ are defined by $[\varphi]=1$ if $\varphi$ is true
and otherwise $[\varphi]=0$.

The (\textsc{EM:Sample:*}) rules take a value from the guidance traces as the result of the sampling,
and use the density functions of primitive distributions to calculate the weight for the guidance traces.
The (\textsc{EM:Cond:Send:L}) rule evaluates the branch predicate to obtain a Boolean value,
and enforce that the branch selection from the guidance trace of the consumed channel must be
the same as the predicate's value;
if the guidance trace sets the branch selection to a different value, we simply set the weight
of this trace to zero.
The (\textsc{EM:Call}) rule requires the guidance traces start with a $\foldobj$ message, and proceeds
by evaluating the body of the callee.

\begin{example}\label{Ex:ExampleCommand}
  Consider the command
  \begin{align*}
    m_1 \defeq{} & \enskip \m{bnd}( ~ \msamplei{\enormal{0}{1}}{a}; ~ x. \\[-3pt]
    & \enskip \m{bnd}( ~ \msampleo{\enormal{x}{1}}{b}; ~ y. \\[-3pt]
    & \enskip \mret{\ebinop{+}{x}{y}} \enskip ) \enskip ),
  \end{align*}
  which consumes a channel $a$ and provides a channel $b$.
  Let $\varphi \defeq \lambda x. \frac{1}{\sqrt{2\pi}} e^{-\frac{1}{2} x^2}$ be the probability
  density function of the standard normal distribution $\enormal{0}{1}$.
  Let $\sigma_a \defeq [ \msgobjl{\bar{1}} ]$ and $\sigma_b \defeq [ \msgobjl{\bar{2}} ]$.
  Then we can derive the evaluation judgment 
  \[
  \emptyset \mid \chtype{a}{\sigma_a} ; \chtype{b}{\sigma_b} \vdash m_1 \evalp{\varphi(1) \cdot \varphi(1)} \bar{3},
  \]
  for the command $m_1$ and the guidance traces $\sigma_a,\sigma_b$.
\end{example}

\paragraph{Communication}
There are a lot of formalisms for communication in (concurrent) programming systems,
such as CCS~\cite{book:Milner89}, Theoretical CSP~\cite{CACM:Hoare78},
and $\pi$-calculus~\cite{JIC:MPW92a,JIC:MPW92b}.
In this paper, we use a lightweight approach to handling communication;
that is, in the semantics, we assume we have all the messages exchanged on all the
communication channels.
We use this formalism because
(i) our focus is to reason about soundness of Bayesian inference, rather than concurrency-related
properties (e.g., deadlock freedom); and
(ii) the inference algorithms we study in \cref{Se:Inference} involve only two
coroutines---one for the model and the other for the guide---so the communication
in our system is much simpler than that in general concurrent systems.

\begin{example}\label{Ex:AnotherExampleCommand}
  Consider the command
  \[
    m_2 \defeq{}  \enskip \m{bnd}( \enskip \msampleo{\enormal{3}{1}}{a}; \enskip \_. \enskip \mret{\etriv} \enskip ),
  \]
  which provides a channel $a$ that is consumed by the command $m_1$ in \cref{Ex:ExampleCommand}.
  To model the communication between $m_2$ and $m_1$, we simply use the guidance trace
  $\sigma_a = [\msgobjl{\bar{1}}]$ as the sequence of messages exchanged on
  channel $a$ in the semantics, and derive evaluation judgments for $m_2$ and $m_1$
  separately.
  We showed the judgment for $m_1$ in \cref{Ex:ExampleCommand}; here, we
  can derive the judgment
  \[
  \emptyset \mid \varnothing ; \chtype{a}{\sigma_a} \vdash m' \evalp{\varphi(-2)} \etriv,
  \]
  for command $m_2$ and guidance trace $\sigma_a$.
  We use the $\varnothing$ symbol to indicate that $m_2$ does not consume
  any channel.
\end{example}

\section{Guide Types}
\label{Se:GuideTypes}

\paragraph{Type formation}
We take inspiration from a \emph{structuring} principle in session types~\cite{CONCUR:Honda93,ESOP:HVK98},
and develop \emph{guide types} to enforce protocols for guidance traces.
The grammar shown below formulates the syntax of guide types.
We write $A,B$ for guide types, $X$ for type variables, $T$ for unary type operators, and
$F$ for procedure signatures.
{\begin{align*}
  A,B & \Coloneqq X \mid \one \mid T[A] \mid \tau \wedge A \mid \tau \supset A \mid A \ichoice B \mid A \echoice B \\
  F & \Coloneqq \tau_1 \leadsto \tau_2 \mid \chtype{a}{T_a} ; \chtype{b}{T_b} \\
  \calT & \Coloneqq \many{\m{typedef}(T.X. A)}
\end{align*}}
%
%
The type $\one$ indicates an ended channel, where the guidance trace is empty.
The type $T[A]$ instantiates a unary type operator $T$ with a guide type $A$.
For sample passing and branch selection, each type constructor has a dual version that reverses
the role of the provider and the consumer.
The type $\tau \wedge A$ types a channel whose \emph{provider} samples a random value
, sends it on the channel, and then continues with a type $A$ guidance protocol;
dually, the type $\tau \supset A$ types a channel whose \emph{consumer} samples and sends
a random value.
Similarly, the type $A \ichoice B$ types a channel whose \emph{provider} evaluates a branch
predicate, sends a branch selection on the channel, and then continues with a type $A$ guidance
protocol or a type $B$ protocol based on the branch selection;
dually, the type $A \echoice B$ types a channel whose \emph{consumer} evaluates and sends a
branch selection.

\begin{remark}
  In the rest of this paper, we will \emph{not} use the dual types $\tau \supset A$ and $A \ichoice B$.
  We introduce these types here for theoretical completeness, and they may be used in some future development.
\end{remark}

Type operators prescribe guidance protocols for procedures by parameterizing with a continuation
type that describes the guidance protocol after a procedure call.
A procedure signature $\tau_1 \leadsto \tau_2 \mid \chtype{a}{T_a} ; \chtype{b}{T_b}$ types a
procedure that takes a parameter of type $\tau_1$, returns a result of type $\tau_2$,
consumes a channel $a$, and provides a channel $b$,
such that
if the guidance protocols for $a$ and $b$ \emph{after} a procedure call are $A$ and $B$, respectively,
then the guidance protocols for $a$ and $b$ \emph{before} the procedure call are $T_a[A]$ and $T_b[B]$,
respectively.

A \emph{type definition} $\m{typedef}(T.X.A)$ declares a unary type operator $T$ that takes a type
parameter $X$ and produces a guide type $A$, which can reference $X$.
Because type operators are used to prescribe procedure signatures, we assume that a probabilistic
program is always accompanied by a collection $\calT$ of (mutually recursive) type definitions.

\begin{example}\label{Ex:RecurType}
  We can formally declare the type operator $\m{Recur}$ for the \id{PcfgGen} procedure shown
  in \cref{Fig:RecursiveModel} as
  $
  \m{typedef}( ~ \m{R}.~X.~  \tureal \wedge ((\treal \wedge X) \echoice \m{R}[\m{R}[X]]) ~).
  $
\end{example}

\paragraph{Typing rules}
The typing judgments for expressions have the form $\Gm \vdash e : \tau$,
where $\Gm$ is a \emph{typing context} that maps program variables to basic types (defined in \cref{Fig:Syntax}).
A full list of typing rules is included in
\iflong
\cref{Se:FullSpec}.
\else
the technical report~\cite{Techreport}.
\fi
The typing rules for expressions are skipped here because they are standard.

The typing judgments for commands have the form
\[
\Gm \mid \chtype{a}{A}; \chtype{b}{B} \vdash_{\Sg} m \dotsim \tau \mid \chtype{a}{A'} ; \chtype{b}{B'},
\]
where $\Sg$ maps procedure identifiers to procedure signatures.
The intuitive meaning of the typing judgment is that
if the channels $a$ and $b$ are of the guidance protocols $A$ and $B$, respectively,
then we can evaluate the command $m$ to a value of type $\tau$,
and after the evaluation, the channels $a$ and $b$ are of the guidance protocols $A'$
and $B'$, respectively.

\begin{figure}
\centering
\begin{mathpar}\small
  \Rule{TM:Ret}
  { \Gm \vdash e : \tau
  }
  { \Gm \mid \chtype{a}{A} ; \chtype{b}{B} \vdash \mret{e} \dotsim \tau \mid \chtype{a}{A} ; \chtype{b}{B}  }
  \and
  \Rule{TM:Bnd}
  { \Gm \mid \chtype{a}{A} ; \chtype{b}{B} \vdash m_1 \dotsim \tau_1 \mid \chtype{a}{A'} ; \chtype{b}{B'} \\\\
    \Gm, x : \tau_1 \mid \chtype{a}{A'} ; \chtype{b}{B'} \vdash m_2 \dotsim \tau_2 \mid \chtype{a}{A''} ; \chtype{b}{B''}
  }
  { \Gm \mid \chtype{a}{A} ; \chtype{b}{B} \vdash \mbnd{m_1}{x}{m_2} \dotsim \tau_2 \mid \chtype{a}{A''} ; \chtype{b}{B''} }
  \and
  \Rule{TM:Sample:Recv:L}
  { \Gm \vdash e : \tdist{\tau}
  }
  { \Gm \mid \chtype{a}{\tau \wedge A} ; \chtype{b}{B} \vdash \msamplei{e}{a} \dotsim \tau \mid \chtype{a}{A} ; \chtype{b}{B} }
  \and
  \Rule{TM:Sample:Send:R}
  { \Gm \vdash e : \tdist{\tau}
  }
  { \Gm \mid \chtype{a}{A} ; \chtype{b}{ \tau \wedge B} \vdash \msampleo{e}{b} \dotsim \tau \mid \chtype{a}{A} ; \chtype{b}{B} }
  \and
  \Rule{TM:Cond:Send:L}
  { \Gm \vdash e : \tbool \\
    \Gm \mid \chtype{a}{A_1} ; \chtype{b}{B} \vdash m_1 \dotsim \tau \mid \chtype{a}{A'} ; \chtype{b}{B'} \\\\
    \Gm \mid \chtype{a}{A_2} ; \chtype{b}{B} \vdash m_2 \dotsim \tau \mid \chtype{a}{A'} ; \chtype{b}{B'}
  }
  { \Gm \mid \chtype{a}{A_1 \echoice A_2} ; \chtype{b}{B} \vdash \mbrancho{e}{m_1}{m_2}{a} \dotsim \tau \mid \chtype{a}{A'} ; \chtype{b}{B'} }
  \and
  \Rule{TM:Cond:Recv:R}
  { \Gm \mid \chtype{a}{A} ; \chtype{b}{B_1} \vdash m_1 \dotsim \tau \mid \chtype{a}{A'} ; \chtype{b}{B'} \\\\
    \Gm \mid \chtype{a}{A} ; \chtype{b}{B_2} \vdash m_2 \dotsim \tau \mid \chtype{a}{A'} ; \chtype{b}{B'}
  }
  { \Gm \mid \chtype{a}{A} ; \chtype{b}{B_1 \echoice B_2} \vdash \mbranchi{m_1}{m_2}{b} \dotsim \tau \mid \chtype{a}{A'} ; \chtype{b}{B'} }
  \and
  \Rule{TM:Call}
  { \Sg(f) = \tau_1 \leadsto \tau_2 \mid \chtype{a}{T_a} ; \chtype{b}{T_b} \\
    \Gm \vdash e : \tau_1
  }
  { \Gm \mid \chtype{a}{T_a[A]} ; \chtype{b}{T_b[B]} \vdash \mcall{f}{e} \dotsim \tau_2 \mid \chtype{a}{A} ; \chtype{b}{B} }
\end{mathpar}
\caption{Selected typing rules for commands.}
\label{Fig:TypingElab}
\end{figure}

\cref{Fig:TypingElab} presents the typing rules for commands.
We assume a fixed global $\Sg$ that we omit from the rules.
Intuitively, the rules formulate a \emph{backward}-reasoning system:
we start with continuation types $A'$ and $B'$ for the channels $a$ and $b$, respectively,
and then prepend the guidance messages sent or received by the command $m$ to $A'$ and $B'$,
to obtain the guide types $A$ and $B$ for the channels $a$ and $b$ before the evaluation of $m$,
respectively.
For sample passing and branch selection, each guide type has two derivation rules:
one for the consumed channel $a$, and the other for the provided channel $b$.
For example, the type $\tau \wedge A$ represents a channel whose provider sends a
sample of type $\tau$; thus, if the consumed channel $a$ has such a type,
the rule (\textsc{TM:Sample:Recv:L}) \emph{receives} a sample from the provider of $a$,
and if the provided channel $b$ has such a type, the rule (\textsc{TM:Sample:Send:R})
\emph{sends} a sample to the consumer of $b$.

The rule (\textsc{TM:Call}) handles procedure calls.
For a procedure call $\mcall{f}{e}$, the rule fetches from $\Sg$ the procedure $f$'s signature
$\tau_1 \leadsto \tau_2 \mid \chtype{a}{T_a} ; \chtype{b}{T_b}$,
and then instantiates the type operators $T_a,T_b$ with continuation types $A,B$, respectively,
to obtain the guide types $T_a[A]$ and $T_b[B]$ for the channels $a$ and $b$ before the procedure
call, respectively.

\begin{example}\label{Ex:TypeDerivation}
  Consider the command
  \begin{align*}
    m_3 \defeq{} & \enskip \m{bnd}( \enskip \mcall{f}{k}; \enskip \_. \\[-3pt]
      & \enskip \m{bnd}( \enskip \msamplei{a}{\enormal{0}{1}}; \enskip \_. \\[-3pt]
      & \enskip \m{bnd}( \enskip \mcall{f}{k}; \enskip \_. \\[-3pt]
      & \enskip \mret{\etriv} \enskip ) \enskip ) \enskip ),
  \end{align*}
  where the variable $k$ has type $\tureal$ and the procedure $f$ has signature
  $
  \tureal \leadsto \tunit \mid \chtype{a}{T} ; \varnothing,
  $
  i.e., the procedure $f$ consumes channel $a$ but does not provide any channel,
  and channel $a$ is associated with a type operator $T$.
  Now we show that we can derive a typing judgment for $m_3$
  by backward reasoning.
  First, by (\textsc{TM:Ret}), we have
  \[
  k : \tureal \mid \chtype{a}{\one}; \varnothing \vdash_\Sg \mret{\etriv} \dotsim \tunit \mid \chtype{a}{\one}; \varnothing.
  \]
  Then by (\textsc{TM:Call}), we derive
  \[
  k : \tureal \mid \chtype{a}{T[\one]}; \varnothing \vdash_\Sg \mcall{f}{k} \dotsim \tunit \mid \chtype{a}{\one}; \varnothing.
  \]
  Define $m_4 \defeq \mbnd{\mcall{f}{k}}{\_}{\mret{\etriv}}$.
  Thus, by (\textsc{TM:Bnd}),
  \[
  k : \tureal \mid \chtype{a}{T[\one]}; \varnothing \vdash_\Sg m_4 \dotsim \tunit \mid \chtype{a}{\one}; \varnothing.
  \]
  Define $m_5 \defeq \mbnd{\msamplei{\enormal{0}{1}}{a}}{\_}{m_4}$.
  By (\textsc{TM:Sample:Recv:L}) and (\textsc{TM:Bnd}), we have
  \[
  k : \tureal \mid \chtype{a}{\treal \wedge T[\one]}; \varnothing \vdash_\Sg m_5 \dotsim \tunit \mid \chtype{a}{\one} ; \varnothing.
  \]
  Finally, we again apply (\textsc{TM:Call}) and (\textsc{TM:Bnd}) to derive
  \[
  k : \tureal \mid \chtype{a}{T[\treal \wedge T[\one]]}; \varnothing \vdash_\Sg m_3 \dotsim \tunit \mid \chtype{a}{\one} ; \varnothing.
  \]
\end{example}

\paragraph{Type safety}

We present some theoretical results about type safety of guide types.
Proofs are included in
\iflong
\cref{Se:FullSpec}.
\else
the technical report~\cite{Techreport}.
\fi

We first formulate two judgments for well-formedness of values and guidance traces.
The judgment $v : \tau$ means that value $v$ has type $\tau$.
The judgment $\sigma : A$ means that the guidance trace is a sequence of messages
that satisfies protocol $A$.
Rules for these judgments are straightforward; we omit them here but include them
in
\iflong
\cref{Se:FullSpec}.
\else
the technical report~\cite{Techreport}.
\fi

The theorem below states that if $m$ is a well-typed closed command,
and it evaluates to a value $v$ under guidance traces $\sigma_a,\sigma_b$,
then $v$ is a well-typed value, and $\sigma_a,\sigma_b$ are well-typed guidance traces.

\begin{theorem}\label{Cor:SoundnessW}
  If $\cdot \mid \chtype{a}{A} ; \chtype{b}{B} \vdash_\Sg m \dotsim \tau \mid \chtype{a}{\one} ; \chtype{b}{\one}$
  and $\emptyset \mid \chtype{a}{\sigma_a} ; \chtype{b}{\sigma_b} \vdash m \evalp{w} v$,
  then
  $\sigma_a : A$,
  $\sigma_b : B$,
  and $v : \tau$.
\end{theorem}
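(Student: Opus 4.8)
The plan is to prove a stronger statement by induction and then specialize. The obstacle with the theorem as stated is that the final protocols are fixed to $\one$, whereas the subderivations produced by (\textsc{TM:Bnd}), (\textsc{TM:Call}), etc.\ thread \emph{intermediate} protocols. I would therefore generalize to arbitrary final protocols by introducing a \emph{residual} trace judgment $\sigma : A \Rightarrow A'$, read ``$\sigma$ is a well-typed sequence of messages that drives protocol $A$ down to residual $A'$,'' defined so that $\sigma : A \Rightarrow \one$ coincides with the full judgment $\sigma : A$. Its rules mirror the command-typing rules one message at a time: the empty trace gives $[] : A \Rightarrow A$; a provider sample prepends $\msgobjl{v}$ with $v : \tau$ to derive $(\tau \wedge A) \Rightarrow A'$ from $\sigma : A \Rightarrow A'$; a branch selection $\dirobjr{v}$ chooses the corresponding summand of an $\echoice$; and a $\foldobj$ unfolds a type-operator instantiation. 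I would also carry an environment hypothesis $V : \Gm$ (the runtime environment agrees with the typing context). The generalized claim is: if $\Gm \mid \chtype{a}{A};\chtype{b}{B} \vdash_\Sg m \dotsim \tau \mid \chtype{a}{A'};\chtype{b}{B'}$, $V : \Gm$, and $V \mid \chtype{a}{\sigma_a};\chtype{b}{\sigma_b} \vdash m \evalp{w} v$, then $\sigma_a : A \Rightarrow A'$, $\sigma_b : B \Rightarrow B'$, and $v : \tau$. The theorem follows by taking $A' = B' = \one$ with $V = \emptyset$ and $\Gm = \cdot$.

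I would proceed by induction on the evaluation derivation, inverting the (syntax-directed) typing derivation in each case. The (\textsc{EM:Ret}) case needs only a standard expression-soundness lemma ($V : \Gm$, $\Gm \vdash e : \tau$, $V \vdash e \evalto v$ imply $v : \tau$) together with the reflexivity rule $[] : A \Rightarrow A$. The (\textsc{EM:Bnd}) case composes the two induction hypotheses via a concatenation lemma for the residual judgment ($\sigma : A \Rightarrow A'$ and $\sigma' : A' \Rightarrow A''$ give $\sigma \concat \sigma' : A \Rightarrow A''$), plus a weakening lemma extending $V : \Gm$ to $V[x \mapsto v_1] : \Gm, x{:}\tau_1$ from $v_1 : \tau_1$. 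The (\textsc{EM:Sample:*}) cases are where value well-formedness enters: from $\Gm \vdash e : \tdist{\tau}$ and expression soundness the evaluated distribution has type $\tdist{\tau}$, and the paper's assumption that a primitive distribution's type characterizes its support \emph{precisely} yields $v : \tau$ from $v \in d.\mathrm{support}$; the single-message residual rule then closes the case. The (\textsc{EM:Cond:*}) cases select the branch indexed by the selection value \emph{recorded in the trace} and apply the induction hypothesis to that branch; note the argument never uses $w > 0$, so a mismatched selection (weight $0$) is handled uniformly.

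The crux is the (\textsc{EM:Call}) case, which is also where type operators and the $\foldobj$ marker must be reconciled. Here I would invoke well-formedness of the program $\calD$ relative to $\Sg$ and $\calT$: for $f$ with signature $\tau_1 \leadsto \tau_2 \mid \chtype{a}{T_a};\chtype{b}{T_b}$ and definitions $T_a = (X_a.\,C_a)$, $T_b = (X_b.\,C_b)$, the body $m_f$ is typed \emph{parametrically} in the continuation variables, i.e.\ $x_f{:}\tau_1 \mid \chtype{a}{C_a};\chtype{b}{C_b} \vdash m_f \dotsim \tau_2 \mid \chtype{a}{X_a};\chtype{b}{X_b}$. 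To match the call site, whose continuation is the $A$ (resp.\ $B$) appearing in (\textsc{TM:Call}), I would establish a \emph{type-substitution lemma}: command typing is stable under substituting a guide type for a type variable, so the parametric body derivation specializes to $x_f{:}\tau_1 \mid \chtype{a}{C_a[A/X_a]};\chtype{b}{C_b[B/X_b]} \vdash m_f \dotsim \tau_2 \mid \chtype{a}{A};\chtype{b}{B}$. Applying the induction hypothesis to the body's evaluation subderivation gives $\sigma_a : C_a[A/X_a] \Rightarrow A$, $\sigma_b : C_b[B/X_b] \Rightarrow B$, and $v_2 : \tau_2$. Finally the $\foldobj$ rule of the residual judgment---which by design unfolds $T_a[A]$ to its definition $C_a[A/X_a]$ across a single procedure-call marker---lets me prepend $\foldobj$ and conclude $[\foldobj]\concat\sigma_a : T_a[A] \Rightarrow A$, and symmetrically on $b$. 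I expect the main difficulty to be exactly this interaction: arranging the definitions of the residual judgment at $T[A]$ and of the parametric procedure-typing rule so that the $\foldobj$-prefixed body trace is recognized as a valid trace for the instantiated operator, and ensuring the type-substitution lemma is strong enough---and the definitions in $\calT$ well-founded enough---that the mutual recursion among procedures and type operators does not obstruct the induction.
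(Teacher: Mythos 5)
Your proposal is correct, and its skeleton is the paper's: generalize over the final (continuation) protocols, induct on the evaluation derivation while inverting the syntax-directed typing derivation, use expression soundness, the support-precision property of primitive distributions, and environment extension in the sample/bind cases, and discharge the call case by specializing the parametrically-typed procedure body with a type-substitution lemma (this is exactly the paper's \cref{Lem:GuideSubst}) before prepending the $\foldobj$ marker. The only differences are in formulation, not substance. First, the paper never introduces a two-place residual judgment $\sigma : A \Rightarrow A'$; it keeps the one-place judgment $\sigma : A$ and states the strengthened induction (\cref{Lem:SoundnessW}) in continuation-passing form: if $m$ types from $A$ to $A'$ and evaluates with trace $\sigma_a$, then for \emph{every} well-typed continuation trace $\sigma_a' : A'$ one has $(\sigma_a \concat \sigma_a') : A$. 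Your residual judgment is interdefinable with this, and your concatenation lemma is precisely the cut that the paper's phrasing builds into the statement; your version is arguably more modular but obliges you to set up and validate a new judgment (reflexivity, concatenation, and the coincidence of $\sigma : A \Rightarrow \one$ with $\sigma : A$), whereas the paper's avoids that bookkeeping at the cost of quantifying over continuation traces throughout the induction. Second, the paper proves value typing ($v : \tau$) as a separate, prior induction (\cref{Lem:SoundnessValue}) and invokes it inside the trace induction where the environment must be extended (bind and call), whereas you derive value and trace typing simultaneously in a single induction; both work, since value typing never depends on trace well-formedness. Your worry about the mutual recursion among procedures and type operators is handled in the paper exactly as you anticipate: well-formedness of $\calD$ against $\Sg$ and $\calT$ (rules \textsc{TP:Dec}/\textsc{TP:Proc}) supplies the parametric body typing at every call, so the induction remains on the evaluation derivation alone and no extra well-foundedness argument is needed.
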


Furthermore, we can show some \emph{normalization} properties of guide types.
The theorem below states that if $m$ is a well-typed closed command,
and $\sigma_a,\sigma_b$ are well-typed guidance traces, then $m$ can evaluate to
some well-typed $v$ under $\sigma_a,\sigma_b$.

\begin{theorem}\label{Cor:SoundnessI}
  If
  $\cdot \mid \chtype{a}{A} ; \chtype{b}{B} \vdash_\Sg m \dotsim \tau \mid \chtype{a}{\one} ; \chtype{b}{\one}$,
  $\sigma_a : A$,
  and $\sigma_b : B$,
  then
  there exist $w,v$ such that
  $\emptyset \mid \chtype{a}{\sigma_a} ; \chtype{b}{\sigma_b} \vdash m \evalp{w} v$ and
  $v : \tau$.
\end{theorem}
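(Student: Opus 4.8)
The plan is to prove a generalization that threads a \emph{continuation protocol} through the command, so that the induction hypothesis is available for the subcommands of a bind. Concretely, I would show: if $\Gm \mid \chtype{a}{A};\chtype{b}{B} \vdash_\Sg m \dotsim \tau \mid \chtype{a}{A'};\chtype{b}{B'}$, the environment $V$ is well-typed for $\Gm$ (each $V(x)$ inhabits $\Gm(x)$), and $\sigma_a : A$, $\sigma_b : B$, then $\sigma_a$ and $\sigma_b$ split as $\sigma_a = \rho_a \concat \gamma_a$ and $\sigma_b = \rho_b \concat \gamma_b$ with $\gamma_a : A'$ and $\gamma_b : B'$, and there are $w,v$ with $V \mid \chtype{a}{\rho_a};\chtype{b}{\rho_b} \vdash m \evalp{w} v$ and $v : \tau$. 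Intuitively, $m$ consumes a prefix of each trace and hands back the suffix that the continuation must realize. The stated theorem is the instance $A' = B' = \one$: then $\gamma_a : \one$ forces $\gamma_a = []$, so $\rho_a = \sigma_a$ and $m$ evaluates on exactly the given traces.

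I would prove the generalized statement by well-founded induction on the lexicographic measure $(\,|\sigma_a| + |\sigma_b|,\ \text{size of } m\,)$, with a case analysis on the last typing rule. The first component is essential for recursion: every communicating command strips at least one message (a sample value, a branch directive, or a $\foldobj$), while the second component handles (\textsc{TM:Bnd}), where the total trace length may be unchanged but the command shrinks. Two auxiliary facts feed the cases. First, inversion on the trace-formation judgment: $\sigma : \tau \wedge A_0$ forces $\sigma = [\msgobjl{v}] \concat \sigma_0$ with $v : \tau$ and $\sigma_0 : A_0$; $\sigma : A_1 \echoice A_2$ forces a leading $\dirobjr{\cdot}$ selecting one branch; and $\sigma : T[A_0]$ forces $\sigma = [\foldobj] \concat \sigma_0$ with $\sigma_0 : C[A_0]$ for $\m{typedef}(T.X.C)$. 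Second, the precise-support property of primitive distributions, so that a trace value of type $\tau$ is guaranteed to lie in $d.\mathrm{support}$.

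The cases are then mostly mechanical. For (\textsc{TM:Ret}) I take the empty prefix and appeal to soundness of expression evaluation. For (\textsc{TM:Sample:Recv:L}) inversion gives $\sigma_a = [\msgobjl{v}] \concat \sigma_a^0$ with $v : \tau$; precise support yields $v \in d.\mathrm{support}$, and (\textsc{EM:Sample:Recv:L}) fires with $w = d.\mathrm{density}(v)$. For the conditionals I read the leading directive from the trace, recurse into the branch it selects (both branches share the continuation $A'$ in the typing rule), and pick up the Iverson-bracket factor from (\textsc{EM:Cond:Send:L}); note the claim only asks for \emph{some} $w \ge 0$, so a vanishing weight is harmless. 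For (\textsc{TM:Call}) I strip the leading $\foldobj$, instantiate the operator's body with the call's continuation, and apply the induction hypothesis to the callee body on the strictly shorter trace; (\textsc{EM:Call}) then reassembles the $\foldobj$. The crucial case is (\textsc{TM:Bnd}): I run the induction hypothesis on $m_1$ against the full $\sigma_a,\sigma_b$, obtaining $v_1$, the consumed prefixes, and suffixes $\gamma_a^{(1)} : A'$, $\gamma_b^{(1)} : B'$; I then run it on $m_2$ under $V[x \mapsto v_1]$ against these suffixes. Because $m_1$ hands back exactly the trace that $m_2$ must consume, the data dependency on $v_1$ is respected with no circularity, and (\textsc{EM:Bnd}) glues the two derivations.

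I expect the main obstacle to be the interaction between (\textsc{TM:Bnd}) and recursion, which is exactly what dictated the two design choices above: the left-to-right ``consume-a-prefix, return-the-suffix'' phrasing makes the bind case compose without guessing a split point or assuming any intermediate type is inhabited, and the lexicographic measure tames the fact that a procedure body may be syntactically larger than its call while its trace is strictly shorter. The key conceptual point is that a \emph{finite}, well-typed guidance trace---in particular its finite number of $\foldobj$ messages---certifies that the recursion terminates, which is what makes the first component of the measure decrease at every call. This lemma is the converse of \cref{Cor:SoundnessW}; together they show that evaluation against well-typed traces is total and type-preserving, the property I will rely on when reducing absolute continuity to support equality.
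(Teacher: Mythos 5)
Your proposal is correct and follows essentially the same route as the paper: your generalized ``consume-a-prefix, return-a-suffix'' statement with continuation protocols $A',B'$ is exactly the paper's normalization theorem (\cref{The:SoundnessI}), proved there by nested induction on the trace-typing and command-typing derivations, which is the same lexicographic structure as your $(|\sigma_a|+|\sigma_b|,\ \text{size of }m)$ measure; the corollary then follows by instantiating $A'=B'=\one$, just as you do. The only cosmetic differences are that the paper factors value-typing into a separate theorem (\cref{Lem:SoundnessValue}) rather than threading $v:\tau$ through the induction, and makes the type-operator instantiation in the call case explicit as a substitution lemma (\cref{Lem:GuideSubst}).
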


We can strengthen the normalization property when a command will \emph{not} send out
any branch selections.
The theorem below states that
if a well-typed command $m$ consumes a channel $a$ with a type $A$ that does not contain $\echoice$ and provides a channel $b$ with a type $B$ that does not contain $\ichoice$,
and $\sigma_a,\sigma_b$ are well-typed guidance traces,
then $m$ can evaluate to some well-typed value $v$ under $\sigma_a,\sigma_b$
with a \emph{strictly positive} weight $w$.

\begin{theorem}\label{Cor:SoundnessII}
  If
  $\cdot \mid \chtype{a}{A} ; \chtype{b}{B} \vdash_\Sg m \dotsim \tau \mid \chtype{a}{\one} ; \chtype{b}{\one}$,
  $A$ is $\echoice$-free,
  $B$ is $\ichoice$-free,
  $\sigma_a : A$,
  and $\sigma_b : B$,
  then
  there exist $w,v$ such that
  $\emptyset \mid \chtype{a}{\sigma_a} ; \chtype{b}{\sigma_b} \vdash m \evalp{w} v$,
  $v : \tau$,
  and $w > 0$.
\end{theorem}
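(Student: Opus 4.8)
The plan is to isolate exactly where the weight $w$ can become zero. Inspecting the evaluation rules, $w$ is always a product of factors contributed by the leaves of the derivation: the sampling rules (\textsc{EM:Sample:Recv:L}), (\textsc{EM:Sample:Send:R}) and their duals contribute $d.\mathrm{density}(v)$ for a value $v \in d.\mathrm{support}$, while the branch-\emph{send} rule (\textsc{EM:Cond:Send:L}) contributes the Iverson bracket $[v_a = v_e]$, and symmetrically its provided-channel dual contributes a factor $[v_b = v_e]$. Every other rule is weight-neutral; in particular (\textsc{EM:Call}) and the branch-\emph{receive} rules pass the weight through unchanged. Under the standing assumption that each primitive distribution has strictly positive density on its support (which holds for all the families in \cref{Fig:Syntax}), the density factors are all positive, so the sole obstruction to $w > 0$ is an Iverson bracket evaluating to $0$, which can arise only at a branch-send command.

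Concretely, I would first invoke \cref{Cor:SoundnessI} to obtain an evaluation $\emptyset \mid \chtype{a}{\sigma_a} ; \chtype{b}{\sigma_b} \vdash m \evalp{w} v$ with $v : \tau$, so that existence is already settled and only $w > 0$ remains. I would then prove $w > 0$ by induction on this evaluation derivation, invoking at each node a typing derivation for the command being evaluated (the given one at the root, and for a callee body the derivation coming from well-typedness of $\calD$, instantiated at the call's continuation), under the generalized invariant that the current guide type of the consumed channel is $\echoice$-free and that of the provided channel is $\ichoice$-free. Dropping the requirement that the after-types be $\one$ is what makes the invariant survive the sub-commands of (\textsc{EM:Bnd}) and of the branch rules. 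The crux is that a branch-send on the consumed channel is typed by (\textsc{TM:Cond:Send:L}) only when that channel's type has the form $A_1 \echoice A_2$, and dually a branch-send on the provided channel requires an $\ichoice$ at the head; both are excluded by the invariant, so the two weight-killing cases are vacuous. The sampling cases give positive factors directly from the rule premise $v \in d.\mathrm{support}$, and (\textsc{EM:Ret}), (\textsc{EM:Bnd}), (\textsc{EM:Call}), and the branch-receive rules follow from the induction hypotheses on their premises.

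To carry the invariant through the induction I would establish two auxiliary facts. The first is a \emph{freeness-preservation} lemma: if $\Gm \mid \chtype{a}{A} ; \chtype{b}{B} \vdash_\Sg m \dotsim \tau \mid \chtype{a}{A'} ; \chtype{b}{B'}$ with $A$ being $\echoice$-free and $B$ being $\ichoice$-free, then the continuation types $A'$ and $B'$ are again $\echoice$-free and $\ichoice$-free, respectively. This is what lets (\textsc{EM:Bnd}) pass the invariant from $m_1$ to $m_2$, and it holds because each typing rule obtains its before-type from its after-type purely by \emph{prepending} constructors, so a forbidden constructor in $A'$ would already occur in $A$. The second concerns procedure calls: to apply the induction hypothesis to the body $m_f$ in the (\textsc{EM:Call}) case, I must know the body is evaluated under channel types that are again free of the forbidden constructors. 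This forces $\echoice$-freeness to look \emph{through} type operators---declaring an operator ``$\echoice$-clean'' when its defining body, together with every operator it references, contains no $\echoice$---so that $T_a[A]$ being $\echoice$-free entails both that $T_a$ is $\echoice$-clean and that the argument $A$ is $\echoice$-free, and the body's before-types inherit freeness from the call site.

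I expect the main obstacle to be exactly this last point: pinning down the definitions of $\echoice$-free and $\ichoice$-free for \emph{recursive} type operators so that they are simultaneously preserved by the backward-prepending typing rules and transported correctly across operator instantiation at a call. Since the definitions in $\calT$ may be mutually recursive, the cleanest route is to define $\echoice$-cleanliness of operators by a fixpoint closed under the reference relation among definitions, and then to prove a small bridging lemma that $\sigma : A$ with $A$ being $\echoice$-free contains no consumer-sent branch message $\dirobjr{\cdot}$, and dually that $\sigma : B$ with $B$ being $\ichoice$-free contains no $\dirobjl{\cdot}$. This semantic reading of freeness on traces ties the type-level condition to the precise message that triggers an Iverson bracket, after which the positivity induction is routine.
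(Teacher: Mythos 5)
Your proposal is correct, but it is organized differently from the paper's proof. The paper does not first obtain an evaluation and then argue positivity: it proves a generalized statement (\cref{The:SoundnessII}, ``Normalization, part II'') by a single nested induction on the derivations of $\sigma_a : A$, $\sigma_b : B$, and the typing judgment, whose conclusion simultaneously asserts existence of the evaluation, the trace decomposition $\sigma_a = \sigma_a'' \concat \sigma_a'$ and $\sigma_b = \sigma_b'' \concat \sigma_b'$, $\echoice$/$\ichoice$-freeness of the continuation types $A'$ and $B'$, and $w > 0$; the theorem you were asked about then follows together with \cref{Lem:SoundnessValue}. In other words, the paper re-runs the Part~I normalization induction (\cref{The:SoundnessI}) with the freeness and positivity bookkeeping folded in, whereas you factor the argument into two passes: existence via \cref{Cor:SoundnessI} as a black box, then a preservation-style induction on the evaluation derivation carrying the typing invariant. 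Your freeness-preservation lemma is exactly the clause ``$A'$ is $\echoice$-free, $B'$ is $\ichoice$-free'' that the paper builds into its induction conclusion, and your generalization (dropping the requirement that the after-types be $\one$) plays the same role as the paper's generalization to arbitrary continuation types and open environments. The identification of the crux is the same in both: the only weight-killing factors are the Iverson brackets of branch-send commands, which the typing rules permit only under an $\echoice$ (consumed side) or $\ichoice$ (provided side) head constructor, excluded by the invariant; and the sampling weights are positive because density is positive on the support. Your treatment of the call case is also faithful to the paper's, which silently uses the implication that $T_a[A']$ being $\echoice$-free entails $[A'/X_a]A_o$ is $\echoice$-free---your explicit fixpoint notion of ``$\echoice$-clean'' operators closed under the reference relation among the (mutually recursive) definitions in $\calT$ is a more careful spelling-out of that step. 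What your factoring buys is avoiding duplication of the normalization argument; what the paper's single induction buys is that positivity, freeness, and existence are discharged at once without needing a separate inversion-matching argument between an evaluation derivation and a typing derivation.
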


\paragraph{Type-inference algorithm}

We now sketch a type-inference algorithm that derives guide types automatically
from the implementation. 
In the algorithm, we assume we have information about basic types---such as the parameter
and result types for procedures and the typing contexts that map program variables to
basic types---because without guide types, our core language is a simply-typed lambda
calculus, for which type inference is decidable.

First, for each procedure $\fundec{f}{\tau_1}{\tau_2}{x}{m}{a}{b}$ in the program,
we create two fresh type operators $T_a$ and $T_b$ for the channels $a$ and $b$, respectively,
and obtains $\tau_1 \leadsto \tau_2 \mid \chtype{a}{T_a} ; \chtype{b}{T_b}$ as the
signature of this procedure.
Then we collect signatures of all the procedures in the program to obtain the map $\Sg$.

Now the task is to derive definitions of the type operators.
We observe that the rules in \cref{Fig:TypingElab} are syntax directed, and
they can be turned into an algorithmic system by interpreting
\[
\Gm \mid \chtype{a}{A} ; \chtype{b}{B} \vdash_\Sg m \dotsim \tau \mid \chtype{a}{A'} ; \chtype{b}{B'}
\]
as a function from $\Sg, \Gm, m, \tau, a, b, A', B'$ to $A,B$; i.e.,
we assume we know all the basic types, and we perform backward reasoning to infer
guide types.
Therefore, for each procedure $\fundec{f}{\tau_1}{\tau_2}{x}{m}{a}{b}$ with
signature $\tau_1 \leadsto \tau_2 \mid \chtype{a}{T_a} ; \chtype{b}{T_b}$,
we create two fresh type variables $X_a$ and $X_b$, derive two guide types
$A$ and $B$ through
\[
x:\tau_1 \mid \chtype{a}{A} ; \chtype{b}{B} \vdash_\Sg m \dotsim \tau_2 \mid \chtype{a}{X_a} ; \chtype{b}{X_b},
\]
and then add type definitions $\m{typedef}(T_a.X_a.A)$ and $\m{typedef}(T_b.X_b.B)$.


\section{Soundness of Bayesian Inference}
\label{Se:Inference}

In this section, we use guide types to reason about Bayesian inference.
We first present a measure-theoretic formulation of Bayesian inference in the
coroutine-based PPL, and prove that guide types are \emph{certificates}
of absolute continuity (\cref{Se:AbsoluteContinuity}).
We then sketch how guide types ensure key soundness conditions for multiple
Bayesian-inference algorithms (\cref{Se:SoundInference}).
\iflong
\Cref{Se:FullInference}
\else
The technical report~\cite{Techreport}
\fi
includes the details (e.g., formalizations and proofs) of this section.

\subsection{Verification of Absolute Continuity}
\label{Se:AbsoluteContinuity}

%
We use the following notions from measure theory:
$\sigma$-algebras, measurable spaces, measurable functions, measures, 
and Lebesgue integration.
\iflong
\Cref{Se:MeasureTheory} provides a review of these notions.
\fi

\paragraph{Semantic domains}
For each scalar type $\tau$, we equip it with a \emph{standard Borel space} $\interp{\tau}$
on the inhabitants of $\tau$, i.e., $\interp{\tau}$ is a measurable space isomorphic
to a countable set or the real line.
%
%
We then equip each type $\tau$ with a \emph{stock measure} $\lambda_{\interp{\tau}}$:
if $\interp{\tau}$ is a countable set, we define $\lambda_{\interp{\tau}}$ to be the counting measure; 
otherwise, $\interp{\tau}$ is a subset of the real line, so we define $\lambda_{\interp{\tau}}$ to be
the Lebesgue measure.

%
%
Because guidance traces are finite sequences of messages that
contain values of scalar types,
we can define $\interp{A}$ as a standard Borel space on guidance traces that satisfy protocol $A$.
We then construct the stock measure $\lambda_{\interp{A}}$ for $A$ by decomposing $A$ to products and/or sums of
scalar types, and then combining the stock measures for scalar types via product and/or coproduct measures.


\paragraph{Denotation of commands}
For a well-typed closed command $m$, i.e.,
$\cdot \mid \chtype{a}{A} ; \chtype{b}{B} \vdash_{\Sg} m \dotsim \tau \mid \chtype{a}{\one} ; \chtype{b}{\one}$,
we define the \emph{density function} of $m$ as
\[
  \mathbf{P}_m(\sigma_a,\sigma_b) \defeq \begin{dcases*}
   w & if $\emptyset \mid \chtype{a}{\sigma_a} ; \chtype{b}{\sigma_b} \vdash m \evalp{w} v$ \\
   0 & otherwise
 \end{dcases*}.
\]
We can prove that $\mathbf{P}_m$ is a measurable function from $\interp{A} \otimes \interp{B}$---the
product measurable space of $\interp{A}$ and $\interp{B}$---to nonnegative real numbers.
Thus, we construct a measure denotation $\interp{m}$ for $m$, by integrating $\mathbf{P}_m$ with
respect to the stock measure on the product space $\interp{A} \otimes \interp{B}$, i.e.,
\[
\interp{m}(S_{a,b}) \defeq \int_{S_{a,b}} \mathbf{P}_m(\sigma_a,\sigma_b) \lambda_{\interp{A} \otimes \interp{B}}(d (\sigma_a,\sigma_b)),
\]
where $S_{a,b}$ is a measurable set in $\interp{A} \otimes \interp{B}$.


\paragraph{Bayesian inference}
Let us fix a well-typed model program $m_{\m{m}}$ that consumes latent random variables on
a channel \id{latent} and provides observations on a channel \id{obs}, i.e.,
\[
\cdot \mid \chtype{latent}{A}; \chtype{obs}{B} \vdash_{\Sg} m_{\m{m}} \dotsim \tau_{\m{m}} \mid \chtype{latent}{\one}; \chtype{obs}{\one}.
\]
Usually, the program $m_{\m{m}}$ does \emph{not} receive any branch selections, i.e.,
$A$ is $\ichoice$-free and $B$ is $\echoice$-free.
Given a concrete observation $\sigma_o : B$ such that $\int \mathbf{P}_{m_\m{m}}(\sigma_\ell,\sigma_o) \lambda_{\interp{A}}(d \sigma_\ell) > 0$,
Bayesian inference is the problem of approximating the \emph{posterior} $\interp{m_{\m{m}}}_{\sigma_o}$,
a measure \emph{conditioned} with respect to $\sigma_o$, defined by
\begin{equation}\label{Eq:Posterior}
\interp{m_{\m{m}}}_{\sigma_o}(S_\ell) \defeq \frac{ \int_{S_\ell} \mathbf{P}_{m_{\m{m}}}(\sigma_\ell,\sigma_o) \lambda_{\interp{A}}(d \sigma_\ell) }{ \int \mathbf{P}_{m_{\m{m}}}(\sigma_\ell, \sigma_o) \lambda_{\interp{A}}(d \sigma_\ell) },
\end{equation}
where $S_\ell$ is a measurable set in $\interp{A}$, i.e., a set of guidance traces of type $A$.
Note that if we fix the observation $\sigma_o$, then the denominator of \cref{Eq:Posterior}
is a constant independent of $S_\ell$.
Thus, it is sufficient for an inference algorithm to ignore the denominator and approximate the measure
$S_\ell \mapsto \int_{S_\ell} \mathbf{P}_{m_{\m{m}}}(\sigma_\ell,\sigma_o) \lambda_{\interp{A}}(d \sigma_\ell)$.

\paragraph{Guide programs}
Bayesian-inference algorithms usually require some guide programs, such as
proposals for importance sampling and
approximating families for variational inference.
These guide programs specify measures on latent random
variables; in our system, we implement a guide program $m_{\m{g}}$ as
a coroutine that works with the model program $m_{\m{m}}$ and provides the
\id{latent} channel with guide type $A$ that $m_{\m{m}}$ consumes, i.e.,
\begin{gather*}
\cdot \mid \varnothing ; \chtype{latent}{A} \vdash_{\Sg} m_{\m{g}} \dotsim \tau_{\m{g}} \mid \varnothing ; \chtype{latent}{\one}, \\
\cdot \mid \chtype{latent}{A}; \chtype{obs}{B} \vdash_{\Sg} m_{\m{m}} \dotsim \tau_{\m{m}} \mid \chtype{latent}{\one}; \chtype{obs}{\one}.
\end{gather*}
The guide and model have the \emph{same} guide type $A$ on channel \id{latent}.
Because the guide \emph{provides} the channel and the model \emph{consumes} the channel,
the two programs interpret the guide type $A$ \emph{dually}; thus, their communication is compatible.

The coroutine-based paradigm folds the model and guide programs into a single entity;
thus, during the inference, both the model and guide coroutines execute.
To model \emph{possible} combinations of traces for a model-guide system, we introduce
a reduction relation $V \mid \chtype{a}{\sigma_a} ; \chtype{b}{\sigma_b} \red m \evalto v$,
where $V$ is an environment, $m$ is a command, $\sigma_a$ and $\sigma_b$ are guidance traces on channel $a$ and channel $b$, respectively, and $v$ is the reduction result.
The reduction relation is essentially the same as the evaluation relation for the operational semantics,
except that reduction does \emph{not} account for probabilities.
Below are two example rules.
\begin{mathpar}\small
  \Rule{RM:Sample:Send:R}
  { V \vdash e \evalto d \\
    v \in d.\mathrm{support} \\
  }
  { V \mid \chtype{a}{[]} ; \chtype{b}{[\msgobjl{v}]} \red \msampleo{e}{b} \evalto v }
  \and
  \Rule{RM:Cond:Send:L}
  { V \vdash e \evalto v_e \qquad
    i = \m{ite}(v_e, 1, 2) \qquad
    V \mid \chtype{a}{\sigma_a} ; \chtype{b}{\sigma_b} \red m_i \evalto v
  }
  { V \mid \chtype{a}{[\dirobjr{v_e}] \concat \sigma_a} ; \chtype{b}{\sigma_b} \red \mbrancho{e}{m_1}{m_2}{a} \evalto v }
\end{mathpar}
With the reduction relation, we say that a combination of traces $(\sigma_\ell,\sigma_o)$ is
\emph{possible} for the model program $m_\m{m}$ and the guide program $m_\m{g}$, if
$\emptyset \mid \chtype{latent}{\sigma_\ell} ; \chtype{obs}{\sigma_o} \red m_\m{m} \evalto v_\m{m}$ and $\emptyset \mid \varnothing; \chtype{latent}{\sigma_\ell} \red m_\m{g} \evalto v_\m{g}$ for some values $v_\m{m}$ and $v_\m{g}$.
We prove a lemma that connects the reduction relation with command denotations.

\begin{lemma}\label{Lem:JustificationOfIncludingModel}
  Suppose that $A$ is $\ichoice$-free, $B$ is $\echoice$-free, and
  \begin{align*}
    \cdot \mid \varnothing; \chtype{latent}{A} & \vdash_\Sg m_{\m{g}} \dotsim \tau_{\m{g}} \mid \varnothing; \chtype{latent}{\one}, \\
    \cdot \mid \chtype{latent}{A} ; \chtype{obs}{B} & \vdash_\Sg m_{\m{m}} \dotsim \tau_{\m{m}} \mid \chtype{latent}{\one} ; \chtype{obs}{\one}.
  \end{align*}
  Then a combination of traces $(\sigma_\ell, \sigma_o)$ is possible for the model $m_\m{m}$ and the guide $m_\m{g}$
  if and only if
  $\mathbf{P}_{m_\m{m}}(\sigma_\ell,\sigma_o) \neq 0$.
\end{lemma}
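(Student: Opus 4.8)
The plan is to reduce everything to a single structural correspondence between the probability-free reduction relation $\red$ and the positive-weight fragment of the evaluation relation $\evalp{w}$, and then to glue the model and guide together using the type-safety and normalization results already established. Concretely, I would first prove the following auxiliary claim by induction on the command $m$ (equivalently, on its derivation): for every well-typed closed command $\cdot \mid \chtype{a}{A}; \chtype{b}{B} \vdash_\Sg m \dotsim \tau \mid \chtype{a}{\one}; \chtype{b}{\one}$ and all traces $\sigma_a,\sigma_b$,
\[
\exists v.\ \emptyset \mid \chtype{a}{\sigma_a}; \chtype{b}{\sigma_b} \red m \evalto v \quad\Longleftrightarrow\quad \mathbf{P}_m(\sigma_a,\sigma_b) \neq 0,
\]
with the same statement holding verbatim when a channel is absent (as for the guide). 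This is the workhorse, and it formalizes the text's remark that reduction is ``essentially the same as evaluation, except that it does not account for probabilities.''

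For the induction, the $\mret{e}$ and $\mbnd{m_1}{x}{m_2}$ cases are immediate, the latter using multiplicativity of weights ($w_1 \cdot w_2 \neq 0$ iff $w_1 \neq 0$ and $w_2 \neq 0$) together with the matching trace split. The sampling cases reduce to the fact that a primitive distribution's density is strictly positive exactly on its support, so the reduction side condition $v \in d.\mathrm{support}$ holds iff the weight $d.\mathrm{density}(v)$ is nonzero. The delicate case is a \emph{sent} branch selection, e.g.\ (\textsc{EM:Cond:Send:L}): there evaluation reads the recorded selection $v_a$ from the trace, branches on $v_a$, and multiplies in the Iverson factor $[v_a = v_e]$, whereas (\textsc{RM:Cond:Send:L}) computes the predicate $v_e$ and \emph{forces} the trace to begin with $\dirobjr{v_e}$. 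Hence a nonzero weight forces $v_a = v_e$, at which point both relations select the same branch $m_i$ and the inductive hypothesis applies; conversely a reduction fixes $v_a = v_e$, making the Iverson factor $1$. The provided-channel rules are symmetric, and \emph{received} branch selections (which would arise only from an $\ichoice$ in $A$ or an $\echoice$ in $B$) carry no weight penalty and match the reduction rules verbatim. Thus the correspondence holds for every well-typed command, in both directions.

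With the correspondence in hand, the forward direction of the lemma is immediate: if $(\sigma_\ell,\sigma_o)$ is possible then in particular $m_\m{m}$ reduces on $(\sigma_\ell,\sigma_o)$, so $\mathbf{P}_{m_\m{m}}(\sigma_\ell,\sigma_o) \neq 0$. For the converse, suppose $\mathbf{P}_{m_\m{m}}(\sigma_\ell,\sigma_o) \neq 0$. The correspondence gives a reduction of $m_\m{m}$ on $(\sigma_\ell,\sigma_o)$, discharging the model half of ``possible.'' It remains to produce a reduction of the guide on $\sigma_\ell$. Since $\mathbf{P}_{m_\m{m}}(\sigma_\ell,\sigma_o) \neq 0$ witnesses an evaluation of $m_\m{m}$, \cref{Cor:SoundnessW} yields $\sigma_\ell : A$. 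Now I invoke the \emph{strong} normalization result: the guide satisfies $\cdot \mid \varnothing; \chtype{latent}{A} \vdash_\Sg m_\m{g} \dotsim \tau_\m{g} \mid \varnothing; \chtype{latent}{\one}$, and its provided type $A$ is $\ichoice$-free by hypothesis, so \cref{Cor:SoundnessII} (applied with an empty consumed channel and $\sigma_\ell : A$) produces $w>0$ and $v_\m{g}$ with $\emptyset \mid \varnothing; \chtype{latent}{\sigma_\ell} \vdash m_\m{g} \evalp{w} v_\m{g}$. Hence $\mathbf{P}_{m_\m{g}}(\sigma_\ell) \neq 0$, and one last application of the correspondence gives the desired reduction of $m_\m{g}$ on $\sigma_\ell$, so $(\sigma_\ell,\sigma_o)$ is possible.

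I expect the main obstacle to be twofold, and to live entirely in the converse direction. First, the branch-send case of the correspondence is where reduction and evaluation genuinely differ, and making the ``forces $v_a = v_e$'' argument line up cleanly in both directions—checking that nonzero weight really does force every Iverson factor to $1$ and every sampled value into its support—is the one spot that needs care; it also pins down the standing assumption that primitive densities are positive precisely on their supports. Second, and more conceptually, the asymmetry between model and guide must be respected: because $m_\m{m}$ \emph{sends} branch selections, not every well-typed trace carries positive weight for it, so \cref{Cor:SoundnessII} is \emph{not} available for the model; the hypothesis that $A$ is $\ichoice$-free is exactly what guarantees that the guide never sends a branch selection, so that every trace of type $A$ is realizable by $m_\m{g}$ with strictly positive weight (the companion hypothesis that $B$ is $\echoice$-free merely records that $m_\m{m}$ is a genuine model that never receives branch selections). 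Recognizing that $A$ being $\ichoice$-free is the load-bearing hypothesis is the key to closing the guide half.
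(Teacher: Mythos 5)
Your proposal is correct and follows essentially the same route as the paper: your ``workhorse'' correspondence is exactly the paper's \cref{Cor:SoundnessIII} (a well-typed command reduces on given traces iff it evaluates on them with strictly positive weight), and your two directions then mirror the paper's proof step for step---the forward direction is that correspondence applied to $m_\m{m}$, and the converse combines the correspondence with \cref{Cor:SoundnessW} (to obtain $\sigma_\ell : A$) and \cref{Cor:SoundnessII} applied to the guide, whose $\ichoice$-free provided type you rightly identify as the load-bearing hypothesis. One caveat on your re-derivation of the workhorse: it cannot be proved by structural induction on the command, since the body of a recursively called procedure is not a subterm of $\mcall{f}{e}$; the paper proves it (\cref{Lem:SoundnessIII}) by induction on the evaluation/reduction derivation followed by inversion on typing, using the substitution lemma (\cref{Lem:GuideSubst}) in the call case---but as that result is already available, nothing else in your argument needs to change.
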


We can now define a denotation for the guide $m_{\m{g}}$,
accompanied by the model $m_{\m{m}}$ and conditioned
on a concrete observation $\sigma_o : B$, as a measure defined on possible traces:
\[
  \interp{m_{\m{g}} }^{m_\m{m}}_{\sigma_o}(S_\ell) \defeq \int_{S_\ell} [ \mathbf{P}_{m_{\m{m}}}(\sigma_\ell,\sigma_o) \neq 0 ] \cdot \mathbf{P}_{m_{\m{g}}}(\sigma_\ell) \lambda_{\interp{A}}(d\sigma_\ell),
\]
where $S_\ell$ is a measurable set in $\interp{A}$.


\paragraph{Absolute continuity}
A measure $\mu$ is said to be \emph{absolutely continuous} with respect to a measure $\nu$,
if $\mu$ and $\nu$ are defined on the same measurable space, and
$\nu(S) \neq 0$ for every measurable set $S$ for which $\mu(S) \neq 0$.

We prove that for a model-guide pair, guide types serve as certificates for
absolute continuity.

\begin{theorem}\label{The:AbsoluteContinuity}
  Suppose that $A$ is $\ichoice$-free, $B$ is $\echoice$-free,
  \begin{align*}
    \cdot \mid \varnothing; \chtype{latent}{A} & \vdash_\Sg m_{\m{g}} \dotsim \tau_{\m{g}} \mid \varnothing; \chtype{latent}{\one}, \\
    \cdot \mid \chtype{latent}{A} ; \chtype{obs}{B} & \vdash_\Sg m_{\m{m}} \dotsim \tau_{\m{m}} \mid \chtype{latent}{\one} ; \chtype{obs}{\one},
  \end{align*}
  and $\sigma_o : B$ such that
  $\int \mathbf{P}_{m_{\m{m}}}(\sigma_\ell, \sigma_o) \lambda_{\interp{A}}(d\sigma_\ell) > 0$.
  Then
  the measure $\interp{m_{\m{m}}}_{\sigma_o}$ is absolutely continuous with respect to
  the measure $\interp{m_{\m{g}}}^{m_{\m{m}}}_{\sigma_o}$, and vice versa.
\end{theorem}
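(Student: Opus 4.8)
The plan is to reduce mutual absolute continuity to a single pointwise identity between the supports of the two measures' densities, and then discharge that identity with the type-safety theorems. Both $\interp{m_{\m{m}}}_{\sigma_o}$ and $\interp{m_{\m{g}}}^{m_{\m{m}}}_{\sigma_o}$ are measures on the same space $\interp{A}$, and each is given by a nonnegative density against the stock measure $\lambda_{\interp{A}}$: writing $Z \defeq \int \mathbf{P}_{m_{\m{m}}}(\sigma_\ell,\sigma_o)\,\lambda_{\interp{A}}(d\sigma_\ell) > 0$, the posterior has density $f(\sigma_\ell) \defeq \mathbf{P}_{m_{\m{m}}}(\sigma_\ell,\sigma_o)/Z$ and the guide denotation has density $g(\sigma_\ell) \defeq [\,\mathbf{P}_{m_{\m{m}}}(\sigma_\ell,\sigma_o)\neq 0\,]\cdot \mathbf{P}_{m_{\m{g}}}(\sigma_\ell)$. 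Since $Z>0$ we have $\{f\neq 0\} = \{\sigma_\ell : \mathbf{P}_{m_{\m{m}}}(\sigma_\ell,\sigma_o)\neq 0\}$, and by construction $\{g\neq 0\} = \{f\neq 0\}\cap\{\mathbf{P}_{m_{\m{g}}}\neq 0\}$. Because two measures with densities $f,g$ against a common $\lambda_{\interp{A}}$ are mutually absolutely continuous exactly when $\{f\neq 0\}$ and $\{g\neq 0\}$ agree up to $\lambda_{\interp{A}}$-null sets, the entire theorem reduces to showing $\{f\neq 0\} = \{g\neq 0\}$ --- and I expect to obtain this as a strict pointwise equality.

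The inclusion $\{g\neq 0\}\subseteq\{f\neq 0\}$ is immediate from the indicator factor in $g$, so the real content is the reverse inclusion, which is where the typing hypotheses enter. First I would fix any $\sigma_\ell$ with $\mathbf{P}_{m_{\m{m}}}(\sigma_\ell,\sigma_o)\neq 0$; by definition of $\mathbf{P}_{m_{\m{m}}}$ this means $\emptyset \mid \chtype{latent}{\sigma_\ell};\chtype{obs}{\sigma_o} \vdash m_{\m{m}} \evalp{w} v$ holds for some $w,v$, so applying \cref{Cor:SoundnessW} to this evaluation together with the model's typing derivation yields $\sigma_\ell : A$. Next I would invoke \cref{Cor:SoundnessII} on the guide: it is typed with no consumed channel (hence vacuously $\echoice$-free) and provided type $A$, which is $\ichoice$-free by hypothesis, so for the well-typed trace $\sigma_\ell : A$ the theorem produces a guide evaluation with \emph{strictly positive} weight; since $\mathbf{P}_{m_{\m{g}}}$ records that weight, $\mathbf{P}_{m_{\m{g}}}(\sigma_\ell) > 0$. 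This shows $\{f\neq 0\}\subseteq\{\mathbf{P}_{m_{\m{g}}}\neq 0\}$, whence $\{g\neq 0\} = \{f\neq 0\}\cap\{\mathbf{P}_{m_{\m{g}}}\neq 0\} = \{f\neq 0\}$ as required.

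With $E \defeq \{f\neq 0\} = \{g\neq 0\}$ in hand, I would close both directions of absolute continuity by a routine density argument: if $\interp{m_{\m{g}}}^{m_{\m{m}}}_{\sigma_o}(S) = \int_S g\,\lambda_{\interp{A}}(d\sigma_\ell) = 0$ then, as $g>0$ precisely on $E$, we get $\lambda_{\interp{A}}(S\cap E)=0$, and therefore $\interp{m_{\m{m}}}_{\sigma_o}(S) = \int_{S\cap E} f\,\lambda_{\interp{A}}(d\sigma_\ell) = 0$; the opposite implication is symmetric, swapping the roles of $f$ and $g$. The main obstacle is the reverse support inclusion in the second paragraph: the subtlety is that the guide's weight is, in general, a product that may include Iverson-bracket factors arising from branch-selection \textsl{sends}, any of which could make the weight vanish. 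The hypothesis that $A$ is $\ichoice$-free is exactly what forbids such sends on the guide's provided channel, and \cref{Cor:SoundnessII} packages this into the positivity guarantee that drives the argument; everything else is bookkeeping about densities and null sets.
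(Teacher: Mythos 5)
Your proof is correct and follows essentially the same route as the paper's: reduce mutual absolute continuity to the pointwise support identity $\{\mathbf{P}_{m_{\m{m}}}(\cdot,\sigma_o)\neq 0\} = \{[\mathbf{P}_{m_{\m{m}}}(\cdot,\sigma_o)\neq 0]\cdot\mathbf{P}_{m_{\m{g}}}(\cdot)\neq 0\}$, discharge the nontrivial inclusion via \cref{Cor:SoundnessII} (using $\ichoice$-freeness of $A$ on the guide's provided channel), and finish with the standard null-set argument for densities against $\lambda_{\interp{A}}$. If anything, you are slightly more explicit than the paper, which leaves the appeal to \cref{Cor:SoundnessW} (to obtain $\sigma_\ell : A$ before invoking \cref{Cor:SoundnessII}) implicit.
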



\subsection{Soundness of Inference Algorithms}
\label{Se:SoundInference}

We now describe how guide types can help us reason about
inference algorithms. 

\paragraph{Importance sampling (IS)}
IS approximates the posterior distribution by drawing latent variables using the guide program,
and then reweights the samples by their \emph{importance}.
The operational rule below formulates a single step in the algorithm:
given a model program $m_{\m{m}}$, a guide program $m_{\m{g}}$, and a concrete observation $\sigma_o$,
IS performs joint execution of the two programs to draw a sample $\sigma_\ell$ with density $w_{\m{g}}$
and compute $\frac{w_{\m{m}}}{w_{\m{g}}}$ as the importance of $\sigma_\ell$.
\begin{mathpar}\small
  \inferrule
  { \emptyset \mid \varnothing ; \chtype{latent}{\sigma_\ell} \vdash m_{\m{g}} \evalp{w_{\m{g}}} \_ \\
    \emptyset \mid \chtype{latent}{\sigma_\ell} ; \chtype{obs}{\sigma_o} \vdash m_{\m{m}} \evalp{w_\m{m}} \_
  }
  { m_{\m{g}}; m_{\m{m}}; \sigma_o \vdash_{\textsc{is}}^{w_{\m{g}}}  \tuple{\sigma_\ell, \sfrac{w_{\m{m}}}{w_{\m{g}}} } }
\end{mathpar}
By \cref{The:AbsoluteContinuity}, if the model and guide programs are well-typed, then
the posterior $\interp{m_{\m{m}}}_{\sigma_o}$ is absolutely continuous with respect to
$\interp{m_{\m{g}}}^{m_{\m{m}}}_{\sigma_o}$;
thus, IS is able to sample any possible latent variables $\sigma_\ell$ in the posterior.
With the importance ratios, IS can be seen as generating $\sigma_\ell$ with density
$w_{\m{g}} \cdot \frac{w_{\m{m}}}{w_{\m{g}}} = w_{\m{m}}$. 
Thus, IS generates a measure proportional to $\interp{m_{\m{m}}}_{\sigma_o}$.

%

\paragraph{Markov-Chain Monte Carlo (MCMC)}
MCMC uses a transition kernel to generate iteratively a new random sample from an old one.
A popular MCMC algorithm is \emph{Metropolis-Hastings} (MH), which constructs the transition
kernel from a \emph{proposal} subroutine.
%
To implement proposal subroutines in our system, we extend the core calculus such that guidance
traces can be used as first-class data.
Then we implement the proposal subroutine as a procedure $g$ whose argument is a guidance trace
on the channel for latent random variables.
The operational rule below formulates a single step in the MH algorithm;
given a proposal procedure $g$, a model $m_\m{m}$, an observation $\sigma_o$,
and the current latent trace $\sigma_\ell$, MH first performs joint execution of $\mcall{g}{\sigma_\ell}$
and $m_\m{m}$ to generate a new latent trace $\sigma_\ell'$ with density $w_{\m{fwd}}$, 
and then uses the new $\sigma_\ell'$ and the old $\sigma_\ell$ to calculate a \emph{backward} density $w_{\m{bwd}}$.
MH then computes an \emph{acceptance ratio}
$\alpha \defeq \min(1, \frac{w'_\m{m} \cdot w_{\m{bwd}}}{w_\m{m} \cdot w_{\m{fwd}}})$, and accepts
the new sample $\sigma_\ell'$ with probability $\alpha$.
\begin{mathpar}\small
  \inferrule
  { \emptyset \mid \varnothing ; \chtype{latent}{\sigma_\ell'} \vdash \mcall{g}{\sigma_\ell} \evalp{w_{\m{fwd}}} \_ \\
    \emptyset \mid \chtype{latent}{\sigma_\ell'} ; \chtype{obs}{\sigma_o} \vdash m_\m{m} \evalp{w'_\m{m}} \_ \\
    \emptyset \mid \varnothing ; \chtype{latent}{\sigma_\ell} \vdash \mcall{g}{\sigma_\ell'} \evalp{w_{\m{bwd}}} \_ \\
    \emptyset \mid \chtype{latent}{\sigma_\ell} ; \chtype{obs}{\sigma_o} \vdash m_\m{m} \evalp{w_\m{m}} \_
  }
  { g ;m_\m{m}; \sigma_o \vdash_{\textsc{mh}} \sigma_\ell \xRightarrow{w_{\m{fwd}} \cdot \alpha}  \sigma_\ell' }
\end{mathpar}
Similar to IS, MH requires that the command $\mcall{g}{\sigma_\ell}$ be able to sample any
possible latent variables $\sigma_\ell'$ in the posterior.
We prove the soundness of MH by a variant of \cref{The:AbsoluteContinuity}, where the programs do
not need to be closed so that they can reference data in the environment (e.g., the old samples).

\paragraph{Variational inference (VI)}
VI uses optimization to find a candidate from an approximating family of guide programs that
minimizes the distance from the posterior distribution to the guide distribution.
We focus on verifying if the distance is well-defined, whereas VI requires extra conditions
for the optimization problem to be well-formed.
Here, we parameterize the guide $m_{\m{g}, \theta}$ by a vector $\theta \in \Theta$ of parameters,
and use KL divergence as the distance, which is defined by
\[
\mathrm{KL}(\mu \parallel \nu) \defeq \int p_{\mu}(\sigma_\ell) (\log p_{\mu}(\sigma_\ell) - \log p_{\nu}(\sigma_\ell)) \lambda_{\interp{A}}(d\sigma_\ell),
\] 
where $\mu$ and $\nu$ are measures on $\interp{A}$ with
densities $p_{\mu}$ and $p_{\nu}$, respectively,
and $\mu$ is absolutely continuous with respect to $\nu$.
The rule below formulates the computation of KL divergence for a specific $\theta$,
via joint execution of the two programs.
\begin{mathpar}\small
  \inferrule
  { \emptyset \mid \varnothing ; \chtype{latent}{\sigma_\ell} \vdash m_{\m{g},\theta} \evalp{w_{\m{g}}} \_ \\
    \emptyset \mid \chtype{latent}{\sigma_\ell} ; \chtype{obs}{\sigma_o} \vdash m_{\m{m}} \evalp{w_\m{m}} \_
  }
  { m_{\m{g},\theta}; m_{\m{m}}; \sigma_o \vdash_{\textsc{vi}}^{w_{\m{g}}}  \tuple{\sigma_\ell,  \log w_{\m{m}} - \log w_{\m{g}} } }
\end{mathpar}
The rule can be seen as defining a map $\sigma_\ell \mapsto w_{\m{g}} \cdot (\log w_{\m{m}} - \log w_{\m{g}})$,
which is the integrand of the divergence $\mathrm{KL}(\interp{m_{\m{g},\theta}}^{m_{\m{m}}}_{\sigma_o} \parallel \interp{m_{\m{m}}}_{\sigma_o})$.
By \cref{The:AbsoluteContinuity}, if the model and guide programs are well-typed,
then $\interp{m_{\m{g},\theta}}^{m_{\m{m}}}_{\sigma_o}$ is absolutely continuous
with respect to $\interp{m_{\m{m}}}_{\sigma_o}$; thus, the KL divergence used in VI
is well-defined.


\section{Experimental Evaluation}
\label{Se:Evaluation}

\paragraph{Implementation}
We implemented the coroutine-based PPL in OCaml.
Our implementation consists of about 2,000 LOC; it contains a parser,
a type checker with automatic inference of guide types, and a prototype
compiler from our PPL to Pyro~\cite{JMLR:BCJ18}.
Our implementation extends the core calculus with tensors (i.e.,
multi-dimensional matrices) and primitive iteration operators for them.
The prototype compiler supports code generation for importance sampling
and variational inference.
We use the Python package \textsf{greenlet}~\cite{misc:greenlet} to support
coroutines in the compiled code.

\paragraph{Evaluation setup}
We evaluated our implementation to answer the following two research
questions:
\begin{enumerate}
  \item How expressive is the coroutine-based PPL, compared to a state-of-the-art
  probabilistic programming language that ensures soundness of programmable inference~\cite{POPL:LCS20}?
  
  \item How efficient is our implementation, in terms of the time for type
  inference, and the performance of Bayesian inference on the compiled code?
\end{enumerate}
%
For the first question, we obtained 23 benchmarks from prior work~\cite{POPL:LCS20}
and collected 6 new benchmarks. The 29 benchmark programs consist of
(i) example models from Anglican~\cite{AISTATS:WMM14}, Turing~\cite{AISTATS:GXG18},
and Pyro~\cite{JMLR:BCJ18}, as well as
(ii) PCFG models, including a Gaussian-process domain-specific language
(DSL)~\cite{POPL:SCS19} and synthetic models (such as examples shown in this paper).
Compared to prior work~\cite{POPL:LCS20}, a larger subset of benchmark models
are expressible and type-checked in our PPL.
\begin{changebar}
Particularly, our PPL is capable of expressing models with recursion and general conditional
branches, whereas prior work~\cite{POPL:LCS20} is not.
\end{changebar}

For the second question, we ran Bayesian inference on the compiled code,
and compared the performance with non-coroutine-based, but equivalent, Pyro code.
We obtained guide programs from where we obtained the benchmark models,
and then reimplemented them in our PPL; for example, we implemented the encoder component
of a variational autoencoder as the guide program~\cite{JMLR:BCJ18}.
For those benchmark models without guides, we first invoked our PPL to type-check the model
program and infer a guide type for the model, and then implemented a guide program whose type
was the guide type.
The compiled model and guide use Pyro's primitives (such as \texttt{pyro.sample}) to sample
random data and condition on given data, as well as exchange messages and switch control with
each other using the concurrent-programming package \texttt{greenlet}.
We leveraged Pyro's inference engines to carry out importance sampling or variational inference.
Type inference is very fast in practice; our implementation completed the
type-inference phase in several milliseconds on all of the benchmarks.
Our experiments showed that coroutines (implemented via messaging passing) do not introduce significant overhead
in actual Bayesian inference.

The experiments were performed on a machine with an Intel Core i7 3.6GHz processor
and 16GB of RAM under macOS Catalina 10.15.7.

\paragraph{Results}
\cref{Tab:Benchmark} gives an overview of selected benchmark models.
Our benchmarks cover a wide range of Bayesian models, such as linear regression,
Gaussian mixtures, hidden Markov models, Bayesian networks, and variational
autoencoders.
Our benchmarks also include the classic Marsaglia algorithm (which generates a
normal distribution from a uniform distribution), a Poisson-trace algorithm
(shown in \cref{Fig:PoissonTrace}, which generates a Poisson distribution from a uniform distribution), and a
Gaussian-process DSL (which uses a PCFG to generate the kernel function of a
Gaussian process).

\begin{figure}
\centering
\begin{small}
\begin{pseudo}
  \kproc \id{Ptrace}($\lambda$) \kw{consume} \id{latent} \kw{provide} \id{obs} = \\+
    $k$ \kgets \kcall \id{PtraceHelper}($e^{-\lambda}$, $0$, $1$); \\
    \ksampleo{\enormal{k}{0.1}}{obs} \\-
  \\
  \kproc \id{PtraceHelper}($l$, $k$, $p$) \kw{consume} \id{latent} \kw{provide} . = \\+
    $u$ \kgets \ksamplei{\eunif}{latent}; \\
    \kifo{latent} $p \cdot u \le l$ \kthen \\+
      \kreturn{k} \\-
    \kelse \\+
      \kcall \id{PtraceHelper}($l$, $k+1$, $p \cdot u$)
\end{pseudo}  
\end{small}
\caption{An algorithm to generate Poisson-distributed numbers given by~\citet{book:KnuthArtVol2}.}
\label{Fig:PoissonTrace}
\end{figure}

\begin{table}
  \centering
  \small
  \caption{Selected benchmark descriptions.
  \textbf{T?} = is type-checked in our PPL;
  \textbf{LOC} = \#lines of code of the model in our PPL;
  \textbf{TP?} = is type-checked by prior work~\cite{POPL:LCS20}.}
  \label{Tab:Benchmark}
  \begin{tabular}{l l c c c}
    \hline
    \bf Program & \bf Description & \bf T? & \bf LOC & \bf TP? \\
    \hline
    \textsf{lr} & Bayesian Linear Regression & \cmark & 16 & \cmark \\
    \textsf{gmm} & Gaussian Mixture Model & \cmark & 44 & \cmark \\
    \textsf{kalman} & Kalman Smoother & \cmark & 32 & \cmark \\
    \textsf{sprinkler} & Bayesian Network & \cmark & 22 & \cmark \\
    \textsf{hmm} & Hidden Markov Model & \cmark & 31 & \cmark \\
    \textsf{branching} & Random Control Flow & \cmark & 19 & \xmark \\
    \textsf{marsaglia} & Marsaglia Algorithm & \cmark & 22 & \xmark \\
    \textsf{dp} & Dirichlet Process &  \xmark & N/A & \xmark \\
    \textsf{ptrace} & Poisson Trace & \cmark & 11 & \xmark \\
    \textsf{aircraft} & Aircraft Detection & \cmark & 32 & \cmark \\
    \textsf{weight} & Unreliable Weigh & \cmark & 8 & \cmark \\
    \textsf{vae} & Variational Autoencoder & \cmark & 26 & \cmark \\
    \textsf{ex-1} & \cref{Fig:ProgAsCoro} & \cmark & 13 & \xmark \\
    \textsf{ex-2} & \cref{Fig:RecursiveModel} & \cmark & 21 & \xmark \\
    \textsf{gp-dsl} & Gaussian Process DSL & \cmark & 58 & \xmark \\
    \hline
  \end{tabular}
\end{table}

As shown in \cref{Tab:Benchmark}, our coroutine-based PPL is capable of
expressing most of the benchmarks, except those involving stochastic
memoization~\cite{UAI:GMR08}, such as the program \textsf{dp}.
The programs \textsf{branching}, \textsf{marsaglia}, \textsf{ptrace}, and \textsf{ex-1}
have nontrivial branching, and the programs \textsf{marsaglia}, \textsf{ptrace},
\textsf{ex-2}, and \textsf{gp-dsl} define recursive models;
our implementation successfully inferred guide types for these programs,
whereas prior work~\cite{POPL:LCS20} could not express them.
Our implementation derived guide types for 25 of the 29 benchmarks,
whereas prior work was able to express only 18 of them.

\begin{changebar}
For all the benchmarks, we assume that each guide program
samples random variables in the \emph{same} order as its corresponding model program does.
However, this assumption can sometimes be too restrictive: it has been shown that
the ability to allow the model and the guide to sample random variables in \emph{different}
orders is desirable for inference amortization methods~\cite{NIPS:WGZ18}.
Prior work~\cite{POPL:LCS20} allows different sampling orders in the model and the guide,
whereas our system cannot handle such scenarios.
\end{changebar}

\begin{table}
  \centering
  \small
  \caption{Selected performance statistics.
  \textbf{BI} = Bayesian-inference algorithm (IS or VI);
  \textbf{CG (ms)} = time for type inference and code generation in milliseconds;
  \textbf{GLOC} = \#lines of code in compiled code (model + guide);
  \textbf{GI (s)} = time for Bayesian inference on compiled code in seconds;
  \textbf{HLOC} = \#lines of code in handwritten code (model + guide);
  \textbf{HI (s)} = time for Bayesian inference on handwritten code in seconds.}
  \label{Tab:Evaluation}
  \begin{tabular}{l | c c c c | c c}
    \hline
    \bf Program & \bf BI & \bf CG (ms) & \bf GLOC & \bf GI (s) & \bf HLOC & \bf HI (s) \\
    \hline
    \textsf{ex-1} & IS & 0.75 & 57 & 5.44 & 16 & 5.27 \\
    \textsf{branching} & IS & 1.74 & 58 & 8.49 & 16 & 7.48 \\
    \textsf{gmm} & IS & 8.03 & 185 & 64.13 & 38 & 56.00 \\
    \hline
    \textsf{weight} & VI & 0.66 & 35 & 2.76 & 7 & 2.66 \\
    \textsf{vae} & VI & 10.36 & 72 & 34.96 & 26 & 32.69 \\
    \hline
  \end{tabular}
\end{table}

\cref{Tab:Evaluation} presents performance statistics of selected benchmark programs.
We evaluated our PPL's performance under two criteria:
(i) the time for type inference and code generation, and
(ii) the time for Bayesian inference compared to handwritten inference code under the
same set of hyperparameters (e.g., iteration rounds, optimization algorithms, and initial
values of parameters).
Our experiments showed that our implementation usually completes type inference and
code generation in several milliseconds, and the compiled code, although using coroutines,
has similar performance to handwritten inference code.


\section{Related Work}
\label{Se:RelatedWork}

\paragraph{Sound Bayesian inference}
Most closely related to our work are techniques for reasoning about soundness
of trace-based programmable inference.
\citet{POPL:LYR19} developed a static analysis of stochastic variational inference
with guide programs, which describe custom approximating families in Pyro.
\begin{changebar}
Their analysis supports nontrivial features of Pyro, such as tensor manipulation and
\emph{plates}, i.e.,
\end{changebar}
vectors of conditionally independent samples.
Their approach aims at proving that the model and guide programs have
the same support and satisfy differentiability-related conditions.
Their static analysis does not handle the case when a conditional statement
determines the set of random samples.
\citet{POPL:LCS20} proposed \emph{trace types} as precise signatures for
sampling traces of probabilistic programs, and then used the type system
to prove absolute continuity in multiple kinds of inference algorithms.
%
%
Trace types can be seen as a type-and-effect system, where a trace type records
the precise set of samples drawn by a single program.
\begin{changebar}
Trace types support higher-order functions, \emph{stochastic} branches that can influence the set of random samples,
as well as three forms of loops, including stochastic while-loops with an unbounded number of iterations, but not general recursion.
Because the value
of a conditional predicate cannot be determined in general at static-analysis time, trace
types do not support general conditional statements that can influence the set of random samples.
Both \citet{POPL:LYR19}'s and \citet{POPL:LCS20}'s approach allow the model and the guide to sample
random variables in different orders.
\end{changebar}
In this paper, we propose a new PPL that guarantees absolute
continuity between a model-guide pair, and features general programming
constructs, including recursion and branching.
A key innovation of our work is the coroutine-based paradigm of writing
inference code;
this paradigm makes the relational reasoning of the support-match property
explicit, and in particular enables precise analysis of complex control flow.
\begin{changebar}
However, compared to prior work, our system only supports
scenarios where the model and the guide sample random variables in the same order.
\end{changebar}

There has been a line of work on validating Monte-Carlo inference algorithms.
\citet{POPL:SKV17} developed a semantic framework to verify the
soundness of Monte-Carlo inference algorithms with generic proposal distributions. 
\citet{misc:AYC18} presented a type system for verifying hand-coded Monte-Carlo
algorithms that explicitly manipulate densities, rather than use proposal distributions.
For MCMC methods, \citet{ICFP:BLG16} and \citet{LIPICS:HNR15} developed provably correct MH algorithms.
\citet{ESOP:CP19} proposed an \emph{intensional} semantics, which captures
execution traces of programs, to validate an incremental MH algorithm.
Several systems~\cite{misc:AYC18,POPL:LCS20,PLDI:HTM17,phd:Bonawitz08} studied
sound combinators for kernels used by MCMC.
In contrast to the aforementioned work, our PPL is based on trace-based programmable inference.
It would be interesting to
develop programmable versions of those sound inference algorithms
in our PPL.

\citet{FLOPS:NCR16} and \citet{UAI:ZS17} validated the soundness of program transformations
in Hakaru, which contains a programmable MH algorithm.
The development of Hakaru is not centered around sample traces, and it
uses \emph{symbolic disintegration}~\cite{POPL:SR17,NSSOR:CP97} to calculate
the marginal densities for computing the acceptance ratio in an MH step.
In this paper, we focus on a trace-based scheme for programmable inference.
Establishing the relationship among different schemes of programmable inference
is an interesting future research direction.

\paragraph{Session types}
Honda et al.~\cite{CONCUR:Honda93,ESOP:HVK98} introduced session types to
prescribe binary communication protocols for message-passing
processes.
Session types can be interpreted either classically~\cite{ICFP:Wadler12}, or
intuitionistically~\cite{CONCUR:CP10,JMSCS:CPT16}.
To enable non-binary communication, researchers proposed
multiparty session types~\cite{POPL:HYC08,SFM:CDM15,POPL:SY19}.
%
The tail-recursive structure of standard session types imposes communication
protocols that can be described by a \emph{regular} language.
Recently, several systems have been developed to go beyond tail-recursive protocols,
such as context-free~\cite{ICFP:TV16}, label-dependent~\cite{POPL:TV19}, and nested~\cite{misc:DDM20}
session types.

In our development of guide types, we took inspiration from the structuring principle
of session types.
Compared to session types, guide types have different semantics (i.e., sending
and receiving random samples drawn from probability distributions),
have simpler forms (i.e., no process spawning or higher-order channels),
and enjoy an efficient type-inference algorithm, which can also analyze
non-tail-recursive communication protocols.
Developing a truly concurrent probabilistic programming system, and
concurrent Bayesian inference algorithms with general session types,
would be interesting future work.

\section{Conclusion}
\label{Se:Conclusion}

We have presented a new probabilistic programming language that supports
programmable Bayesian inference, and guarantees model-guide absolute continuity,
thereby ensuring key soundness properties of multiple kinds of inference algorithms.
Our language implements the model and guide programs as \emph{coroutines},
and we develop \emph{guide types} to prescribe the communication protocols between
coroutines.
We have proved that well-typed model and guide coroutines execute safely, and they
are guaranteed to enjoy absolute continuity.
We have also developed an efficient type-inference algorithm that reconstructs
guide types directly from the code.
Finally, we have implemented our language with a prototype compiler to Pyro, and
evaluated our implementation on a suite of diverse probabilistic models.


\begin{acks}
This article is based on research supported, in part, by a gift from Rajiv and Ritu Batra;
by ONR under grants N00014-17-1-2889 and N00014-19-1-2318;
by DARPA under AA contract FA8750-18-C0092;
and by the NSF under SaTC award 1801369, SHF awards 1812876 and 2007784, and CAREER
award 1845514.
Any opinions, findings, and conclusions or recommendations
expressed in this publication are those of the authors,
and do not necessarily reflect the views of the sponsoring
agencies.
\end{acks}

\balance
\bibliography{db,misc}

\iflong
\clearpage
\appendix
\onecolumn
\setlist{nosep}

\section{Preliminaries on Measure Theory}
\label{Se:MeasureTheory}

Interested readers can refer to textbooks and notes in the
literature~\cite{book:Billingsley12,book:Williams91} for more details.

A \emph{measurable space} is a pair $(S,\calS)$, where $S$ is a nonempty set, and $\calS$ is a \emph{$\sigma$-algebra} on $S$, i.e., a family of subsets of $S$ that contains $\emptyset$ and is closed under complement and countable unions.
The smallest $\sigma$-algebra that contains a family $\calA$ of subsets of $S$ is said to be \emph{generated} by $\calA$, denoted by $\sigma(\calA)$.
Every topological space $(S,\tau)$ admits a \emph{Borel $\sigma$-algebra}, given by $\sigma(\tau)$.
This gives canonical $\sigma$-algebras on $\bbR$, $\bbQ$, $\bbN$, etc.
A measurable space $(S,\calS)$ is said to be a \emph{standard Borel space}, if $\calS$ is a Borel
$\sigma$-algebra generated by a complete metric space on $S$.
A measurable space $(S,\calS)$ is a standard Borel space if and only if it is isomorphic to $\bbR$ or
a subset of $\bbN$.
A function $f : S \to T$, where $(S,\calS)$ and $(T,\calT)$ are measurable spaces, is said to be \emph{$(\calS,\calT)$-measurable}, if $f^{-1}(B) \in \calS$ for each $B \in \calT$.
If $T = \bbR$, we tacitly assume that the Borel $\sigma$-algebra is defined on $T$, and we simply call $f$ \emph{measurable}, or a \emph{random variable}.
Measurable functions form a vector space, and products, maxima, and limiting operations preserve measurability.

A \emph{measure} $\mu$ on a measurable space $(S,\calS)$ is a mapping from $\calS$ to $[0,\infty]$ such that
(i) $\mu(\emptyset) = 0$, and
(ii) for all pairwise-disjoint $\{A_n\}_{n \in \bbZ^+}$ in $\calS$, it holds that $\mu(\bigcup_{n \in \bbZ^+} A_i) = \sum_{n \in \bbZ^+} \mu(A_i)$.
The triple $(S,\calS,\mu)$ is called a \emph{measure space}.
A measure $\mu$ is called a \emph{probability} measure, if $\mu(S) = 1$.
A measure $\mu$ is called a \emph{sub-probability} measure, if $\mu(S) \le 1$.
A measure $\mu$ is called \emph{$\sigma$-finite}, if $S$ is the countable union of
measurable sets with finite measure.
We denote the collection of probability measures on $(S,\calS)$ by $\bbD(S,\calS)$.
For each $x \in S$, the \emph{Dirac measure} $\delta(x)$ is defined as $\lambda A. [x \in A]$.
For measures $\mu$ and $\nu$, we write $\mu + \nu$ for the measure $\lambda A. \mu(A) + \nu(A)$.
For measure $\mu$ and scalar $c \ge 0$, we write $c \cdot \mu$ for the measure $\lambda A. c \cdot \mu(A)$.

The \emph{integral} of a measurable function $f$ on $A \in \calS$ with respect to a measure $\mu$ on $(S,\calS)$ is defined following Lebesgue's theory and is denoted by $\mu(f;A)$, $\int_A f d\mu$, or $\int_A f(x) \mu(dx)$.
If $A = S$, we tacitly omit $A$ from the notations.
For each $A \in \calS$, it holds that $\mu(f; A) = \mu(f \mathrm{I}_A)$, where $\mathrm{I}_A$ is the indicator function for $A$. 
%

Let $f$ be a nonnegative measurable function on $(S,\calS)$.
We can transform a measure $\mu$ on $(S,\calS)$ through $f$ by integration:
$f\mu \defeq \lambda A. \mu(f; A)$.
If $\nu$ denotes the measure $f\mu$, we say that $\nu$ has \emph{density} $f$ relative to $\mu$,
and express this by $\frac{d \nu}{d \mu} = f$.
In this case, we have for $A \in \calS$, $\mu(A) = 0$ implies that $\nu(A) = 0$,
i.e., $\nu$ is \emph{absolutely continuous} with respect to $\mu$.

A \emph{kernel} from a measurable space $(S,\calS)$ to another $(T,\calT)$ is a mapping from $S \times \calT$ to $[0,\infty]$ such that:
(i) for each $x \in S$, the function $\lambda B. \kappa(x,B)$ is a measure on $(T,\calT)$, and
(ii) for each $B \in \calT$, the function $\lambda x. \kappa(x,B)$ is measurable.
We write $\kappa : (S,\calS) \rightsquigarrow (T,\calT)$ to declare that $\kappa$ is a kernel from $(S,\calS)$ to $(T,\calT)$.
Intuitively, kernels describe measure transformers from one measurable space to another.
A kernel $\kappa$ is called a \emph{probability} kernel, if $\kappa(x,T) = 1$ for all $x \in S$.
We denote the collection of probability kernels from $(S,\calS)$ to $(T,\calT)$ by $\bbK((S,\calS),(T,\calT))$.
If the two measurable spaces coincide, we simply write $\bbK(S,\calS)$.
We can ``push-forward'' a measure $\mu$ on $(S,\calS)$ to a measure on $(T,\calT)$ through a kernel $\kappa: (S,\calS) \rightsquigarrow (T,\calT)$ by integration:\footnote{
We use a \emph{monad bind} notation $\bind$ here.
Indeed, the category of measurable spaces admits a monad with sub-probability measures~\cite{CATA:Giry82,ENTCS:Panangaden99}.
}
$
\mu \bind \kappa \defeq  \lambda B. \int_S  \kappa(x,B) \mu(dx).
$

The \emph{product} of two measurable spaces $(S,\calS)$ and $(T,\calT)$ is defined as $(S,\calS) \otimes (T,\calT) \defeq (S \times T, \calS \otimes \calT)$, where $\calS \otimes \calT$ is the smallest $\sigma$-algebra that makes coordinate maps measurable, i.e., $\sigma( \{ \pi_1^{-1}(A) \mid A \in \calS\} \cup \{ \pi_2^{-1}(B) \mid B \in \calT \}  ) )$, where $\pi_i$ is the $i$-the coordinate map.
If $\mu_1$ and $\mu_2$ are two measures on $(S,\calS)$ and $(T,\calT)$, respectively,
then there exists a measure on $(S,\calS) \otimes (T,\calT)$, called the \emph{product measure} and written
$\mu_1 \otimes \mu_2$, such that $(\mu_1 \otimes \mu_2)(A \times B) = \mu_1(A)\mu_2(B)$.
When $\mu_1$ and $\mu_2$ are $\sigma$-finite, the product measure is uniquely defined and also $\sigma$-finite.

The \emph{coproduct} (i.e., disjoint union) of two measurable space $(S,\calS)$ and $(T,\calT)$ is defined as
$(S,\calS) \amalg (T,\calT) \defeq (S \sqcup T, \calS \amalg \calT)$, where
$S \sqcup T \defeq \{ i_1(x) \mid x \in S \} \cup \{ i_1(y) \mid y \in T \}$, $i_1 \defeq \lambda x. \tuple{1,x}$,
$i_2 \defeq \lambda y. \tuple{2,y}$, and
$\calS \amalg \calT$ is the smallest $\sigma$-algebra that makes injection maps measurable, i.e.,
$\sigma( \{ i_1^{-1}(A) \mid A \in \calS \} \cup \{ i_2^{-1}(B) \mid B \in \calT \}) = \{ i_1^{-1}(A) \cup i_2^{-1}(B) \mid A \in \calS \wedge B \in \calT \}$.
If $\mu_1$ and $\mu_2$ are two measures on $(S,\calS)$ and $(T,\calT)$, respectively,
then we can define their \emph{coproduct measure}, written $\mu_1 \amalg \mu_2$,
as $(\mu_1 \amalg \mu_2)(i_1^{-1}(A) \cup i_2^{-1}(B)) \defeq \mu_1(A) + \mu_2(B)$, for any $A \in \calS$,
$B \in \calT$.
We can easily extend the binary coproducts to arbitary coproducts.
Particularly, the countable coproduct of $\sigma$-finite measures is still $\sigma$-finite.

Standard Borel spaces are closed under countable products and coproducts.
We will use this property in our construction of semantic domains in \cref{Se:FullInference}.


\section{Full Development of Guide Types}
\label{Se:FullSpec}

\cref{Fig:FullEvalElab} presents a complete list of evaluation rules for expressions and commands.
\cref{Fig:FullTypingElab} presents a complete list of typing rules for expressions, commands, and programs.
\cref{Fig:TypingAux} presents typing rules for values, environments, and guidance traces.
In the rest of this section, we prove type safety of guide types.

\begin{figure*}
\centering 
\begin{mathpar}\small
  \Rule{EE:Var}
  { v =  V(x)
  }
  { V \vdash x \evalto v }
  \and
  \Rule{EE:Triv}
  {
  }
  { V \vdash \etriv \evalto \etriv }
  \and
  \Rule{EE:True}
  { }
  { V \vdash \etrue \evalto \etrue }
  \and
  \Rule{EE:False}
  {
  }
  { V \vdash \efalse \evalto \efalse }
  \and
  \Rule{EE:Cond:True}
  { V \vdash e \evalto \etrue \\
    V \vdash e_1 \evalto v
  }
  { V \vdash \econd{e}{e_1}{e_2} \evalto v }
  \and
  \Rule{EE:Cond:False}
  { V \vdash e \evalto \efalse \\
    V \vdash e_2 \evalto v
  }
  { V \vdash \econd{e}{e_1}{e_2} \evalto v }
  \and
  \Rule{EE:Real}
  {
  }
  { V \vdash \bar{r} \evalto \bar{r} }
  \and
  \Rule{EE:Nat}
  {
  }
  { V \vdash \bar{n} \evalto \bar{n} }
  \and
  \Rule{EE:Op}
  { V \vdash e_1 \evalto v_1 \\
    V \vdash e_2 \evalto v_2 \\
    v = v_1 \mathbin{\Diamond} v_2
  }
  { V \vdash \ebinop{\Diamond}{e_1}{e_2} \evalto v }
  \and
  \Rule{EE:Abs}
  {
  }
  { V \vdash \eabs{x}{\tau}{e} \evalto \vclo{V}{\eabs{x}{\tau}{e}} }
  \and
  \Rule{EE:App}
  { V \vdash e_1 \evalto \vclo{V_o}{\eabs{x}{\tau}{e_o}} \\
    V \vdash e_2 \evalto v_2 \\
    V_o[x \mapsto v_2] \vdash e_o \evalto v
  }
  { V \vdash \eapp{e_1}{e_2} \evalto v }
  \and
  \Rule{EE:Let}
  { V \vdash e_1 \evalto v_1 \\
    V[x \mapsto v_1] \vdash e_2 \evalto v_2
  }
  { V \vdash \elet{e_1}{x}{e_2} \evalto v_2 }
  \and
  \Rule{EE:Ber}
  { V \vdash e \evalto v
  }
  { V \vdash \eber{e} \evalto \eber{v} }
  \and
  \Rule{EE:Unif}
  {
  }
  { V \vdash \eunif \evalto \eunif }
  \and
  \Rule{EE:Beta}
  { V \vdash e_1 \evalto v_1 \\
    V \vdash e_2 \evalto v_2
  }
  { V \vdash \ebeta{e_1}{e_2} \evalto \ebeta{v_1}{v_2} }
  \and
  \Rule{EE:Gamma}
  { V \vdash e_1 \evalto v_1 \\
    V \vdash e_2 \evalto v_2
  }
  { V \vdash \egamma{e_1}{e_2} \evalto \egamma{v_1}{v_2} }
  \and
  \Rule{EE:Normal}
  { V \vdash e_1 \evalto v_1 \\
    V \vdash e_2 \evalto v_2
  }
  { V \vdash \enormal{e_1}{e_2} \evalto \enormal{v_1}{v_2} }
  \and
  \Rule{EE:Cat}
  { \Forall{i \in \{1,\cdots, n\}} V \vdash e_i \evalto v_i }
  { V \vdash \ecat{e_1, \cdots, e_n} \evalto \ecat{v_1,\cdots,v_n} }
  \and
  \Rule{EE:Geo}
  { V \vdash e \evalto v
  }
  { V \vdash \egeo{e} \evalto \egeo{v} }
  \and
  \Rule{EE:Pois}
  { V \vdash e \evalto v
  }
  { V \vdash \epois{e} \evalto \epois{v} }
  \\
  \Rule{EM:Ret}
  { V \vdash e \evalto v
  }
  { V \mid \chtype{a}{[]} ; \chtype{b}{[]} \vdash \mret{e} \evalp{1} v }
  \and
  \Rule{EM:Bnd}
  { V \mid \chtype{a}{\sigma_a} ; \chtype{b}{\sigma_b} \vdash m_1 \evalp{w_1} v_1 \\
    V[x \mapsto v_1] \mid \chtype{a}{\sigma_a'} ; \chtype{b}{\sigma_b'} \vdash m_2 \evalp{w_2} v_2
  }
  { V \mid \chtype{a}{\sigma_a \concat \sigma_a'} ; \chtype{b}{\sigma_b \concat \sigma_b'} \vdash \mbnd{m_1}{x}{m_2} \evalp{w_1 \cdot w_2} v_2 }
  \and
  \Rule{EM:Sample:Recv:L}
  { V \vdash e \evalto d \\
    v \in d.\mathrm{support} \\
    w = d.\mathrm{density}(v)
  }
  { V \mid \chtype{a}{[\msgobjl{v}]} ; \chtype{b}{[]} \vdash \msamplei{e}{a} \evalp{w} v }
  \and
  \Rule{EM:Sample:Send:R}
  { V \vdash e \evalto d \\
    v \in d.\mathrm{support} \\
    w = d.\mathrm{density}(v)
  }
  { V \mid \chtype{a}{[]} ; \chtype{b}{[\msgobjl{v}]} \vdash \msampleo{e}{b} \evalp{w} v }
  \and
  \Rule{EM:Sample:Send:L}
  { V \vdash e \evalto d \\
    v \in d.\mathrm{support} \\
    w = d.\mathrm{density}(v)
  }
  { V \mid \chtype{a}{[\msgobjr{v}]} ; \chtype{b}{[]} \vdash \msampleo{e}{a} \evalp{w} v }
  \and
  \Rule{EM:Sample:Recv:R}
  { V \vdash e \evalto d \\
    v \in d.\mathrm{support} \\
    w = d.\mathrm{density}(v)
  }
  { V \mid \chtype{a}{[]} ; \chtype{b}{[\msgobjr{v}]} \vdash \msamplei{e}{b} \evalp{w} v }
  \and
  \Rule{EM:Cond:Recv:L}
  { i = \m{ite}(v_a, 1, 2) \\
    V \mid \chtype{a}{\sigma_a} ; \chtype{b}{\sigma_b} \vdash m_i \evalp{w} v
  }
  { V \mid \chtype{a}{[\dirobjl{v_a}] \concat \sigma_a} ; \chtype{b}{\sigma_b} \vdash \mbranchi{m_1}{m_2}{a} \evalp{w} v }
  \and
  \Rule{EM:Cond:Send:R}
  { V \vdash e \evalto v_e \\
    i = \m{ite}(v_b, 1, 2) \\
    V \mid \chtype{a}{\sigma_a} ; \chtype{b}{\sigma_b} \vdash m_i \evalp{w} v
  }
  { V \mid \chtype{a}{\sigma_a} ; \chtype{b}{[\dirobjl{v_b}] \concat \sigma_b} \vdash \mbrancho{e}{m_1}{m_2}{b} \evalp{w \cdot [v_b = v_e]} v }
  \and
  \Rule{EM:Cond:Send:L}
  { V \vdash e \evalto v_e \\
    i = \m{ite}(v_a, 1, 2) \\
    V \mid \chtype{a}{\sigma_a} ; \chtype{b}{\sigma_b} \vdash m_i \evalp{w} v
  }
  { V \mid \chtype{a}{[\dirobjr{v_a}] \concat \sigma_a} ; \chtype{b}{\sigma_b} \vdash \mbrancho{e}{m_1}{m_2}{a} \evalp{w \cdot [v_a = v_e]} v }
  \and
  \Rule{EM:Cond:Recv:R}
  { i = \m{ite}(v_b, 1, 2) \\
    V \mid \chtype{a}{\sigma_a} ; \chtype{b}{\sigma_b} \vdash m_i \evalp{w} v
  }
  { V \mid \chtype{a}{\sigma_a} ; \chtype{b}{[\dirobjr{v_b}] \concat \sigma_b} \vdash \mbranchi{m_1}{m_2}{b} \evalp{w} v }
  \and
  \Rule{EM:Call}
  { \calD(f) = \fundec{f}{\tau_1}{\tau_2}{x_f}{m_f}{a}{b} \\
    V \vdash e \evalto v_1 \\
    \emptyset[x_f \mapsto v_1] \mid \chtype{a}{\sigma_a} ; \chtype{b}{\sigma_b} \vdash m_f \evalp{w} v_2
  }
  { V \mid \chtype{a}{[\foldobj] \concat \sigma_a} ; \chtype{b}{[\foldobj] \concat \sigma_b} \vdash \mcall{f}{e} \evalp{w} v_2 }  
\end{mathpar}
\caption{Evaluation rules for expressions and commands.}
\label{Fig:FullEvalElab}
\end{figure*}

\begin{figure*}
\centering
\begin{mathpar}\small
  \Rule{TE:Var}
  {
  }
  { \Gm, x : \tau \vdash x : \tau }
  \and
  \Rule{TE:Triv}
  {
  }
  { \Gm \vdash \etriv : \tunit }
  \and
  \Rule{TE:True}
  { 
  }
  { \Gm \vdash \etrue : \tbool }
  \and
  \Rule{TE:False}
  {
  }
  { \Gm \vdash \efalse : \tbool }
  \and
  \Rule{TE:Cond}
  { \Gm \vdash e : \tbool \\
    \Gm \vdash e_1 : \tau \\
    \Gm \vdash e_2 : \tau
  }
  { \Gm \vdash \econd{e}{e_1}{e_2} : \tau }
  \and
  \Rule{TE:UReal}
  { r \in (0, 1)
  }
  { \Gm \vdash \bar{r} : \tureal }
  \and
  \Rule{TE:PReal}
  { r > 0
  }
  { \Gm \vdash \bar{r} : \tpreal }
  \and
  \Rule{TE:Real}
  {
  }
  { \Gm \vdash \bar{r} : \treal }
  \and
  \Rule{TE:FNat}
  { n < m
  }
  { \Gm \vdash \bar{n} : \tnat_m }
  \and
  \Rule{TE:Nat}
  {
  }
  { \Gm \vdash \bar{n} : \tnat }
  \and
  \Rule{TE:Op}
  { \Gm \vdash e_1 : \Diamond.\mathrm{arg}_1 \\
    \Gm \vdash e_2 : \Diamond.\mathrm{arg}_2
  }
  { \Gm \vdash \ebinop{\Diamond}{e_1}{e_2} : \Diamond.\mathrm{res} }
  \and
  \Rule{TE:Abs}
  { \Gm, x : \tau \vdash e : \tau'
  }
  { \Gm \vdash \eabs{x}{\tau}{e} : \tau \to \tau' }
  \and
  \Rule{TE:App}
  { \Gm \vdash e_1 : \tau_1 \to \tau_2 \\
    \Gm \vdash e_2 : \tau_1
  }
  { \Gm \vdash \eapp{e_1}{e_2} : \tau_2 }
  \and
  \Rule{TE:Let}
  { \Gm \vdash e_1  :\tau_1 \\
    \Gm,x:\tau_1 \vdash e_2 : \tau_2
  }
  { \Gm \vdash \elet{e_1}{x}{e_2}  : \tau_2 }
  \and
  \Rule{TE:Ber}
  { \Gm \vdash e  : \tureal
  }
  { \Gm \vdash \eber{e} : \tdist{\tbool} }
  \and
  \Rule{TE:Unif}
  {
  }
  { \Gm \vdash \eunif : \tdist{\tureal} }
  \and
  \Rule{TE:Beta}
  { \Gm \vdash e_1 : \tpreal \\
    \Gm \vdash e_2 : \tpreal
  }
  { \Gm \vdash \ebeta{e_1}{e_2} : \tdist{\tureal} }
  \and
  \Rule{TE:Gamma}
  { \Gm \vdash e_1 : \tpreal \\
    \Gm \vdash e_2 : \tpreal
  }
  { \Gm \vdash \egamma{e_1}{e_2} : \tdist{\tpreal} }
  \and
  \Rule{TE:Normal}
  { \Gm \vdash e_1 : \treal \\
    \Gm \vdash e_2 : \tpreal
  }
  { \Gm \vdash \enormal{e_1}{e_2} : \tdist{\treal} }
  \and
  \Rule{TE:Cat}
  { \Forall{i \in \{1, \cdots, n\}} \Gm \vdash e_i : \tpreal
  }
  { \Gm \vdash \ecat{e_1,\cdots,e_n} : \tdist{\tnat_n} }
  \and
  \Rule{TE:Geo}
  { \Gm \vdash e : \tureal
  }
  { \Gm \vdash \egeo{e} : \tdist{\tnat} }
  \and
  \Rule{TE:Pois}
  { \Gm \vdash e : \tpreal
  }
  { \Gm \vdash \epois{e} : \tdist{\tnat} }
  \\
  \Rule{TM:Ret}
  { \Gm \vdash e : \tau
  }
  { \Gm \mid \chtype{a}{A} ; \chtype{b}{B} \vdash \mret{e} \dotsim \tau \mid \chtype{a}{A} ; \chtype{b}{B}  }
  \and
  \Rule{TM:Bnd}
  { \Gm \mid \chtype{a}{A} ; \chtype{b}{B} \vdash m_1 \dotsim \tau_1 \mid \chtype{a}{A'} ; \chtype{b}{B'} \\\\
    \Gm, x : \tau_1 \mid \chtype{a}{A'} ; \chtype{b}{B'} \vdash m_2 \dotsim \tau_2 \mid \chtype{a}{A''} ; \chtype{b}{B''}
  }
  { \Gm \mid \chtype{a}{A} ; \chtype{b}{B} \vdash \mbnd{m_1}{x}{m_2} \dotsim \tau_2 \mid \chtype{a}{A''} ; \chtype{b}{B''} }
  \and
  \Rule{TM:Sample:Recv:L}
  { \Gm \vdash e : \tdist{\tau}
  }
  { \Gm \mid \chtype{a}{\tau \wedge A} ; \chtype{b}{B} \vdash \msamplei{e}{a} \dotsim \tau \mid \chtype{a}{A} ; \chtype{b}{B} }
  \and
  \Rule{TM:Sample:Send:R}
  { \Gm \vdash e : \tdist{\tau}
  }
  { \Gm \mid \chtype{a}{A} ; \chtype{b}{ \tau \wedge B} \vdash \msampleo{e}{b} \dotsim \tau \mid \chtype{a}{A} ; \chtype{b}{B} }
  \and
  \Rule{TM:Sample:Send:L}
  { \Gm \vdash e : \tdist{\tau}
  }
  { \Gm \mid \chtype{a}{\tau \supset A} ; \chtype{b}{B} \vdash \msampleo{e}{a} \dotsim \tau \mid \chtype{a}{A} ; \chtype{b}{B} }
  \and
  \Rule{TM:Sample:Recv:R}
  { \Gm \vdash e : \tdist{\tau}
  }
  { \Gm \mid \chtype{a}{A} ; \chtype{b}{\tau \supset B} \vdash \msamplei{e}{b} \dotsim \tau \mid \chtype{a}{A} ; \chtype{b}{B} }
  \and
  \Rule{TM:Cond:Recv:L}
  { \Gm \mid \chtype{a}{A_1} ; \chtype{b}{B} \vdash m_1 \dotsim \tau \mid \chtype{a}{A'} ; \chtype{b}{B'} \\\\
    \Gm \mid \chtype{a}{A_2} ; \chtype{b}{B} \vdash m_2 \dotsim \tau \mid \chtype{a}{A'} ; \chtype{b}{B'}
  }
  { \Gm \mid \chtype{a}{A_1 \ichoice A_2} ; \chtype{b}{B} \vdash \mbranchi{m_1}{m_2}{a} \dotsim \tau \mid \chtype{a}{A'} ; \chtype{b}{B'} }
  \enskip
  \Rule{TM:Cond:Send:R}
  { \Gm \vdash e : \tbool \\
    \Gm \mid \chtype{a}{A} ; \chtype{b}{B_1} \vdash m_1 \dotsim \tau \mid \chtype{a}{A'} ; \chtype{b}{B'} \\\\
    \Gm \mid \chtype{a}{A} ; \chtype{b}{B_2} \vdash m_2 \dotsim \tau \mid \chtype{a}{A'} ; \chtype{b}{B'}
  }
  { \Gm \mid \chtype{a}{A} ; \chtype{b}{B_1 \ichoice B_2} \vdash \mbrancho{e}{m_1}{m_2}{b} \dotsim \tau \mid \chtype{a}{A'} ; \chtype{b}{B'} }
  \and
  \Rule{TM:Cond:Send:L}
  { \Gm \vdash e : \tbool \\
    \Gm \mid \chtype{a}{A_1} ; \chtype{b}{B} \vdash m_1 \dotsim \tau \mid \chtype{a}{A'} ; \chtype{b}{B'} \\\\
    \Gm \mid \chtype{a}{A_2} ; \chtype{b}{B} \vdash m_2 \dotsim \tau \mid \chtype{a}{A'} ; \chtype{b}{B'}
  }
  { \Gm \mid \chtype{a}{A_1 \echoice A_2} ; \chtype{b}{B} \vdash \mbrancho{e}{m_1}{m_2}{a} \dotsim \tau \mid \chtype{a}{A'} ; \chtype{b}{B'} }
  \enskip
  \Rule{TM:Cond:Recv:R}
  { \Gm \mid \chtype{a}{A} ; \chtype{b}{B_1} \vdash m_1 \dotsim \tau \mid \chtype{a}{A'} ; \chtype{b}{B'} \\\\
    \Gm \mid \chtype{a}{A} ; \chtype{b}{B_2} \vdash m_2 \dotsim \tau \mid \chtype{a}{A'} ; \chtype{b}{B'}
  }
  { \Gm \mid \chtype{a}{A} ; \chtype{b}{B_1 \echoice B_2} \vdash \mbranchi{m_1}{m_2}{b} \dotsim \tau \mid \chtype{a}{A'} ; \chtype{b}{B'} }
  \and
  \Rule{TM:Call}
  { \Sg(f) = \tau_1 \leadsto \tau_2 \mid \chtype{a}{T_a} ; \chtype{b}{T_b} \\
    \Gm \vdash e : \tau_1
  }
  { \Gm \mid \chtype{a}{T_a[A]} ; \chtype{b}{T_b[B]} \vdash \mcall{f}{e} \dotsim \tau_2 \mid \chtype{a}{A} ; \chtype{b}{B} }
  \\
  \Rule{TP:Dec}
  { \m{typedef}(T_a.X_a.A), \m{typedef}(T_b.X_b.B) \in \calT \\\\
    x_f : \tau_1 \mid \chtype{a}{A} ; \chtype{b}{B} \vdash_\Sg m_f \dotsim \tau_2 \mid \chtype{a}{X_a} ; \chtype{b}{X_b}
  }
  { \vdash_\Sg \fundec{f}{\tau_1}{\tau_2}{x_f}{m_f}{a}{b} : \tau_1 \leadsto \tau_2 \mid \chtype{a}{T_a} ; \chtype{b}{T_b} }
  \and
  \Rule{TP:Proc}
  { \calD = \many{\fundec{f_i}{\tau_i}{\rho_i}{x_i}{m_i}{a_i}{b_i}} \\
    \Forall{i} 
    {} \vdash_{\Sg} \fundec{f_i}{\tau_i}{\rho_i}{x_i}{m_i}{a_i}{b_i} : \Sg(f_i)
  }
  { \vdash \calD : \Sg }
\end{mathpar}
\caption{Typing rules for expressions, commands, and programs.}
\label{Fig:FullTypingElab}
\end{figure*}

\begin{figure*}
\centering
\begin{mathpar}\small
  \Rule{TV:Triv}
  {
  }
  { \etriv : \tunit }
  \and
  \Rule{TV:True}
  {
  }
  { \etrue : \tbool }
  \and
  \Rule{TV:False}
  {
  }
  { \efalse : \tbool }
  \and
  \Rule{TV:UReal}
  { r \in (0, 1) }
  { \bar{r} : \tureal }
  \and
  \Rule{TV:PReal}
  { r > 0
  }
  { \bar{r} : \tpreal }
  \and
  \Rule{TV:Real}
  {
  }
  { \bar{r} : \treal }
  \and
  \Rule{TV:FNat}
  { n < m
  }
  { \bar{n} : \tnat_m }
  \and
  \Rule{TV:Nat}
  {
  }
  { \bar{n} : \tnat }
  \and
  \Rule{TV:Clo}
  { V : \Gm \\
    \Gm \vdash \eabs{x}{\tau}{e} : \tau \to \tau'
  }
  { \vclo{V}{\eabs{x}{\tau}{e}} : \tau \to \tau' }
  \and
  \Rule{TV:Ber}
  { v  : \tureal
  }
  { \eber{v} : \tdist{\tbool} }
  \and
  \Rule{TV:Unif}
  {
  }
  { \eunif : \tdist{\tureal} }
  \and
  \Rule{TV:Beta}
  { v_1 : \tpreal \\
    v_2 : \tpreal
  }
  { \ebeta{v_1}{v_2} : \tdist{\tureal} }
  \and
  \Rule{TV:Gamma}
  { v_1 : \tpreal \\
    v_2 : \tpreal
  }
  { \egamma{v_1}{v_2} : \tdist{\tpreal} }
  \and
  \Rule{TV:Normal}
  { v_1 : \treal \\
    v_2 : \tpreal
  }
  { \enormal{v_1}{v_2} : \tdist{\treal} }
  \and
  \Rule{TE:Cat}
  { \Forall{i \in \{1, \cdots, n\}} v_i : \tpreal
  }
  { \ecat{v_1,\cdots,v_n} : \tdist{\tnat_n} }
  \and
  \Rule{TV:Geo}
  { v : \tureal
  }
  { \egeo{v} : \tdist{\tnat} }
  \and
  \Rule{TV:Pois}
  { v : \tpreal
  }
  { \epois{v} : \tdist{\tnat} }
  \\
  \Rule{TC:Empty}
  {
  }
  { \emptyset : \cdot }
  \and
  \Rule{TC:Extend}
  { V : \Gm \\
    v : \tau
  }
  { V[x \mapsto v] : \Gm, x : \tau }
  \\
  \Rule{TT:$\one$}
  {
  }
  { [] : \one }
  \and
  \Rule{TT:$\wedge$}
  { v : \tau \\
    \sigma : A
  }
  { [\msgobjl{v}] \concat \sigma : \tau \wedge A  }
  \and
  \Rule{TT:$\supset$}
  { v : \tau \\
    \sigma : A
  }
  { [\msgobjr{v}] \concat \sigma : \tau \supset A }
  \and
  \Rule{TT:$\ichoice$}
  { v : \tbool \\
    A = \m{ite}(v, A_1, A_2) \\
    \sigma : A
  }
  { [\dirobjl{v}] \concat \sigma : A_1 \ichoice A_2 }
  \and
  \Rule{TT:$\echoice$}
  { v : \tbool \\
    A = \m{ite}(v, A_1, A_2) \\
    \sigma : A
  }
  { [\dirobjr{v}] \concat \sigma : A_1 \echoice A_2 }
  \and
  \Rule{TT:$\mu$}
  { \m{typedef}(T.X.A) \in \calT \\ \sigma : [B / X] A
  }
  { [\foldobj] \concat \sigma : T[B] }  
\end{mathpar}
\caption{Type rules for values, environments, and guidance traces.}
\label{Fig:TypingAux}
\end{figure*}

\begin{proposition}\label{Lem:SoundnessDist}
  If $d : \tdist{\tau}$ and $v$ is a value, then $v : \tau$ if and only if $v \in d.\mathrm{support}$ (i.e., $d.\mathrm{density}(v) > 0$).
\end{proposition}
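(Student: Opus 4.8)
The plan is to prove the biconditional by inversion on the typing derivation of $d : \tdist{\tau}$. The value-typing rules that conclude a distribution type---those for $\eber{\cdot}$, $\eunif$, $\ebeta{\cdot}{\cdot}$, $\egamma{\cdot}{\cdot}$, $\enormal{\cdot}{\cdot}$, $\ecat{\cdot}$, $\egeo{\cdot}$, and $\epois{\cdot}$ in \cref{Fig:TypingAux}---are syntax-directed on the shape of $d$. Hence inverting $d : \tdist{\tau}$ both fixes the outermost form of the distribution value $d$ and simultaneously pins down the result type $\tau$: for instance $d = \eunif$ forces $\tau = \tureal$, $d = \egamma{v_1}{v_2}$ forces $\tau = \tpreal$, $d = \enormal{v_1}{v_2}$ forces $\tau = \treal$, and $d = \ecat{v_1,\cdots,v_n}$ forces $\tau = \tnat_n$. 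Because distribution values take only scalar-typed arguments and never nest, no induction is needed; the proposition reduces to a flat case analysis, one case per primitive distribution.

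In each case I would compute two sets and check that they coincide: the support $d.\mathrm{support}$ (equivalently, by the parenthetical, the set of $v$ with $d.\mathrm{density}(v) > 0$), which is fixed metadata of the distribution; and the set of values inhabiting the pinned-down type $\tau$, obtained by inverting the scalar value-typing rules. Concretely, inverting $v : \tureal$ yields $v = \bar{r}$ with $r \in (0,1)$ via \textsc{TV:UReal}, inverting $v : \tpreal$ yields $v = \bar{r}$ with $r > 0$ via \textsc{TV:PReal}, inverting $v : \tbool$ yields $v \in \{\etrue,\efalse\}$, inverting $v : \tnat_n$ yields $v = \bar{m}$ with $m < n$ via \textsc{TV:FNat}, and inverting $v : \tnat$ yields $v = \bar{m}$ for an arbitrary natural number via \textsc{TV:Nat}. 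Matching each of these against the standard support then closes every case---Bernoulli against $\tbool$, uniform and Beta against $(0,1)=\tureal$, Gamma against $\tpreal$, Normal against $\treal$, the $n$-ary categorical against $\tnat_n$, and geometric and Poisson against $\tnat$. Each identity is exactly the design invariant stated when the primitive distributions are introduced---that the type of a distribution characterizes its support precisely---so the forward and backward directions of the biconditional follow together from the per-case set equality.

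The main obstacle is not any single case but ensuring that these base correspondences are genuine equalities rather than mere inclusions. Two points deserve care. First, value typing is not unique for real literals: a value $\bar{r}$ with $r \in (0,1)$ inhabits $\tureal$, $\tpreal$, and $\treal$ simultaneously. This causes no difficulty, because $\tau$ is fixed by $d$, so in each case I compare $d.\mathrm{support}$ against the inhabitants of exactly one scalar type. Second, the categorical case requires reconciling the indexing convention of $d.\mathrm{support}$ (the $n$ outcome labels) with the inhabitants of $\tnat_n$ supplied by \textsc{TV:FNat}; provided these are set up consistently in the definition of $\ecat{\cdot}$, the equality holds on the nose. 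Everything else is routine inversion combined with the defining support of each primitive distribution.
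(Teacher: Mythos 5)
Your proposal is correct and matches the paper's approach: the paper's entire proof is ``Appeal to mathematical properties of primitive distributions,'' i.e., exactly the design invariant you invoke (each primitive distribution's type characterizes its support precisely), and your inversion-plus-case-analysis is just the explicit elaboration of that one-liner, with each case closed by the constraint that the typing rules also force the distribution parameters into ranges (e.g., a Bernoulli parameter in $(0,1)$, positive categorical weights) under which the stated supports hold.
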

\begin{proof}
  Appeal to mathematical properties of primitive distributions.
\end{proof}

\begin{proposition}\label{The:SoundnessExp}\
  \begin{itemize}
    \item If $\Gm \vdash e : \tau$, $V \vdash e \evalto v$, and $V : \Gm$, then $v : \tau$.
    \item If $\Gm \vdash e : \tau$, $V : \Gm$, then there exists a value $v$ such that $V \vdash e \evalto v$.
  \end{itemize}
\end{proposition}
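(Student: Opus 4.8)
The plan is to establish the two parts by separate inductions, since they rest on different principles. For the first (preservation), I would induct on the derivation of $V \vdash e \evalto v$, inverting the typing judgment $\Gm \vdash e : \tau$ in each case. The base cases ($\etriv$, $\etrue$, $\efalse$, $\bar r$, $\bar n$, $\eunif$) are immediate, because the value-typing rules of \cref{Fig:TypingAux} mirror the expression-typing rules of \cref{Fig:FullTypingElab} one-for-one, including the side conditions on $r$ that let a numeral inhabit $\tureal$, $\tpreal$, or $\treal$. For a variable I would invoke a routine inversion lemma: $V : \Gm$ and $x{:}\tau \in \Gm$ give $V(x) : \tau$, by induction on the derivation of $V : \Gm$ (rules TC:Empty/TC:Extend). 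Each distribution former (e.g.\ $\eber{e}$, $\ebeta{e_1}{e_2}$, $\enormal{e_1}{e_2}$) follows by applying the IH to the argument expressions and reassembling with the matching TV rule, and the binary-operation case appeals to the assumed typing discipline on $\Diamond$.

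The only delicate case of preservation is application. From EE:App the result $v$ is produced by evaluating the closure body $e_o$ under $V_o[x \mapsto v_2]$, which \emph{is} a subderivation of the evaluation; this is precisely why I induct on the evaluation rather than on the typing derivation. Applying the IH to $e_1$ shows the closure has type $\tau_1 \to \tau_2$; inverting TV:Clo exposes $V_o : \Gm_o$ and $\Gm_o, x{:}\tau_1 \vdash e_o : \tau_2$; the IH on $e_2$ yields $v_2 : \tau_1$; hence $V_o[x \mapsto v_2] : \Gm_o, x{:}\tau_1$, and the IH on the body delivers $v : \tau_2$. The abstraction case is trivial: evaluation produces $\vclo{V}{\eabs{x}{\tau}{e}}$, and TV:Clo closes the goal from $V : \Gm$ and the inverted arrow typing.

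For the second part (totality of evaluation) I expect the real obstacle. A naive induction on the typing derivation discharges every construct whose subterms' typings are subderivations — conditionals (using canonical forms at $\tbool$), $\m{let}$, and all the distribution formers — but it breaks exactly at application: once part~1 and canonical forms tell us $e_1$ evaluates to a closure $\vclo{V_o}{\eabs{x}{\tau_1}{e_o}}$ with $\Gm_o, x{:}\tau_1 \vdash e_o : \tau_2$, that body typing is \emph{not} a subderivation of the application's typing, so the induction hypothesis does not apply. This is the familiar reason structural induction cannot establish normalization for the simply-typed $\lambda$-calculus.

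To overcome this I would use Tait's reducibility method. Define, by induction on $\tau$, a predicate $\mathcal{R}_\tau$ on values: at scalar and distribution types $\mathcal{R}_\tau(v)$ means simply $v : \tau$, while $\mathcal{R}_{\tau_1 \to \tau_2}(v)$ holds iff $v$ is a closure whose application to every $v'$ with $\mathcal{R}_{\tau_1}(v')$ evaluates to some $v''$ with $\mathcal{R}_{\tau_2}(v'')$. Lift this pointwise to environments as $V \models \Gm$, and prove the fundamental lemma — if $\Gm \vdash e : \tau$ and $V \models \Gm$, then $V \vdash e \evalto v$ for some $v$ with $\mathcal{R}_\tau(v)$ — by induction on the typing derivation of $e$. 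The application case now goes through because the arrow clause of $\mathcal{R}$ supplies exactly the body evaluation that was missing, and the abstraction case is discharged by extending $V \models \Gm$ with a reducible argument and invoking the IH on the body. To finally discharge the proposition's hypothesis I would prove an adequacy lemma, $v : \tau \Rightarrow \mathcal{R}_\tau(v)$, by induction on the value- and environment-typing derivations (for a closure the IH yields $V_o \models \Gm_o$ for the captured environment, and the already-established fundamental lemma handles the body); this upgrades $V : \Gm$ to $V \models \Gm$, whence existence follows by instantiating the fundamental lemma. Because the expression fragment has no recursion or fixed points — recursion lives only at the command level — all of these inductions are well-founded, and the fundamental lemma also re-delivers the well-typedness of results as a byproduct.
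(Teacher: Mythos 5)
Your proposal is correct and is essentially the paper's argument carried out in full: the paper's entire proof is ``Appeal to type soundness and strong normalization of the simply-typed lambda calculus,'' and your two inductions---preservation by induction on the evaluation derivation, and totality via Tait's reducibility predicate with the adequacy lemma upgrading $V : \Gamma$ to the logical relation---are precisely the standard proofs of the two facts the paper cites (your diagnosis of why naive structural induction fails at application, and why recursion at the command level does not interfere, is also the right justification for that citation). The only difference is level of detail, not route.
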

\begin{proof}
  Appeal to type soundness and strong normalization of the simply-typed lambda calculus.
\end{proof}

\begin{lemma}[Substitution]\label{Lem:GuideSubst}
  If $\Gm \mid \chtype{a}{A} ; \chtype{b}{B} \vdash m \dotsim \tau \mid \chtype{a}{A'} ; \chtype{b}{B'}$,
  then for any $X_a,X_b,A_o,B_o$, it holds that
  $\Gm \vdash \chtype{a}{[A_o / X_a]A} ; \chtype{b}{[B_o / X_b] B} \vdash m \dotsim \tau \mid \chtype{a}{[A_o / X_a] A'} ; \chtype{b}{[B_o / X_b] B'}$.
\end{lemma}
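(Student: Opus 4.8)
The plan is to proceed by structural induction on the derivation of the typing judgment $\Gm \mid \chtype{a}{A} ; \chtype{b}{B} \vdash m \dotsim \tau \mid \chtype{a}{A'} ; \chtype{b}{B'}$, applying the substitution $[A_o/X_a]$ on channel $a$ and $[B_o/X_b]$ on channel $b$ uniformly to every guide type that occurs in the judgment. The crucial preliminary observation is that the basic types $\tau$ carried by the sample- and choice-constructors ($\tau \wedge A$, $\tau \supset A$, $A_1 \echoice A_2$, $A_1 \ichoice A_2$) contain no guide-type variables. Consequently the substitutions distribute over these constructors without disturbing the scalar annotations, e.g. $[A_o/X_a](\tau \wedge A) = \tau \wedge [A_o/X_a]A$ and $[A_o/X_a](A_1 \echoice A_2) = [A_o/X_a]A_1 \echoice [A_o/X_a]A_2$. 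This is the fact that makes every constructor case go through mechanically.

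For the base case (\textsc{TM:Ret}), the pre- and post-channel types coincide and the premise $\Gm \vdash e : \tau$ is independent of the channel types, so the substituted judgment is immediate. For the sample and branch rules (\textsc{TM:Sample:*} and \textsc{TM:Cond:*}), I would apply the substitution to the conclusion, use the distributivity observation above to push it inside the relevant constructor, and then invoke the induction hypothesis on the zero or two sub-derivations. For (\textsc{TM:Bnd}), the derivation factors through an intermediate pair of channel types; applying the induction hypothesis to each sub-derivation under the \emph{same} substitution produces consistently substituted intermediate types, and the two substituted results recompose by a single application of (\textsc{TM:Bnd}).

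The main obstacle is the procedure-call rule (\textsc{TM:Call}), whose pre-types are the operator instantiations $T_a[A]$ and $T_b[B]$ while its post-types are $A$ and $B$. Here I must establish that guide-type substitution commutes with type-operator application, namely $[A_o/X_a](T_a[A]) = T_a[\,[A_o/X_a]A\,]$, and dually for $T_b$. This holds because $T_a$ and $T_b$ are drawn from the fixed global collection $\calT$ of type definitions and are therefore closed with respect to the continuation variables $X_a, X_b$ being substituted; the substitution can reach only the argument position, not the body of the operator. Granting this commutation, the conclusion of (\textsc{TM:Call}) is preserved: instantiating the rule with the substituted continuation $[A_o/X_a]A$ yields exactly $T_a[\,[A_o/X_a]A\,]$ as the pre-type, and both the signature lookup $\Sg(f)$ and the premise $\Gm \vdash e : \tau_1$ are untouched by the substitution. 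A routine side check that $X_a, X_b$ may be chosen fresh (or $\alpha$-renamed away from the bound parameter of each $\m{typedef}$) discharges the usual capture-avoidance condition and completes the induction.
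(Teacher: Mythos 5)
Your proposal is correct and follows essentially the same route as the paper's proof: induction on the typing derivation, with the only nontrivial case being (\textsc{TM:Call}), which is discharged by the commutation fact $[A_o/X_a](T_a[A]) = T_a[\,[A_o/X_a]A\,]$ — the paper asserts this same equation (substitution reaches only the argument of a type-operator instantiation, since the operator's body binds its own parameter), whereas you additionally spell out why it holds. No gaps; your treatment of (\textsc{TM:Ret}), (\textsc{TM:Bnd}), and the sample/branch rules matches the paper's case analysis.
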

\begin{proof}
  By induction on the derivation of $\Gm \mid \chtype{a}{A} ; \chtype{b}{B} \vdash m \dotsim \tau \mid \chtype{a}{A'} ; \chtype{b}{B'}$.
  We show several nontrivial cases; others are similar to one of these cases.
  
  \begin{description}[labelindent=\parindent]
    \item[Case:]
    \begin{mathpar}\small
    \Rule{TM:Ret}
    { \Gm \vdash e : \tau
    }
    { \Gm \mid \chtype{a}{A} ; \chtype{b}{B} \vdash \mret{e} \dotsim \tau \mid \chtype{a}{A} ; \chtype{b}{B}  }  
    \end{mathpar}
    
    $\Gm \mid \chtype{a}{[A_o/X_a]A} ; \chtype{b}{[B_o/X_b]B} \vdash \mret{e} \dotsim \tau \mid \chtype{a}{[A_o/X_a]A} ; \chtype{b}{[B_o/X_b]B}$ \hfill (\textsc{TM:Ret})
    
    \item[Case:]
    \begin{mathpar}\small
    \Rule{TM:Bnd}
    { \Gm \mid \chtype{a}{A} ; \chtype{b}{B} \vdash m_1 \dotsim \tau_1 \mid \chtype{a}{A''} ; \chtype{b}{B''} \\
      \Gm, x : \tau_1 \mid \chtype{a}{A''} ; \chtype{b}{B''} \vdash m_2 \dotsim \tau \mid \chtype{a}{A'} ; \chtype{b}{B'}
    }
    { \Gm \mid \chtype{a}{A} ; \chtype{b}{B} \vdash \mbnd{m_1}{x}{m_2} \dotsim \tau \mid \chtype{a}{A'} ; \chtype{b}{B'} }  
    \end{mathpar}

    $\Gm \mid \chtype{a}{[A_o/X_a]A} ; \chtype{b}{[B_o/X_b]B} \vdash m_1 \dotsim \tau_1 \mid \chtype{a}{[A_o/X_a]A''}; \chtype{b}{[B_o/X_b]B''}$ \hfill (I.H.)
    
    $\Gm, x:\tau_1 \mid \chtype{a}{[A_o/X_a]A''} ; \chtype{b}{[B_o/X_b]B''} \vdash m_2 \dotsim \tau \mid \chtype{a}{[A_o/X_a]A'} ; \chtype{b}{[B_o/X_b]B'}$ \hfill (I.H.)
    
    $\Gm \mid \chtype{a}{[A_o/X_a]A} ; \chtype{b}{[B_o/X_b]B} \vdash \mbnd{m_1}{x}{m_2} \dotsim \tau \mid \chtype{a}{[A_o/X_a]A''} ; \chtype{b}{[B_o/X_b]B''}$ \hfill (\textsc{TM:Bnd})
    
    \item[Case:]
    \begin{mathpar}\small
    \Rule{TM:Call}
    { \Sg(f) = \tau_1 \leadsto \tau \mid \chtype{a}{T_a} ; \chtype{b}{T_b} \\
      \Gm \vdash e : \tau_1
    }
    { \Gm \mid \chtype{a}{T_a[A']} ; \chtype{b}{T_b[B']} \vdash \mcall{f}{e} \dotsim \tau \mid \chtype{a}{A'} ; \chtype{b}{B'} }  
    \end{mathpar}
    
    $\Gm \mid \chtype{a}{T_a[[A_o/X_a]A']} ; \chtype{b}{T_b[[B_o/X_b]B']} \vdash \mcall{f}{e} \dotsim \tau \mid \chtype{a}{[A_o/X_a]A'} ; \chtype{b}{[B_o/X_b]B'}$ \hfill (\textsc{TM:Call})
    
    $T_a[[A_o/X_a]A'] = [A_o/X_a] (T_a[A'])$, $T_b[[B_o/X_b]B'] = [B_o/X_b] (T_b[B'])$
    
    $A = T_a[A']$, $B = T_b[B']$ \hfill (assumption)
  \end{description}
\end{proof}

\begin{theorem}[Well-typed programs evaluate to well-typed values]\label{Lem:SoundnessValue}
  If
  $\Gm \mid \chtype{a}{A} ; \chtype{b}{B} \vdash m \dotsim \tau \mid \chtype{a}{A'} ; \chtype{b}{B'}$,
  $V \mid \chtype{a}{\sigma_a} ; \chtype{b}{\sigma_b} \vdash m \evalp{w} v$, and
  $V : \Gm$,
  then
  $v : \tau$.
\end{theorem}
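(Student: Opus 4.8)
The plan is to proceed by induction on the derivation of the evaluation judgment $V \mid \chtype{a}{\sigma_a} ; \chtype{b}{\sigma_b} \vdash m \evalp{w} v$. I deliberately induct on the evaluation derivation rather than on the typing derivation, because the rule (\textsc{EM:Call}) fully unfolds a procedure body $m_f$, so that body's evaluation is a strictly smaller subderivation, whereas a typing derivation refers to $f$'s signature in $\Sg$ without unfolding and would therefore not decrease across recursive calls. Since the evaluation rules and the typing rules are both syntax-directed on $m$, each case of the induction fixes the shape of $m$, and I may invert the typing hypothesis $\Gm \mid \chtype{a}{A} ; \chtype{b}{B} \vdash m \dotsim \tau \mid \chtype{a}{A'} ; \chtype{b}{B'}$ to recover the premises of the unique matching typing rule.

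The base cases reduce to soundness of the deterministic fragment. For (\textsc{EM:Ret}) with $m = \mret{e}$, inversion of (\textsc{TM:Ret}) gives $\Gm \vdash e : \tau$, and since $V : \Gm$ and $V \vdash e \evalto v$, Proposition~\ref{The:SoundnessExp} yields $v : \tau$. For the sampling cases, e.g.\ (\textsc{EM:Sample:Recv:L}), inversion of the matching typing rule gives $\Gm \vdash e : \tdist{\tau}$, so Proposition~\ref{The:SoundnessExp} shows the distribution value $d$ has type $\tdist{\tau}$; the evaluation side condition $v \in d.\mathrm{support}$ together with Proposition~\ref{Lem:SoundnessDist} then gives $v : \tau$.

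The inductive cases for sequencing and control flow apply the induction hypothesis directly. For (\textsc{EM:Bnd}), inversion of (\textsc{TM:Bnd}) supplies a typing of $m_1$ at some $\tau_1$ and of $m_2$ at $\tau_2 = \tau$ under the extended context; the induction hypothesis on the $m_1$-subderivation gives $v_1 : \tau_1$, so $V[x \mapsto v_1] : \Gm, x : \tau_1$ by (\textsc{TC:Extend}), and the induction hypothesis on the $m_2$-subderivation gives $v_2 : \tau$. The branching cases such as (\textsc{EM:Cond:Send:L}) are immediate: inversion shows both branches are typed at the same $\tau$, so applying the induction hypothesis to whichever branch was actually evaluated gives $v : \tau$; the Iverson factor $[v_a = v_e]$ scales only the weight $w$ and never the returned value.

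The main obstacle is the procedure-call case (\textsc{EM:Call}). Here the hypothesis is typed by (\textsc{TM:Call}) using $\Sg(f) = \tau_1 \leadsto \tau_2 \mid \chtype{a}{T_a} ; \chtype{b}{T_b}$, while the evaluation unfolds the body $m_f$ in the fresh environment $\emptyset[x_f \mapsto v_1]$. To invoke the induction hypothesis on the body subderivation I need a typing of $m_f$, which I obtain from the global well-formedness assumption $\vdash \calD : \Sg$ (rule (\textsc{TP:Proc})): the corresponding instance of (\textsc{TP:Dec}) gives $x_f : \tau_1 \mid \chtype{a}{A_0} ; \chtype{b}{B_0} \vdash m_f \dotsim \tau_2 \mid \chtype{a}{X_a} ; \chtype{b}{X_b}$, where $A_0, B_0$ are the bodies of the type definitions of $T_a, T_b$. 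Since $V : \Gm$, $\Gm \vdash e : \tau_1$, and $V \vdash e \evalto v_1$, Proposition~\ref{The:SoundnessExp} gives $v_1 : \tau_1$, hence $\emptyset[x_f \mapsto v_1] : x_f : \tau_1$. Applying the induction hypothesis to the body derivation then yields $v_2 : \tau_2 = \tau$. The continuation types $A, B$ supplied at the call site and the continuation variables $X_a, X_b$ appearing in the body's derivation need not be reconciled for a statement about the returned value; if one nonetheless wishes to thread the call-site continuation types through the body, Lemma~\ref{Lem:GuideSubst} supplies the needed instantiation, and in either case the result type $\tau_2$ is left unchanged, which is all the theorem requires.
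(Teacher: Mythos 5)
Your proof is correct and follows essentially the same route as the paper's: induction on the evaluation derivation with inversion on the (syntax-directed) typing judgment, invoking expression soundness (\cref{The:SoundnessExp}), distribution soundness (\cref{Lem:SoundnessDist}), and environment extension (\textsc{TC:Extend}) in exactly the same places. The only deviation is in the (\textsc{EM:Call}) case, where the paper first instantiates the body's typing via the substitution lemma (\cref{Lem:GuideSubst}) before applying the induction hypothesis, whereas you observe—correctly—that for value typing alone the induction hypothesis may be applied to the uninstantiated (\textsc{TP:Dec}) judgment, since the substitution affects only the channel types and leaves the result type $\tau_2$ unchanged.
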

\begin{proof}
  By induction on the derivation of $V \mid \chtype{a}{\sigma_a} ; \chtype{b}{\sigma_b} \vdash m \evalp{w} v$, followed by inversion on $\Gm \mid \chtype{a}{A} ; \chtype{b}{B} \vdash m \dotsim \tau \mid \chtype{a}{A'} ; \chtype{b}{B'}$.
  We show several nontrivial cases; others are similar to one of these cases.
  
  \begin{description}[labelindent=\parindent]
    \item[Case:]
    \begin{mathpar}\small
    \Rule{EM:Ret}
    { V \vdash e \evalto v
    }
    { V \mid \chtype{a}{[]} ; \chtype{b}{[]} \vdash \mret{e} \evalp{1} v }
    \and
    \Rule{TM:Ret}
    { \Gm \vdash e : \tau
    }
    { \Gm \mid \chtype{a}{A} ; \chtype{b}{B} \vdash \mret{e} \dotsim \tau \mid \chtype{a}{A} ; \chtype{b}{B} }
    \end{mathpar}
    
    $\Gm \vdash e : \tau$, $V \vdash e \evalto v$ \hfill (assumption)
    
    $v : \tau$ \hfill (\cref{The:SoundnessExp})
    
    \item[Case:]
    \begin{mathpar}\small
      \Rule{EM:Bnd}
      { V \mid \chtype{a}{\sigma_{a,1}} ; \chtype{b}{\sigma_{b,1}} \vdash m_1 \evalp{w_1} v_1 \\\\
        V[x \mapsto v_1] \mid \chtype{a}{\sigma_{a,2}} ; \chtype{b}{\sigma_{b,2}} \vdash m_2 \evalp{w_2} v
      }
      { V \mid \chtype{a}{\sigma_{a,1} \concat \sigma_{a,2}} ; \chtype{b}{\sigma_{b,1} \concat \sigma_{b,2}} \vdash \mbnd{m_1}{x}{m_2} \evalp{w_1 \cdot w_2} v }
      \and
      \Rule{TM:Bnd}
      { \Gm \mid \chtype{a}{A} ; \chtype{b}{B} \vdash m_1 \dotsim \tau_1 \mid \chtype{a}{A''} ; \chtype{b}{B''} \\\\
        \Gm, x : \tau_1 \mid \chtype{a}{A''} ; \chtype{b}{B''} \vdash m_2 \dotsim \tau \mid \chtype{a}{A'} ; \chtype{b}{B'}
      }
      { \Gm \mid \chtype{a}{A} ; \chtype{b}{B} \vdash \mbnd{m_1}{x}{m_2} \dotsim \tau \mid \chtype{a}{A'} ; \chtype{b}{B'} }
    \end{mathpar}
    
    $\Gm \mid \chtype{a}{A} ; \chtype{b}{B} \vdash m_1 \dotsim \tau_1 \mid \chtype{a}{A''} ; \chtype{b}{B''}$, $V \mid \chtype{a}{\sigma_{a,1}} ; \chtype{b}{\sigma_{b,1}} \vdash m_1 \evalp{w_1} v_1$, $V : \Gm$ \hfill (assumption)
    
    $v_1 : \tau_1$ \hfill (I.H.)
    
    $V[x \mapsto v_1] : (\Gm,x:\tau_1)$ \hfill (\textsc{TC:Extend})
    
    $\Gm,x : \tau_1 \mid \chtype{a}{A''} ; \chtype{b}{B''} \vdash m_2 \dotsim \tau \mid \chtype{a}{A'} ; \chtype{b}{B'}$, $V[x \mapsto v_1] \mid \chtype{a}{\sigma_{a,2}} ; \chtype{b}{\sigma_{b,2}} \vdash m_2 \evalp{w_2} v$ \hfill (assumption)  
    
    $v : \tau$ \hfill (I.H.)
    
    \item[Case:]
    \begin{mathpar}\small
      \Rule{EM:Call}
      { V \vdash e \evalto v_1 \\
        \calD(f) = \fundec{f}{\tau_1}{\tau}{x_f}{m_f}{a}{b} \\
        \emptyset[x_f \mapsto v_1] \mid \chtype{a}{\sigma_{a,1}} ; \chtype{b}{\sigma_{b,1}} \vdash m_f \evalp{w} v 
      }
      { V \mid \chtype{a}{[\foldobj] \concat \sigma_{a,1}} ; \chtype{b}{[\foldobj] \concat \sigma_{b,1}} \vdash \mcall{f}{e} \evalp{w} v }
      \and
      \Rule{TM:Call}
      { \Sg(f) = \tau_1 \leadsto \tau \mid \chtype{a}{T_a} ; \chtype{b}{T_b} \\
        \Gm \vdash e : \tau_1
      }
      { \Gm \mid \chtype{a}{T_a[A']} ; \chtype{b}{T_b[B']} \vdash \mcall{f}{e} \dotsim \tau \mid \chtype{a}{A'} ; \chtype{b}{B'} }
      \and
      \Rule{TP:Dec}
      { \m{typedef}(T_a.X_a.A''), \m{typedef}(T_b.X_b.B'') \in \calT \\\\
        x_f : \tau_1 \mid \chtype{a}{A''} ; \chtype{b}{B''} \vdash_{\Sg} m_f \dotsim \tau \mid \chtype{a}{X_a} ; \chtype{b}{X_b} 
      }
      { \vdash_{\Sg} \fundec{f}{\tau_1}{\tau}{x_f}{m_f}{a}{b} : \tau_1 \leadsto \tau \mid \chtype{a}{T_a} ; \chtype{b}{T_b} }
    \end{mathpar}
    
    $\Gm \vdash e : \tau_1$, $V \vdash e \evalto v_1$, $V : \Gm$ \hfill (assumption)
    
    $v_1 : \tau_1$ \hfill (\cref{The:SoundnessExp})
    
    $\emptyset[x_f \mapsto v_1] : (x_f : \tau_1)$ \hfill (\textsc{TC:Extend})
    
    $x_f : \tau_1 \mid \chtype{a}{A''} ; \chtype{b}{B''} \vdash_{\Sg} m_f \dotsim \tau \mid \chtype{a}{X_a} ; \chtype{b}{X_b}$ \hfill (assumption)
  
    $x_f : \tau_1 \mid \chtype{a}{[A'/X_a]A''} ; \chtype{b}{[B'/X_b]B''} \vdash m_f \dotsim \tau \mid \chtype{a}{A'} ; \chtype{b}{B'}$ \hfill (\cref{Lem:GuideSubst})
    
    $\emptyset[x_f \mapsto v_1] \mid \chtype{a}{\sigma_{a,1}} ; \chtype{b}{\sigma_{b,1}} \vdash m_f \evalp{w} v$ \hfill (assumption)
    
    $v : \tau$ \hfill (I.H.)
    
    \item[Case:]
    The reasoning below also works for \textsc{(TM:Sample:Recv:R)} and \textsc{(TM:Sample:Send:*)}.
    \begin{mathpar}\small
      \Rule{EM:Sample:Recv:L}
      { V \vdash e \evalto d \\
        v \in d.\mathrm{support} \\
        w = d.\mathrm{density}(v)
      }
      { V \mid \chtype{a}{[\msgobjl{v}]} ; \chtype{b}{[]} \vdash \msamplei{e}{a} \evalp{w} v }  
      \and
      \Rule{TM:Sample:Recv:L}
      { \Gm \vdash e : \tdist{\tau}
      }
      { \Gm \mid \chtype{a}{\tau \wedge A'} ; \chtype{b}{B'} \vdash \msamplei{e}{a} \dotsim \tau \mid \chtype{a}{A'} ; \chtype{b}{B'} }
    \end{mathpar}
    
    $\Gm \vdash e : \tdist{\tau}$, $V \vdash e \evalto d$, $V : \Gm$ \hfill (assumption)
    
    $d : \tdist{\tau}$ \hfill (\cref{The:SoundnessExp})
    
    $v \in d.\mathrm{support}$ \hfill (assumption)
    
    $v : \tau$ \hfill (\cref{Lem:SoundnessDist})
    
    \item[Case:]
    The reasoning below also works for \textsc{(TM:Cond:Recv:R)}.
    \begin{mathpar}\small
    \Rule{EM:Cond:Recv:L}
    { i = \m{ite}(v_a,1,2) \\
      V \mid \chtype{a}{\sigma_{a,1}} ; \chtype{b}{\sigma_b} \vdash m_i \evalp{w} v
    }
    { V \mid \chtype{a}{[ \dirobjl{v_a} ] \concat \sigma_{a,1}} ; \chtype{b}{\sigma_b} \vdash \mbranchi{m_1}{m_2}{a} \evalp{w} v }
    \and
    \Rule{TM:Cond:Recv:L}
    { \Gm \mid \chtype{a}{A_1} ; \chtype{b}{B} \vdash m_1 \dotsim \tau \mid \chtype{a}{A'} ; \chtype{b}{B'} \\
      \Gm \mid \chtype{a}{A_2} ; \chtype{b}{B} \vdash m_2 \dotsim \tau \mid \chtype{a}{A'} ; \chtype{b}{B'}
    }
    { \Gm \mid \chtype{a}{A_1 \ichoice A_2} ; \chtype{b}{B} \vdash \mbranchi{m_1}{m_2}{a} \dotsim \tau \mid \chtype{a}{A'} ; \chtype{b}{B'} }
    \end{mathpar}
    
    $V \mid \chtype{a}{\sigma_{a,1}} ; \chtype{b}{\sigma_b} \vdash m_i \evalp{w} v$,
    $\Gm \vdash \chtype{a}{A_i} ; \chtype{b}{B} \vdash m_i \dotsim \tau \mid \chtype{a}{A'} ; \chtype{b}{B'}$ \hfill (assumption)
    
    $v : \tau$ \hfill (I.H.)
    
    \item[Case:]
    The reasoning below also works for (\textsc{TM:Cond:Send:R}).
    \begin{mathpar}\small
    \Rule{EM:Cond:Send:L}
    { V \vdash e \evalto v_e \\
      i = \m{ite}(v_a, 1, 2) \\
      V \mid \chtype{a}{\sigma_{a,1}} ; \chtype{b}{\sigma_b} \vdash m_i \evalp{w} v
    }
    { V \mid \chtype{a}{[\dirobjr{v_a}] \concat \sigma_{a,1}} ; \chtype{b}{\sigma_b} \vdash \mbrancho{e}{m_1}{m_2}{a} \evalp{w \cdot [v_a=v_e]} v }
    \and
    \Rule{TM:Cond:Send:L}
    { \Gm \vdash e : \tbool \\
      \Gm \mid \chtype{a}{A_1} ; \chtype{b}{B} \vdash m_1 \dotsim \tau \mid \chtype{a}{A'} ; \chtype{b}{B'} \\
      \Gm \mid \chtype{a}{A_2} ; \chtype{b}{B} \vdash m_2 \dotsim \tau \mid \chtype{a}{A'} ; \chtype{b}{B'}
    }
    { \Gm \mid \chtype{a}{A_1 \echoice A_2} ; \chtype{b}{B} \vdash \mbrancho{e}{m_1}{m_2}{a} \dotsim \tau \mid \chtype{a}{A'} ; \chtype{b}{B'} }
    \end{mathpar}
    
    $V \mid \chtype{a}{\sigma_{a,1}} ; \chtype{b}{\sigma_b} \vdash m_i \evalp{w} v$,
    $\Gm \vdash \chtype{a}{A_i} ; \chtype{b}{B} \vdash m_i \dotsim \tau \mid \chtype{a}{A'} ; \chtype{b}{B'}$ \hfill (assumption)
    
    $v : \tau$ \hfill (I.H.)
  \end{description}
\end{proof}

\begin{theorem}[Well-typed programs produce well-typed traces]\label{Lem:SoundnessW}
  If
  $\Gm \mid \chtype{a}{A} ; \chtype{b}{B} \vdash m \dotsim \tau \mid \chtype{a}{A'} ; \chtype{b}{B'}$,
  $V \mid \chtype{a}{\sigma_a} ; \chtype{b}{\sigma_b} \vdash m \evalp{w} v$,
  $V : \Gm$,
  $\sigma_a' : A'$, and
  $\sigma_b' : B'$,
  then
  $(\sigma_a \concat \sigma_a') : A$ and
  $(\sigma_b \concat \sigma_b') : B$.
\end{theorem}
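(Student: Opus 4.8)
The plan is to proceed by structural induction on the derivation of the evaluation judgment $V \mid \chtype{a}{\sigma_a} ; \chtype{b}{\sigma_b} \vdash m \evalp{w} v$, performing inversion on the typing derivation $\Gm \mid \chtype{a}{A} ; \chtype{b}{B} \vdash m \dotsim \tau \mid \chtype{a}{A'} ; \chtype{b}{B'}$ in each case, mirroring the proof of \cref{Lem:SoundnessValue}. The guiding observation is that the trace-typing rules (\textsc{TT:*}) are in one-to-one correspondence with the command-typing rules (\textsc{TM:*}): each guidance message a command prepends to a channel corresponds to exactly one trace-type constructor. Thus in every case the goal $(\sigma_a \concat \sigma_a') : A$ is discharged by applying the matching \textsc{TT:*} rule, whose premises come either from an appeal to the induction hypothesis on a subcomputation or from a side condition of the evaluation rule.

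For the base cases, in (\textsc{EM:Ret}) both produced traces are empty and the before/after types coincide, so $[] \concat \sigma_a' = \sigma_a' : A$ follows immediately from the hypothesis. In (\textsc{EM:Sample:Recv:L}) the command prepends $\msgobjl{v}$ to channel $a$ and the before-type is $\tau \wedge A'$; I would close the goal by (\textsc{TT:$\wedge$}), which needs $v : \tau$ and $\sigma_a' : A'$. The latter is a hypothesis, while the former follows because $V \vdash e \evalto d$ with $\Gm \vdash e : \tdist{\tau}$ gives $d : \tdist{\tau}$ by \cref{The:SoundnessExp}, whence $v \in d.\mathrm{support}$ yields $v : \tau$ by \cref{Lem:SoundnessDist}. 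The remaining sampling rules are symmetric.

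The inductive cases exploit the backward orientation of the type system. In (\textsc{EM:Bnd}), the evaluation splits the traces as $\sigma_{a,1} \concat \sigma_{a,2}$ while the typing threads an intermediate pair $A'', B''$; I would first obtain $v_1 : \tau_1$ from \cref{Lem:SoundnessValue} to justify $V[x \mapsto v_1] : (\Gm, x : \tau_1)$, then apply the induction hypothesis to $m_2$ with the given continuation traces to get $(\sigma_{a,2} \concat \sigma_a') : A''$ and $(\sigma_{b,2} \concat \sigma_b') : B''$, and finally apply it to $m_1$ using these as the new continuation traces; associativity of $\concat$ then gives the result. In a conditional case such as (\textsc{EM:Cond:Send:L}), the prepended message is $\dirobjr{v_a}$, and since $i = \m{ite}(v_a,1,2)$ forces $v_a$ to be a Boolean, I invoke the induction hypothesis on the selected branch $m_i$ (whose before-type is $A_i$) to obtain $(\sigma_{a,1} \concat \sigma_a') : A_i$, then close with (\textsc{TT:$\echoice$}); the goal holds regardless of whether the weight $[v_a = v_e]$ vanishes, since the statement makes no positivity claim.

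I expect the main obstacle to be the procedure-call case (\textsc{EM:Call}), which must reconcile the type-operator instantiation of (\textsc{TM:Call}) with the $\foldobj$-introduction rule (\textsc{TT:$\mu$}). Here the before-type is $T_a[A']$, and inverting the declaration (\textsc{TP:Dec}) types the body as $x_f : \tau_1 \mid \chtype{a}{A_f} ; \chtype{b}{B_f} \vdash m_f \dotsim \tau \mid \chtype{a}{X_a} ; \chtype{b}{X_b}$ with $\m{typedef}(T_a.X_a.A_f) \in \calT$. The key move is to apply \cref{Lem:GuideSubst} with $[A'/X_a]$ and $[B'/X_b]$, turning the body's final types $X_a, X_b$ into $A', B'$ and its initial types into $[A'/X_a]A_f, [B'/X_b]B_f$; the induction hypothesis on $m_f$ (with the well-typed environment $\emptyset[x_f \mapsto v_1]$ supplied by \cref{The:SoundnessExp}) then yields $(\sigma_{a,1} \concat \sigma_a') : [A'/X_a]A_f$, which is exactly the premise (\textsc{TT:$\mu$}) requires to conclude $[\foldobj] \concat \sigma_{a,1} \concat \sigma_a' : T_a[A']$. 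Lining up the substitution indices with the $\foldobj$ bookkeeping is the only genuinely delicate point; every other case is a routine instance of the \textsc{TM}/\textsc{TT} correspondence.
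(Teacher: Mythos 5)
Your proposal is correct and follows essentially the same route as the paper's proof: induction on the evaluation derivation with inversion on the typing judgment, the backward threading of continuation traces through (\textsc{TM:Bnd}) (IH on $m_2$ first, then on $m_1$), the use of \cref{Lem:GuideSubst} plus (\textsc{TT:$\mu$}) for the call case, and the observation that vanishing weights are irrelevant in the conditional cases. The only cosmetic difference is that in the sampling case you inline \cref{The:SoundnessExp} and \cref{Lem:SoundnessDist} where the paper cites \cref{Lem:SoundnessValue}, which amounts to the same argument.
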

\begin{proof}
  By induction on the derivation of $V \mid \chtype{a}{\sigma_a} ; \chtype{b}{\sigma_b} \vdash m \evalp{w} v$, followed by inversion on $\Gm \mid \chtype{a}{A} ; \chtype{b}{B} \vdash m \dotsim \tau \mid \chtype{a}{A'} ; \chtype{b}{B'}$.
  We show several nontrivial cases; others are similar to one of these cases.
  
  \begin{description}[labelindent=\parindent]
    \item[Case:]
    \begin{mathpar}\small
    \Rule{EM:Ret}
    { V \vdash e \evalto v
    }
    { V \mid \chtype{a}{[]} ; \chtype{b}{[]} \vdash \mret{e} \evalp{1} v }
    \and
    \Rule{TM:Ret}
    { \Gm \vdash e : \tau
    }
    { \Gm \mid \chtype{a}{A} ; \chtype{b}{B} \vdash \mret{e} \dotsim \tau \mid \chtype{a}{A} ; \chtype{b}{B} }
    \end{mathpar}
    
    $\sigma_a = []$, $\sigma_b = []$ \hfill (assumption)
    
    $(\sigma_a \concat \sigma_a') = \sigma_a'$, $(\sigma_b \concat \sigma_b') = \sigma_b'$
    
    $\sigma_a' : A'$, $\sigma_b' : B'$, $A = A'$, $B = B'$ \hfill (assumption)
    
    $(\sigma_a \concat \sigma_a') : A$, $(\sigma_b \concat \sigma_b') : B$
    
    \item[Case:]
    \begin{mathpar}\small
      \Rule{EM:Bnd}
      { V \mid \chtype{a}{\sigma_{a,1}} ; \chtype{b}{\sigma_{b,1}} \vdash m_1 \evalp{w_1} v_1 \\\\
        V[x \mapsto v_1] \mid \chtype{a}{\sigma_{a,2}} ; \chtype{b}{\sigma_{b,2}} \vdash m_2 \evalp{w_2} v
      }
      { V \mid \chtype{a}{\sigma_{a,1} \concat \sigma_{a,2}} ; \chtype{b}{\sigma_{b,1} \concat \sigma_{b,2}} \vdash \mbnd{m_1}{x}{m_2} \evalp{w_1 \cdot w_2} v }
      \and
      \Rule{TM:Bnd}
      { \Gm \mid \chtype{a}{A} ; \chtype{b}{B} \vdash m_1 \dotsim \tau_1 \mid \chtype{a}{A''} ; \chtype{b}{B''} \\\\
        \Gm, x : \tau_1 \mid \chtype{a}{A''} ; \chtype{b}{B''} \vdash m_2 \dotsim \tau \mid \chtype{a}{A'} ; \chtype{b}{B'}
      }
      { \Gm \mid \chtype{a}{A} ; \chtype{b}{B} \vdash \mbnd{m_1}{x}{m_2} \dotsim \tau \mid \chtype{a}{A'} ; \chtype{b}{B'} }
    \end{mathpar}
    
    $v_1 : \tau_1$ \hfill (\cref{Lem:SoundnessValue})
    
    $V : \Gm$ \hfill (assumption)
    
    $V[x \mapsto v_1] : (\Gm, x : \tau_1)$ \hfill (\textsc{TC:Extend})
    
    $\sigma_a' :A'$, $\sigma_b' : B'$ \hfill (assumption)
    
    $V[x \mapsto v_1] \mid \chtype{a}{\sigma_{a,2}} ; \chtype{b}{\sigma_{b,2}} \vdash m_2 \evalp{w_2} v$, $\Gm,x:\tau_1 \mid \chtype{a}{A''} ; \chtype{b}{B''} \vdash m_2 \dotsim \tau \mid \chtype{a}{A'} ; \chtype{b}{B'}$ \hfill (assumption)
    
    $(\sigma_{a,2} \concat \sigma_a') : A''$, $(\sigma_{b,2} \concat \sigma_b') : B''$ \hfill (I.H.)
    
    $V : \Gm$, $V \mid \chtype{a}{\sigma_{a,1}} ; \chtype{b}{\sigma_{b,1}} \vdash m_1 \evalp{w_1} v_1$, $\Gm \mid \chtype{a}{A} ; \chtype{b}{B} \vdash m_1 \dotsim \tau_1 \mid \chtype{a}{A''} ; \chtype{b}{B''}$ \hfill (assumption)
    
    $(\sigma_{a,1} \concat \sigma_{a,2} \concat \sigma_a') : A$, $(\sigma_{b,1} \concat \sigma_{b,2} \concat \sigma_b') : B$ \hfill (I.H.)
    
    $\sigma_a = \sigma_{a,1} \concat \sigma_{a,2}$, $\sigma_b = \sigma_{b,1} \concat \sigma_{b,2}$ \hfill (assumption)
    
    \item[Case:]
    \begin{mathpar}\small
      \Rule{EM:Call}
      { V \vdash e \evalto v_1 \\
        \calD(f) = \fundec{f}{\tau_1}{\tau}{x_f}{m_f}{a}{b} \\
        \emptyset[x_f \mapsto v_1] \mid \chtype{a}{\sigma_{a,1}} ; \chtype{b}{\sigma_{b,1}} \vdash m_f \evalp{w} v 
      }
      { V \mid \chtype{a}{[\foldobj] \concat \sigma_{a,1}} ; \chtype{b}{[\foldobj] \concat \sigma_{b,1}} \vdash \mcall{f}{e} \evalp{w} v }
      \and
      \Rule{TM:Call}
      { \Sg(f) = \tau_1 \leadsto \tau \mid \chtype{a}{T_a} ; \chtype{b}{T_b} \\
        \Gm \vdash e : \tau_1
      }
      { \Gm \mid \chtype{a}{T_a[A']} ; \chtype{b}{T_b[B']} \vdash \mcall{f}{e} \dotsim \tau \mid \chtype{a}{A'} ; \chtype{b}{B'} }
      \and
      \Rule{TP:Dec}
      { \m{typedef}(T_a.X_a.A''), \m{typedef}(T_b.X_b.B'') \in \calT \\\\
        x_f : \tau_1 \mid \chtype{a}{A''} ; \chtype{b}{B''} \vdash_{\Sg} m_f \dotsim \tau \mid \chtype{a}{X_a} ; \chtype{b}{X_b} 
      }
      { \vdash_{\Sg} \fundec{f}{\tau_1}{\tau}{x_f}{m_f}{a}{b} : \tau_1 \leadsto \tau \mid \chtype{a}{T_a} ; \chtype{b}{T_b} }
    \end{mathpar}
    
    $\Gm \vdash e : \tau_1$, $V \vdash e \evalto v_1$, $V : \Gm$ \hfill (assumption)
    
    $v_1 : \tau_1$ \hfill (\cref{The:SoundnessExp})
    
    $\emptyset[x_f \mapsto v_1] : (x_f : \tau_1)$ \hfill (\textsc{TC:Extend})
    
    $x_f : \tau_1 \mid \chtype{a}{A''} ; \chtype{b}{B''} \vdash_{\Sg} m_f \dotsim \tau \mid \chtype{a}{X_a} ; \chtype{b}{X_b}$ \hfill (assumption)
  
    $x_f : \tau_1 \mid \chtype{a}{[A'/X_a]A''} ; \chtype{b}{[B'/X_b]B''} \vdash m_f \dotsim \tau \mid \chtype{a}{A'} ; \chtype{b}{B'}$ \hfill (\cref{Lem:GuideSubst})
    
    $\sigma_a' : A'$, $\sigma_b' : B'$, $\emptyset[x_f \mapsto v_1] \mid \chtype{a}{\sigma_{a,1}} ; \chtype{b}{\sigma_{b,1}} \vdash m_f \evalp{w} v$ \hfill (assumption)
    
    $(\sigma_{a,1} \concat \sigma_a') : [A'/X_a]A''$, $(\sigma_{b,1} \concat \sigma_b') : [B'/X_b]B''$ \hfill (I.H.)
    
    $([\foldobj] \concat \sigma_{a,1} \concat \sigma_a') : T_a[A']$, $[\foldobj] \concat \sigma_{b,1} \concat \sigma_b') : T_b[B']$ \hfill (\textsc{TT:$\mu$})
    
    $\sigma_a = [\foldobj] \concat \sigma_{a,1}$, $\sigma_b = [\foldobj] \concat \sigma_{b,1}$ \hfill (assumption)
    
    \item[Case:]
    The reasoning below also works for \textsc{(TM:Sample:Recv:R)} and \textsc{(TM:Sample:Send:*)}.
    \begin{mathpar}\small
      \Rule{EM:Sample:Recv:L}
      { V \vdash e \evalto d \\
        v \in d.\mathrm{support} \\
        w = d.\mathrm{density}(v)
      }
      { V \mid \chtype{a}{[\msgobjl{v}]} ; \chtype{b}{[]} \vdash \msamplei{e}{a} \evalp{w} v }  
      \and
      \Rule{TM:Sample:Recv:L}
      { \Gm \vdash e : \tdist{\tau}
      }
      { \Gm \mid \chtype{a}{\tau \wedge A'} ; \chtype{b}{B'} \vdash \msamplei{e}{a} \dotsim \tau \mid \chtype{a}{A'} ; \chtype{b}{B'} }
    \end{mathpar}
    
    $v : \tau$ \hfill (\cref{Lem:SoundnessValue})
    
    $\sigma_a' : A'$, $\sigma_b' : B'$ \hfill (assumption)
    
    $([\msgobjl{v}] \concat \sigma_a') : \tau \wedge A'$, $([] \concat \sigma_b') : B'$ \hfill (\textsc{TT:$\wedge$})
    
    $A = \tau \wedge A'$, $B = B'$ \hfill (assumption)
    
    \item[Case:]
    The reasoning below also works for \textsc{(TM:Cond:Recv:R)}.
    \begin{mathpar}\small
    \Rule{EM:Cond:Recv:L}
    { i = \m{ite}(v_a,1,2) \\
      V \mid \chtype{a}{\sigma_{a,1}} ; \chtype{b}{\sigma_b} \vdash m_i \evalp{w} v
    }
    { V \mid \chtype{a}{[ \dirobjl{v_a} ] \concat \sigma_{a,1}} ; \chtype{b}{\sigma_b} \vdash \mbranchi{m_1}{m_2}{a} \evalp{w} v }
    \and
    \Rule{TM:Cond:Recv:L}
    { \Gm \mid \chtype{a}{A_1} ; \chtype{b}{B} \vdash m_1 \dotsim \tau \mid \chtype{a}{A'} ; \chtype{b}{B'} \\
      \Gm \mid \chtype{a}{A_2} ; \chtype{b}{B} \vdash m_2 \dotsim \tau \mid \chtype{a}{A'} ; \chtype{b}{B'}
    }
    { \Gm \mid \chtype{a}{A_1 \ichoice A_2} ; \chtype{b}{B} \vdash \mbranchi{m_1}{m_2}{a} \dotsim \tau \mid \chtype{a}{A'} ; \chtype{b}{B'} }
    \end{mathpar}
    
    $V \mid \chtype{a}{\sigma_{a,1}} ; \chtype{b}{\sigma_b} \vdash m_i \evalp{w} v$, $\Gm \mid \chtype{a}{A_i} ; \chtype{b}{B} \vdash m_i \dotsim \tau \mid \chtype{a}{A'} ; \chtype{b}{B'}$ \hfill (assumption)
    
    $(\sigma_{a,1} \concat \sigma_a') : A_i$, $(\sigma_b \concat \sigma_b') : B$ \hfill (I.H.)
    
    $v_a = \etrue$ or $v_a = \efalse$ (for $\m{ite}(-)$ to be well-defined) \hfill (assumption)
    
    $v_a : \tbool$
    
    $A_i = \m{ite}(v_a,A_1,A_2)$ \hfill (assumption)
    
    $([\dirobjl{v_a}] \concat \sigma_{a,1} \concat \sigma_a') : A_1 \ichoice A_2$ \hfill (\textsc{TT:$\ichoice$})
    
    $A = A_1 \ichoice A_2$ \hfill (assumption)
    
    \item[Case:]
    The reasoning below also works for (\textsc{TM:Cond:Send:R}).
    \begin{mathpar}\small
    \Rule{EM:Cond:Send:L}
    { V \vdash e \evalto v_e \\
      i = \m{ite}(v_a, 1, 2) \\
      V \mid \chtype{a}{\sigma_{a,1}} ; \chtype{b}{\sigma_b} \vdash m_i \evalp{w} v
    }
    { V \mid \chtype{a}{[\dirobjr{v_a}] \concat \sigma_{a,1}} ; \chtype{b}{\sigma_b} \vdash \mbrancho{e}{m_1}{m_2}{a} \evalp{w \cdot [v_a=v_e]} v }
    \and
    \Rule{TM:Cond:Send:L}
    { \Gm \vdash e : \tbool \\
      \Gm \mid \chtype{a}{A_1} ; \chtype{b}{B} \vdash m_1 \dotsim \tau \mid \chtype{a}{A'} ; \chtype{b}{B'} \\
      \Gm \mid \chtype{a}{A_2} ; \chtype{b}{B} \vdash m_2 \dotsim \tau \mid \chtype{a}{A'} ; \chtype{b}{B'}
    }
    { \Gm \mid \chtype{a}{A_1 \echoice A_2} ; \chtype{b}{B} \vdash \mbrancho{e}{m_1}{m_2}{a} \dotsim \tau \mid \chtype{a}{A'} ; \chtype{b}{B'} }
    \end{mathpar}
    
    $V \mid \chtype{a}{\sigma_{a,1}} ; \chtype{b}{\sigma_b} \vdash m_i \evalp{w} v$, $\Gm \mid \chtype{a}{A_i} ; \chtype{b}{B} \vdash m_i \dotsim \tau \mid \chtype{a}{A'} ; \chtype{b}{B'}$ \hfill (assumption)
    
    $(\sigma_{a,1} \concat \sigma_a') : A_i$, $(\sigma_b \concat \sigma_b') : B$ \hfill (I.H.)
    
    $v_a = \etrue$ or $v_a = \efalse$ (for $\m{ite}(-)$ to be well-defined) \hfill (assumption)
    
    $v_a : \tbool$
    
    $A_i = \m{ite}(v_a,A_1,A_2)$ \hfill (assumption)
    
    $([\dirobjr{v_a}] \concat \sigma_{a,1} \concat \sigma_a') : A_1 \echoice A_2$ \hfill (\textsc{TT:$\echoice$})
    
    $A = A_1 \echoice A_2$ \hfill (assumption)
  \end{description}
\end{proof}

\begin{corollary*}[\cref{Cor:SoundnessW}]
  If $\cdot \mid \chtype{a}{A} ; \chtype{b}{B} \vdash_\Sg m \dotsim \tau \mid \chtype{a}{\one} ; \chtype{b}{\one}$
  and $\emptyset \mid \chtype{a}{\sigma_a} ; \chtype{b}{\sigma_b} \vdash m \evalp{w} v$,
  then
  $\sigma_a : A$,
  $\sigma_b : B$,
  and $v : \tau$.
\end{corollary*}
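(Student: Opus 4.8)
The plan is to derive this statement as an immediate specialization of the two preceding theorems, \cref{Lem:SoundnessValue} (well-typed programs evaluate to well-typed values) and \cref{Lem:SoundnessW} (well-typed programs produce well-typed traces), instantiated at the empty context and the terminal guide types $\one$. First I would observe that the hypothesis uses the empty typing context $\cdot$ and the empty environment $\emptyset$, and that $\emptyset : \cdot$ holds by rule (\textsc{TC:Empty}); this discharges the $V : \Gm$ premise shared by both theorems.

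For the conclusion $v : \tau$, I would apply \cref{Lem:SoundnessValue} directly, taking $\Gm = \cdot$, $V = \emptyset$, and the final channel types $A' = B' = \one$. The hypotheses $\cdot \mid \chtype{a}{A} ; \chtype{b}{B} \vdash_\Sg m \dotsim \tau \mid \chtype{a}{\one} ; \chtype{b}{\one}$ and $\emptyset \mid \chtype{a}{\sigma_a} ; \chtype{b}{\sigma_b} \vdash m \evalp{w} v$ match the premises of that theorem verbatim, so $v : \tau$ follows with no further work.

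For the trace judgments $\sigma_a : A$ and $\sigma_b : B$, the key step is to apply \cref{Lem:SoundnessW} with the \emph{continuation} traces instantiated to the empty trace, i.e., $\sigma_a' = \sigma_b' = []$. These are well-typed at the terminal type by rule (\textsc{TT:$\one$}), which gives $[] : \one$; this supplies exactly the premises $\sigma_a' : A'$ and $\sigma_b' : B'$ (with $A' = B' = \one$) required by the theorem. The theorem then yields $(\sigma_a \concat []) : A$ and $(\sigma_b \concat []) : B$, and since concatenation with the empty trace is the identity, $\sigma_a \concat [] = \sigma_a$ and $\sigma_b \concat [] = \sigma_b$, we conclude $\sigma_a : A$ and $\sigma_b : B$.

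There is no real obstacle here: the corollary is a packaging of the two induction-proved theorems for the closed, fully-terminated case that the inference semantics of \cref{Se:Inference} actually use. The only point requiring a moment's thought is recognizing that the generalized continuation-trace formulation of \cref{Lem:SoundnessW} specializes correctly by choosing the empty trace as the continuation and invoking $[] : \one$ together with the right-identity of $\concat$.
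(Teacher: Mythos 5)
Your proposal is correct and matches the paper's own proof, which simply appeals to \cref{Lem:SoundnessValue} and \cref{Lem:SoundnessW}; you have merely spelled out the routine instantiation details (discharging $V : \Gm$ via (\textsc{TC:Empty}), choosing $\sigma_a' = \sigma_b' = []$ typed by (\textsc{TT:$\one$}), and using the right-identity of $\concat$) that the paper leaves implicit.
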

\begin{proof}
  Appeal to \cref{Lem:SoundnessValue,Lem:SoundnessW}.
\end{proof}

\begin{theorem}[Normalization, part I]\label{The:SoundnessI}
  If
  $\Gm \mid \chtype{a}{A} ; \chtype{b}{B} \vdash m \dotsim \tau \mid \chtype{a}{A'} ; \chtype{b}{B'}$,
  $V : \Gm$,
  $\sigma_a : A$,
  and $\sigma_b : B$,
  then
  there exist $w,v,\sigma_a'',\sigma_a',\sigma_b'',\sigma_b'$ such that
  $V \mid \chtype{a}{\sigma_a''} ; \chtype{b}{\sigma_b''} \vdash m \evalp{w} v$,
  $\sigma_a = \sigma_a'' \concat \sigma_a'$,
  $\sigma_b = \sigma_b'' \concat \sigma_b'$,
  $\sigma_a' : A'$,
  and $\sigma_b' : B'$.
\end{theorem}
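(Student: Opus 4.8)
The plan is to prove \cref{The:SoundnessI} by well-founded induction followed by case analysis on the last typing rule applied to $m$. The delicate point is that the (\textsc{TM:Call}) case must appeal to an evaluation of the callee's body $m_f$, which is \emph{not} a syntactic subterm of $\mcall{f}{e}$, so a plain structural induction on the typing derivation fails once recursion is present. Instead I would induct lexicographically on the pair $(N, |m|)$, where $N$ is the number of $\foldobj$ markers occurring in $\sigma_a \concat \sigma_b$ and $|m|$ is the structural size of the command. The key observation is that every procedure call strips a leading $\foldobj$ off the trace of each channel it touches, so $N$ strictly decreases in the (\textsc{TM:Call}) case, whereas all other recursive cases descend into strict subterms of $m$ (decreasing $|m|$) without introducing new $\foldobj$ markers; since the given traces are finite, this measure is well-founded.

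For the base cases, (\textsc{TM:Ret}) uses \cref{The:SoundnessExp} to evaluate $e$ to some $v$, takes the consumed traces to be empty, and keeps all of $\sigma_a,\sigma_b$ as the continuations (the input and output protocols coincide here). For sampling, e.g.\ (\textsc{TM:Sample:Recv:L}) with input protocol $\tau \wedge A'$, I would invert $\sigma_a : \tau \wedge A'$ via (\textsc{TT:$\wedge$}) to get $\sigma_a = [\msgobjl{v}] \concat \sigma_a'$ with $v : \tau$ and $\sigma_a' : A'$; then \cref{The:SoundnessExp} gives $e \evalto d$ with $d : \tdist{\tau}$, and \cref{Lem:SoundnessDist} converts $v : \tau$ into $v \in d.\mathrm{support}$, so (\textsc{EM:Sample:Recv:L}) applies with weight $d.\mathrm{density}(v)$. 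The send-side and dual sampling rules are symmetric.

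The inductive cases chain the hypotheses through the trace splitting. For (\textsc{TM:Bnd}) with intermediate protocols $A'', B''$, I would apply the IH to $m_1$ on $\sigma_a, \sigma_b$, obtaining an evaluation of $m_1$ to $v_1$ that consumes a prefix and leaves suffixes typed at $A'', B''$; \cref{Lem:SoundnessValue} gives $v_1 : \tau_1$, so the extended environment is well-typed via (\textsc{TC:Extend}), and the IH then applies to $m_2$ on the suffixes (smaller command, no new $\foldobj$'s). Concatenating the consumed prefixes and multiplying weights via (\textsc{EM:Bnd}) closes the case. The conditional cases invert the branch-trace rule ((\textsc{TT:$\ichoice$}) or (\textsc{TT:$\echoice$})) on $\sigma_a$ to peel off the leading $[\dirobjl{v_a}]$ or $[\dirobjr{v_a}]$ marker and select the branch $m_i$ dictated by the \emph{trace}, then apply the IH to $m_i$. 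Crucially, in (\textsc{TM:Cond:Send:L}) the predicate value $v_e$ from \cref{The:SoundnessExp} need not agree with the trace's selection $v_a$; rule (\textsc{EM:Cond:Send:L}) still yields an evaluation, merely scaling the weight by $[v_a = v_e]$ (possibly zero). This is precisely why part~I asserts only the existence of some $w$, deferring the strictly-positive-weight refinement to \cref{Cor:SoundnessII}.

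The (\textsc{TM:Call}) case carries the main work and is where I expect the chief obstacle. Given input protocols $T_a[A']$, $T_b[B']$, I would invert (\textsc{TT:$\mu$}) on $\sigma_a, \sigma_b$ to strip the $\foldobj$ markers, yielding $\sigma_a = [\foldobj]\concat\sigma_{a,1}$ with $\sigma_{a,1} : [A'/X_a]A''$ (and dually on $b$), where $\m{typedef}(T_a.X_a.A'')$ supplies the operator body. \cref{Lem:GuideSubst} then re-types $m_f$ — whose signature typing comes from (\textsc{TP:Dec}) — at the substituted protocols $[A'/X_a]A''$, $[B'/X_b]B''$ with continuations $A', B'$. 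Since $N$ has strictly decreased, the IH applies to $m_f$ under the singleton environment $\emptyset[x_f \mapsto v_1]$ (well-typed via \cref{The:SoundnessExp} and (\textsc{TC:Extend})), and prepending the $\foldobj$ markers via (\textsc{EM:Call}) reconstructs the evaluation of $\mcall{f}{e}$. The most delicate bookkeeping will be the interaction of the substitution lemma with the inversion of (\textsc{TT:$\mu$}), and the crux of the whole argument is confirming that the chosen measure genuinely decreases across this non-structural recursion into $m_f$.
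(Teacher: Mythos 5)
Your proposal is correct and follows essentially the same route as the paper: the paper proves this theorem ``by nested induction on the derivation of $\sigma_a : A$, $\sigma_b : B$, and the typing judgment,'' which plays precisely the role of your lexicographic $(N,|m|)$ measure---the trace-typing derivation strictly shrinks in the (\textsc{TM:Call}) case (by inversion of (\textsc{TT:$\mu$}), stripping the leading $\foldobj$), while all other cases descend structurally. Every case-level detail of your argument---\cref{The:SoundnessExp} and \cref{Lem:SoundnessDist} for the sampling rules, \cref{Lem:SoundnessValue} plus (\textsc{TC:Extend}) for (\textsc{TM:Bnd}), \cref{Lem:GuideSubst} for (\textsc{TM:Call}), and, in particular, letting the \emph{trace's} selection $v_a$ choose the branch in (\textsc{TM:Cond:Send:L}) so that the evaluation exists with weight scaled by the possibly-zero factor $[v_a = v_e]$---matches the paper's proof exactly.
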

\begin{proof}
  By nested induction on the derivation of $\sigma_a : A$, $\sigma_b : B$, and $\Gm \mid \chtype{a}{A} ; \chtype{b}{B} \vdash m \dotsim \tau \mid \chtype{a}{A'} ; \chtype{b}{B'}$.
  We show several nontrivial cases; others are similar to one of these cases.
  
  \begin{description}[labelindent=\parindent]
    \item[Case:]
    \[\small
    \Rule{TM:Ret}
    { \Gm \vdash e : \tau
    }
    { \Gm \mid \chtype{a}{A} ; \chtype{b}{B} \vdash \mret{e} \dotsim \tau \mid \chtype{a}{A} ; \chtype{b}{B} }
    \]
    
    $V : \Gm$, $\Gm \vdash e : \tau$ \hfill (assumption)
    
    $V \vdash e \evalto v$ for some $v$ \hfill (\cref{The:SoundnessExp})
    
    $V \mid \chtype{a}{[]} ; \chtype{b}{[]} \vdash \mret{e} \evalp{1} v$ \hfill (\textsc{EM:Ret})
    
    $\sigma_a : A$, $\sigma_b : B$ \hfill (assumption)
    
    \item[Case:]
    \[\small
    \Rule{TM:Bnd}
    { \Gm \mid \chtype{a}{A} ; \chtype{b}{B} \vdash m_1 \dotsim \tau_1 \mid \chtype{a}{A''} ; \chtype{b}{B''} \\
      \Gm, x : \tau_1 \mid \chtype{a}{A''} ; \chtype{b}{B''} \vdash m_2 \dotsim \tau \mid \chtype{a}{A'} ; \chtype{b}{B'}
    }
    { \Gm \mid \chtype{a}{A} ; \chtype{b}{B} \vdash \mbnd{m_1}{x}{m_2} \dotsim \tau \mid \chtype{a}{A'} ; \chtype{b}{B'} }
    \]
    
    $V \mid \chtype{a}{\sigma_{a,1}} ; \chtype{b}{\sigma_{b,1}} \vdash m_1 \evalp{w_1} v_1$ s.t. 
    $\sigma_a = \sigma_{a,1} \concat \sigma_{a,2}$, $\sigma_b = \sigma_{b,1} \concat \sigma_{b,2}$, $\sigma_{a,2} : A''$, $\sigma_{b,2} : B''$ \hfill (I.H.)
    
    $v_1 : \tau_1$ \hfill (\cref{Lem:SoundnessValue})
    
    $V[x \mapsto v_1] : (\Gm, x : \tau_1)$ \hfill (\textsc{TC:Extend})
    
    $V[x \mapsto v_1] \mid \chtype{a}{\sigma_{a,2}'} \mid \chtype{b}{\sigma_{b,2}'} \vdash m_2 \evalp{w_2} v$ s.t.
    $\sigma_{a,2} = \sigma_{a,2}' \concat \sigma_a'$, $\sigma_{b,2} = \sigma_{b,2}' \concat \sigma_b'$, $\sigma_a' : A'$, $\sigma_b' : B'$ \hfill (I.H.)
    
    Let $\sigma_a'' \defeq \sigma_{a,1} \concat \sigma_{a,2}'$, $\sigma_b'' \defeq \sigma_{b,1} \concat \sigma_{b,2}'$, thus $\sigma_a = \sigma_{a,1} \concat \sigma_{a,2} = \sigma_{a,1} \concat \sigma_{a,2}' \concat \sigma_a' = \sigma_a'' \concat \sigma_a'$, $\sigma_b = \sigma_b'' \concat \sigma_b'$
    
    $V \mid \chtype{a}{\sigma_a''} ; \chtype{b}{\sigma_b''} \vdash \mbnd{m_1}{x}{m_2} \evalp{w_1 \cdot w_2} v$ \hfill (\textsc{EM:Bnd})
    
    \item[Case:]
    \[\small
    \Rule{TM:Call}
    { \Sg(f) = \tau_1 \leadsto \tau \mid \chtype{a}{T_a} ; \chtype{b}{T_b} \\
      \Gm \vdash e : \tau_1
    }
    { \Gm \mid \chtype{a}{T_a[A']} ; \chtype{b}{T_b[B']} \vdash \mcall{f}{e} \dotsim \tau \mid \chtype{a}{A'} ; \chtype{b}{B'} }
    \]
    
    $V : \Gm$, $\Gm \vdash e : \tau_1$ \hfill (assumption)
    
    $V \vdash e \evalto v_1$ for some $v_1$ s.t. $v_1 : \tau_1$ \hfill (\cref{The:SoundnessExp})
    
    Let $\calD(f) = \fundec{f}{\tau_1}{\tau}{x_f}{m_f}{a}{b}$
    
    $\emptyset[x_f \mapsto v_1] : (x_f : \tau_1)$ \hfill (\textsc{TC:Extend})
    
    $x_f : \tau_1 \mid \chtype{a}{A_o} ; \chtype{b}{B_o} \vdash m_f \dotsim \tau \mid \chtype{a}{X_a} ; \chtype{b}{X_b}$, $\m{typedef}(T_a.X_a.A_o),\m{typedef}(T_b.X_b.B_o) \in \calT$ \hfill (assumption)
    
    $x_f : \tau_1 \mid \chtype{a}{[A'/X_a]A_o} ; \chtype{b}{[B'/X_b]B_o} \vdash m_f \dotsim \tau \mid \chtype{a}{A'} ; \chtype{b}{B'}$ \hfill (\cref{Lem:GuideSubst})
    
    $A = T_a[A']$, $\sigma_a : T_a[A']$ \hfill (assumption)
    
    $\sigma_a = [\foldobj] \concat \sigma_{a,\m{t}}$, $\sigma_{a,\m{t}} : [A' / X_a] A_o$ \hfill (inversion)
    
    $B = T_b[B']$, $\sigma_b : T_b[B']$ \hfill (assumption)
    
    $\sigma_b = [\foldobj] \concat \sigma_{b,\m{t}}$, $\sigma_{b,\m{t}} : [B' / X_b] B_o$ \hfill (inversion)
        
    $\emptyset[x_f \mapsto v_1] \mid \chtype{a}{\sigma_{a,\m{t}}'} ; \chtype{b}{\sigma_{b,\m{t}}'} \vdash m_f \evalp{w} v$ s.t.
    $\sigma_{a,\m{t}} = \sigma_{a,\m{t}}' \concat \sigma_a'$, $\sigma_{b,\m{t}} = \sigma_{b,\m{t}}' \concat \sigma_b'$, $\sigma_a' : A'$, $\sigma_b' : B'$ \hfill (I.H.)
    
    Let $\sigma_a'' \defeq [\foldobj] \concat \sigma_{a,\m{t}}'$, $\sigma_b'' \defeq [\foldobj] \concat \sigma_{b,\m{t}}'$
    
    $V \mid \chtype{a}{\sigma_a''} ; \chtype{b}{\sigma_b''} \vdash \mcall{f}{e} \evalp{w} v$ \hfill (\textsc{EM:Call})
    
    \item[Case:]
    The reasoning below also works for \textsc{(TM:Cond:Recv:R)}.
    \[\small
    \Rule{TM:Cond:Recv:L}
    { \Gm \mid \chtype{a}{A_1} ; \chtype{b}{B} \vdash m_1 \dotsim \tau \mid \chtype{a}{A'} ; \chtype{b}{B'} \\
      \Gm \mid \chtype{a}{A_2} ; \chtype{b}{B} \vdash m_2 \dotsim \tau \mid \chtype{a}{A'} ; \chtype{b}{B'}
    }
    { \Gm \mid \chtype{a}{A_1 \ichoice A_2} ; \chtype{b}{B} \vdash \mbranchi{m_1}{m_2}{a} \dotsim \tau \mid \chtype{a}{A'} ; \chtype{b}{B'} }
    \]
        
    $\sigma_a : A_1 \ichoice A_2$ \hfill (assumption)
    
    $\sigma_a = [\dirobjl{v_a}] \concat \sigma_{a,\m{t}}$, $\sigma_{a,\m{t}} : \m{ite}(v_a,A_1,A_2)$ \hfill (inversion)
    
    \begin{description}
      \item[Subcase:] $v_a = \etrue$, $\sigma_{a,\m{t}} : A_1$
            
      $V \mid \chtype{a}{\sigma_{a,\m{t}}'} ; \chtype{b}{\sigma_b''} \vdash m_1 \evalp{w} v$ s.t.
      $\sigma_{a,\m{t}} = \sigma_{a,\m{t}}' \concat \sigma_a'$, $\sigma_b = \sigma_b'' \concat \sigma_b'$, $\sigma_a' : A'$, $\sigma_b' : B'$ \hfill (I.H.)
      
      Let $\sigma_a'' \defeq [\dirobjl{\etrue}] \concat \sigma_{a,\m{t}}'$
      
      $V \mid \chtype{a}{\sigma_a''} ; \chtype{b}{\sigma_b''} \vdash \mbranchi{m_1}{m_2}{a} \evalp{w} v$ \hfill (\textsc{EM:Cond:Recv:L})
      
      \item[Subcase:] $v_a = \efalse$, $\sigma_{a,\m{t}} : A_2$
      
      $V \mid \chtype{a}{\sigma_{a,\m{t}}'} ; \chtype{b}{\sigma_b''} \vdash m_2 \evalp{w} v$ s.t.
      $\sigma_{a,\m{t}} = \sigma_{a,\m{t}}' \concat \sigma_a'$, $\sigma_b = \sigma_b'' \concat \sigma_b'$, $\sigma_a' : A'$, $\sigma_b' : B'$ \hfill (I.H.)
      
      Let $\sigma_a'' \defeq [\dirobjl{\efalse}] \concat \sigma_{a,\m{t}}'$
      
      $V \mid \chtype{a}{\sigma_a''} ; \chtype{b}{\sigma_b''} \vdash \mbranchi{m_1}{m_2}{a} \evalp{w} v$ \hfill (\textsc{EM:Cond:Recv:L})
    \end{description}
    
    \item[Case:] The reasoning below also works for (\textsc{TM:Cond:Send:R}).
    \[\small
    \Rule{TM:Cond:Send:L}
    { \Gm \vdash e : \tbool \\
      \Gm \mid \chtype{a}{A_1} ; \chtype{b}{B} \vdash m_1 \dotsim \tau \mid \chtype{a}{A'} ; \chtype{b}{B'} \\
      \Gm \mid \chtype{a}{A_2} ; \chtype{b}{B} \vdash m_2 \dotsim \tau \mid \chtype{a}{A'} ; \chtype{b}{B'}
    }
    { \Gm \mid \chtype{a}{A_1 \echoice A_2} ; \chtype{b}{B} \vdash \mbrancho{e}{m_1}{m_2}{a} \dotsim \tau \mid \chtype{a}{A'} ; \chtype{b}{B'} }
    \]
    
    $V : \Gm$, $\Gm \vdash e : \tbool$ \hfill (assumption)
    
    $V \vdash e \evalto v_e$ for some $v_e$ s.t. $v_e : \tbool$ \hfill (\cref{The:SoundnessExp})
    
    $\sigma_a : A_1 \echoice A_2$ \hfill (assumption) \\
    
    $\sigma_a = [\dirobjr{v_a}] \concat \sigma_{a,\m{t}}$, $\sigma_{a,\m{t}} : \m{ite}(v_a,A_1,A_2)$ \hfill (inversion)
    
    \begin{description}
      \item[Subcase:] $v_a = \etrue$, $\sigma_{a,\m{t}} : A_1$
      
      $V \mid \chtype{a}{\sigma_{a,\m{t}}'} ; \chtype{b}{\sigma_b''} \vdash m_1 \evalp{w} v$ s.t.
      $\sigma_{a,\m{t}} = \sigma_{a,\m{t}}' \concat \sigma_a'$, $\sigma_b = \sigma_b'' \concat \sigma_b'$, $\sigma_a' : A'$, $\sigma_b' : B'$ \hfill (I.H.)
      
      Let $\sigma_a'' \defeq [\dirobjl{\etrue}] \concat \sigma_{a,\m{t}}'$
      
      $V \mid \chtype{a}{\sigma_a''} ; \chtype{b}{\sigma_b''} \vdash \mbranchi{m_1}{m_2}{a} \evalp{w \cdot [v_e = \etrue]} v$ \hfill (\textsc{EM:Cond:Send:L})
      
      \item[Subcase:] $v_a = \efalse$, $\sigma_{a,\m{t}} : A_2$
      
      $V \mid \chtype{a}{\sigma_{a,\m{t}}'} ; \chtype{b}{\sigma_b''} \vdash m_2 \evalp{w} v$ s.t.
      $\sigma_{a,\m{t}} = \sigma_{a,\m{t}}' \concat \sigma_a'$, $\sigma_b = \sigma_b'' \concat \sigma_b'$, $\sigma_a' : A'$, $\sigma_b' : B'$ \hfill (I.H.)
      
      Let $\sigma_a'' \defeq [\dirobjl{\efalse}] \concat \sigma_{a,\m{t}}'$
      
      $V \mid \chtype{a}{\sigma_a''} ; \chtype{b}{\sigma_b''} \vdash \mbranchi{m_1}{m_2}{a} \evalp{w \cdot [v_e = \efalse]} v$ \hfill (\textsc{EM:Cond:Recv:L})
    \end{description}
    
    \item[Case:]
    The reasoning below also works for \textsc{(TM:Sample:Recv:R)} and \textsc{(TM:Sample:Send:*)}.
    \[\small
    \Rule{TM:Sample:Recv:L}
    { \Gm \vdash e : \tdist{\tau}
    }
    { \Gm \mid \chtype{a}{\tau \wedge A'} ; \chtype{b}{B} \vdash \msamplei{e}{a} \dotsim \tau \mid \chtype{a}{A'} ; \chtype{b}{B'} }
    \]
    
    $V : \Gm$, $\Gm \vdash e : \tdist{\tau}$ \hfill (assumption)
    
    $V \vdash e \evalto d$ for some $d$ s.t. $d : \tdist{\tau}$ \hfill (\cref{The:SoundnessExp})
    
    $\sigma_a : \tau \wedge A'$ \hfill (assumption)
    
    $\sigma_a = [ \msgobjl{v} ] \concat \sigma_a'$, $v : \tau$, $\sigma_a' : A'$ \hfill (inversion)
    
    $v \in d.\mathrm{support}$ \hfill (\cref{Lem:SoundnessDist})
    
    Let $w \defeq d.\mathrm{density}(v)$
    
    $V \mid \chtype{a}{[\msgobjl{v} ]} ; \chtype{b}{[]} \vdash \msamplei{e}{a} \evalp{w} v$ \hfill (\textsc{EM:Sample:Recv:L})
    
    $\sigma_a' : A'$, $\sigma_b : B$ \hfill (assumption)
  \end{description}
\end{proof}

\begin{corollary*}[\cref{Cor:SoundnessI}]
  If
  $\cdot \mid \chtype{a}{A} ; \chtype{b}{B} \vdash m \dotsim \tau \mid \chtype{a}{\one} ; \chtype{b}{\one}$,
  $\sigma_a : A$,
  and $\sigma_b : B$,
  then
  there exist $w,v$ such that
  $\emptyset \mid \chtype{a}{\sigma_a} ; \chtype{b}{\sigma_b} \vdash m \evalp{w} v$ and
  $v : \tau$.
\end{corollary*}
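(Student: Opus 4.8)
The plan is to derive this corollary as a direct instantiation of the full normalization theorem \cref{The:SoundnessI}, exploiting two simplifications afforded by the hypotheses. First I would apply \cref{The:SoundnessI} with the empty typing context $\Gm = \cdot$ and the continuation types $A' = \one$ and $B' = \one$; the side condition $V : \Gm$ is discharged by taking $V = \emptyset$, since $\emptyset : \cdot$ holds by (\textsc{TC:Empty}). This yields a weight $w$, a value $v$, and a split of the traces $\sigma_a = \sigma_a'' \concat \sigma_a'$ and $\sigma_b = \sigma_b'' \concat \sigma_b'$ such that $\emptyset \mid \chtype{a}{\sigma_a''} ; \chtype{b}{\sigma_b''} \vdash m \evalp{w} v$, together with $\sigma_a' : \one$ and $\sigma_b' : \one$.

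The key observation is that the residual traces must be empty. Since the only rule whose conclusion types a guidance trace at $\one$ is (\textsc{TT:$\one$}), and that rule forces the trace to be $[]$, inverting $\sigma_a' : \one$ and $\sigma_b' : \one$ gives $\sigma_a' = []$ and $\sigma_b' = []$. Consequently $\sigma_a = \sigma_a''$ and $\sigma_b = \sigma_b''$, so the evaluation judgment obtained above is in fact $\emptyset \mid \chtype{a}{\sigma_a} ; \chtype{b}{\sigma_b} \vdash m \evalp{w} v$ on the \emph{original} traces, which is exactly the claimed judgment.

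It then remains to establish $v : \tau$. For this I would invoke \cref{Lem:SoundnessValue} (well-typed closed programs evaluate to well-typed values), applied to the typing derivation $\cdot \mid \chtype{a}{A} ; \chtype{b}{B} \vdash m \dotsim \tau \mid \chtype{a}{\one} ; \chtype{b}{\one}$ and the evaluation judgment just derived, again using $\emptyset : \cdot$. This directly yields $v : \tau$ and completes the argument.

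Because all of the inductive work is already carried out in \cref{The:SoundnessI}, there is no genuine obstacle in this corollary; the only point requiring care is the inversion step, namely recognizing that typing a trace at $\one$ pins it down to be empty, so that the generic trace split $\sigma = \sigma'' \concat \sigma'$ collapses to $\sigma = \sigma''$ and the evaluation judgment can be stated on the given traces $\sigma_a, \sigma_b$ rather than on proper prefixes of them.
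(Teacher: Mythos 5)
Your proposal is correct and follows exactly the paper's own route: the paper proves this corollary by appealing to \cref{The:SoundnessI} and \cref{Lem:SoundnessValue}, which is precisely your instantiation (with $\Gm = \cdot$, $V = \emptyset$, and continuation types $\one$) followed by the value-typing lemma. Your explicit inversion step showing that $\sigma_a' : \one$ and $\sigma_b' : \one$ force the residual traces to be empty is the detail the paper leaves implicit, and you have filled it in correctly.
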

\begin{proof}
  Appeal to \cref{The:SoundnessI,Lem:SoundnessValue}.
\end{proof}

\begin{theorem}[Normalization, part II]\label{The:SoundnessII}
  If
  $\Gm \mid \chtype{a}{A} ; \chtype{b}{B} \vdash m \dotsim \tau \mid \chtype{a}{A'} ; \chtype{b}{B'}$,
  $V : \Gm$,
  $A$ is $\echoice$-free,
  $B$ is $\ichoice$-free,
  $\sigma_a : A$,
  and $\sigma_b : B$,
  then
  $A'$ is $\echoice$-free,
  $B'$ is $\ichoice$-free,
  and there exist $w,v,\sigma_a'',\sigma_a',\sigma_b'',\sigma_b'$ such that
  $V \mid \chtype{a}{\sigma_a''} ; \chtype{b}{\sigma_b''} \vdash m \evalp{w} v$,
  $w>0$,
  $\sigma_a = \sigma_a'' \concat \sigma_a'$,
  $\sigma_b = \sigma_b'' \concat \sigma_b'$,
  $\sigma_a' : A'$,
  and $\sigma_b' : B'$.
\end{theorem}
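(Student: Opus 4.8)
The plan is to prove \cref{The:SoundnessII} as a strengthening of \cref{The:SoundnessI}: I would redo the same nested induction on the derivations of $\sigma_a : A$, $\sigma_b : B$, and $\Gm \mid \chtype{a}{A} ; \chtype{b}{B} \vdash m \dotsim \tau \mid \chtype{a}{A'} ; \chtype{b}{B'}$, but additionally maintain the invariants ``$A$ is $\echoice$-free'' and ``$B$ is $\ichoice$-free'' and track that the weight $w$ stays strictly positive. The crucial observation is that the \emph{only} evaluation rules able to produce a zero weight are \textsc{(EM:Cond:Send:L)} and \textsc{(EM:Cond:Send:R)}, whose weights carry the Iverson factors $[v_a = v_e]$ and $[v_b = v_e]$; every other rule contributes a factor of $1$ (\textsc{EM:Ret}), a density $d.\mathrm{density}(v)$, or a product/pass-through of recursively obtained weights. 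Since \textsc{(EM:Cond:Send:L)} is elaborated by \textsc{(TM:Cond:Send:L)}, which requires the consumed type to be $A_1 \echoice A_2$, and \textsc{(EM:Cond:Send:R)} by \textsc{(TM:Cond:Send:R)}, which requires the provided type to be $B_1 \ichoice B_2$, these two rules are excluded outright by the $\echoice$-free/$\ichoice$-free hypotheses; thus only the weight-positive rules can fire, which already explains where the strict positivity comes from.

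For the positive cases I would argue as follows. In the \textsc{(TM:Sample:*)} cases, inversion on the trace yields a head message $\msgobjl{v}$ (or its dual) with $v : \tau$; by \cref{Lem:SoundnessDist}, $v : \tau$ and $d : \tdist{\tau}$ give $v \in d.\mathrm{support}$, i.e.\ $d.\mathrm{density}(v) > 0$, so the emitted weight is strictly positive, and the continuation-type freeness is immediate from the defining clauses, since $\tau \wedge A'$ is $\echoice$-free exactly when $A'$ is, and likewise for $\tau \supset A'$ and the provided-channel analogues. In the \textsc{(TM:Cond:Recv:*)} cases the head type is $A_1 \ichoice A_2$ (resp.\ $B_1 \echoice B_2$), which is $\echoice$-free (resp.\ $\ichoice$-free) precisely when both branches are; inverting the trace selects one branch $A_i$, and the inductive hypothesis on that branch delivers a positive weight together with the freeness of $A', B'$, which then propagate unchanged. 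The \textsc{(TM:Bnd)} case chains two applications of the inductive hypothesis: the first transmits $\echoice$-freeness of $A$ and $\ichoice$-freeness of $B$ to the intermediate types $A'', B''$ and yields $w_1 > 0$; the second yields $w_2 > 0$ and the freeness of $A', B'$, whence $w_1 \cdot w_2 > 0$.

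The main obstacle will be the \textsc{(TM:Call)} case, where the consumed and provided types are the operator instantiations $T_a[A']$ and $T_b[B']$. Here I would invert the traces via \textsc{(TT:$\mu$)} to obtain $\sigma_a = [\foldobj] \concat \sigma_{a,\m{t}}$ with $\sigma_{a,\m{t}} : [A'/X_a]A_o$ (and dually for $b$), and use \cref{Lem:GuideSubst} to type the callee body $m_f$ against the substituted types $[A'/X_a]A_o$ and $[B'/X_b]B_o$, exactly as in \cref{The:SoundnessI}. To run the strengthened inductive hypothesis on $m_f$ I must know that $[A'/X_a]A_o$ is $\echoice$-free and $[B'/X_b]B_o$ is $\ichoice$-free, and this is where the definition of the freeness predicates on type operators matters: I would fix the convention that type variables are freeness-neutral and that $T[A']$ is $\echoice$-free iff both $A'$ and the body $A_o$ of $T$ are $\echoice$-free, and then prove a short substitution-preservation lemma---if $A_o$ is $\echoice$-free with $X_a$ treated as $\echoice$-free and $A'$ is $\echoice$-free, then $[A'/X_a]A_o$ is $\echoice$-free---by straightforward structural induction, with the dual statement for $\ichoice$. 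With that lemma the inductive hypothesis applies to $m_f$, its weight is positive and passes through \textsc{(EM:Call)} unchanged, and the $\foldobj$-prefixed trace splitting is assembled as in part~I; the remaining cases (\textsc{TM:Ret} and the sample/branch duals not spelled out above) are either immediate or symmetric to those already discussed.
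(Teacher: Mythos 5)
Your proposal is correct and follows essentially the same route as the paper's proof: the same nested induction over the trace typings and the command typing, exclusion of \textsc{(TM:Cond:Send:*)} via the $\echoice$-/$\ichoice$-freeness hypotheses, \cref{Lem:SoundnessDist} for strict positivity in the sample cases, and \cref{Lem:GuideSubst} plus inversion via \textsc{(TT:$\mu$)} in the \textsc{(TM:Call)} case. The only difference is that you spell out, as an explicit substitution-preservation lemma, the fact that $\echoice$-freeness of $T_a[A']$ entails $\echoice$-freeness of $[A'/X_a]A_o$ (and dually for $\ichoice$), which the paper asserts in a single line inside the \textsc{(TM:Call)} case; this is a harmless and arguably welcome elaboration rather than a different approach.
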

\begin{proof}
  By nested induction on the derivation of $\sigma_a : A$, $\sigma_b : B$, and $\Gm \mid \chtype{a}{A} ; \chtype{b}{B} \vdash m \dotsim \tau \mid \chtype{a}{A'} ; \chtype{b}{B'}$.
  We show several nontrivial cases; others are similar to one of these cases.
  
  \begin{description}[labelindent=\parindent]
    \item[Case:]
    \[\small
    \Rule{TM:Ret}
    { \Gm \vdash e : \tau
    }
    { \Gm \mid \chtype{a}{A} ; \chtype{b}{B} \vdash \mret{e} \dotsim \tau \mid \chtype{a}{A} ; \chtype{b}{B} }
    \]
    
    $V : \Gm$, $\Gm \vdash e : \tau$ \hfill (assumption)
    
    $V \vdash e \evalto v$ for some $v$ \hfill (\cref{The:SoundnessExp})
    
    $V \mid \chtype{a}{[]} ; \chtype{b}{[]} \vdash \mret{e} \evalp{1} v$ \hfill (\textsc{EM:Ret})
    
    $A$ is $\echoice$-free, $B$ is $\ichoice$-free, $\sigma_a : A$, $\sigma_b : B$ \hfill (assumption)
    
    \item[Case:]
    \[\small
    \Rule{TM:Bnd}
    { \Gm \mid \chtype{a}{A} ; \chtype{b}{B} \vdash m_1 \dotsim \tau_1 \mid \chtype{a}{A''} ; \chtype{b}{B''} \\
      \Gm, x : \tau_1 \mid \chtype{a}{A''} ; \chtype{b}{B''} \vdash m_2 \dotsim \tau \mid \chtype{a}{A'} ; \chtype{b}{B'}
    }
    { \Gm \mid \chtype{a}{A} ; \chtype{b}{B} \vdash \mbnd{m_1}{x}{m_2} \dotsim \tau \mid \chtype{a}{A'} ; \chtype{b}{B'} }
    \]
    
    $A''$ is $\echoice$-free, $B''$ is $\ichoice$-free \hfill (I.H.)
    
    $V \mid \chtype{a}{\sigma_{a,1}} ; \chtype{b}{\sigma_{b,1}} \vdash m_1 \evalp{w_1} v_1$ s.t. $w_1>0$,
    $\sigma_a = \sigma_{a,1} \concat \sigma_{a,2}$, $\sigma_b = \sigma_{b,1} \concat \sigma_{b,2}$, $\sigma_{a,2} : A''$, $\sigma_{b,2} : B''$, and
    
    \quad $A''$ is $\echoice$-free, $B''$ is $\ichoice$-free \hfill (I.H.)
    
    $v_1 : \tau_1$ \hfill (\cref{Lem:SoundnessValue})
    
    $V[x \mapsto v_1] : (\Gm, x : \tau_1)$ \hfill (\textsc{TC:Extend})
    
    $V[x \mapsto v_1] \mid \chtype{a}{\sigma_{a,2}'} \mid \chtype{b}{\sigma_{b,2}'} \vdash m_2 \evalp{w_2} v$ s.t. $w_2>0$,
    $\sigma_{a,2} = \sigma_{a,2}' \concat \sigma_a'$, $\sigma_{b,2} = \sigma_{b,2}' \concat \sigma_b'$, $\sigma_a' : A'$, $\sigma_b' : B'$, and
    
    \quad $A'$ is $\echoice$-free, $B'$ is $\ichoice$-free \hfill (I.H.)
    
    Let $\sigma_a'' \defeq \sigma_{a,1} \concat \sigma_{a,2}'$, $\sigma_b'' \defeq \sigma_{b,1} \concat \sigma_{b,2}'$
    
    $V \mid \chtype{a}{\sigma_a''} ; \chtype{b}{\sigma_b''} \vdash \mbnd{m_1}{x}{m_2} \evalp{w_1 \cdot w_2} v$ \hfill (\textsc{EM:Bnd})
    
    $w_1 > 0, w_2 > 0$ thus $w_1 \cdot w_2 > 0$
    
    \item[Case:]
    \[\small
    \Rule{TM:Call}
    { \Sg(f) = \tau_1 \leadsto \tau \mid \chtype{a}{T_a} ; \chtype{b}{T_b} \\
      \Gm \vdash e : \tau_1
    }
    { \Gm \mid \chtype{a}{T_a[A']} ; \chtype{b}{T_b[B']} \vdash \mcall{f}{e} \dotsim \tau \mid \chtype{a}{A'} ; \chtype{b}{B'} }
    \]
    
    $V : \Gm$, $\Gm \vdash e : \tau_1$ \hfill (assumption)
    
    $V \vdash e \evalto v_1$ for some $v_1$ s.t. $v_1 : \tau_1$ \hfill (\cref{The:SoundnessExp})
    
    Let $\calD(f) = \fundec{f}{\tau_1}{\tau}{x_f}{m_f}{a}{b}$
    
    $\emptyset[x_f \mapsto v_1] : (x_f : \tau_1)$ \hfill (\textsc{TC:Extend})
    
    $x_f : \tau_1 \mid \chtype{a}{A_o} ; \chtype{b}{B_o} \vdash m_f \dotsim \tau \mid \chtype{a}{X_a} ; \chtype{b}{X_b}$, $\m{typedef}(T_a.X_a.A_o),\m{typedef}(T_b.X_b.B_o) \in \calT$ \hfill (assumption)
    
    $x_f : \tau_1 \mid \chtype{a}{[A'/X_a]A_o} ; \chtype{b}{[B'/X_b]B_o} \vdash m_f \dotsim \tau \mid \chtype{a}{A'} ; \chtype{b}{B'}$ \hfill (\cref{Lem:GuideSubst})
    
    $A = T_a[A']$, $\sigma_a : T_a[A']$ \hfill (assumption)
    
    $\sigma_a = [\foldobj] \concat \sigma_{a,\m{t}}$, $\sigma_{a,\m{t}} : [A' / X_a] A_o$ \hfill (inversion)
    
    $B = T_b[B']$, $\sigma_b : T_b[B']$ \hfill (assumption)
    
    $\sigma_b = [\foldobj] \concat \sigma_{b,\m{t}}$, $\sigma_{b,\m{t}} : [B' / X_b] B_o$ \hfill (inversion)
    
    $T_a[A']$ is $\echoice$-free, $T_b[B']$ is $\ichoice$-free $\implies$ $[A'/X_a]A_o$ is $\echoice$-free, $[B'/X_b]B_o$ is $\ichoice$-free
    
    $A'$ is $\echoice$-free, $B'$ is $\ichoice$-free \hfill (I.H.)
    
    $\emptyset[x_f \mapsto v_1] \mid \chtype{a}{\sigma_{a,\m{t}}'} ; \chtype{b}{\sigma_{b,\m{t}}'} \vdash m_f \evalp{w} v$ s.t. $w>0$, and
    $\sigma_{a,\m{t}} = \sigma_{a,\m{t}}' \concat \sigma_a'$, $\sigma_{b,\m{t}} = \sigma_{b,\m{t}}' \concat \sigma_b'$, $\sigma_a' : A'$, $\sigma_b' : B'$ \hfill (I.H.)
    
    Let $\sigma_a'' \defeq [\foldobj] \concat \sigma_{a,\m{t}}'$, $\sigma_b'' \defeq [\foldobj] \concat \sigma_{b,\m{t}}'$
    
    $V \mid \chtype{a}{\sigma_a''} ; \chtype{b}{\sigma_b''} \vdash \mcall{f}{e} \evalp{w} v$ \hfill (\textsc{EM:Call})
    
    \item[Case:]
    The reasoning below also works for \textsc{(TM:Cond:Recv:R)}.
    \[\small
    \Rule{TM:Cond:Recv:L}
    { \Gm \mid \chtype{a}{A_1} ; \chtype{b}{B} \vdash m_1 \dotsim \tau \mid \chtype{a}{A'} ; \chtype{b}{B'} \\
      \Gm \mid \chtype{a}{A_2} ; \chtype{b}{B} \vdash m_2 \dotsim \tau \mid \chtype{a}{A'} ; \chtype{b}{B'}
    }
    { \Gm \mid \chtype{a}{A_1 \ichoice A_2} ; \chtype{b}{B} \vdash \mbranchi{m_1}{m_2}{a} \dotsim \tau \mid \chtype{a}{A'} ; \chtype{b}{B'} }
    \]
    
    $(A_1 \ichoice A_2)$ $\echoice$-free implies $A_1$ $\echoice$-free, $A_2$ $\echoice$-free \hfill (assumption)
    
    $\sigma_a : A_1 \ichoice A_2$ \hfill (assumption)
    
    $\sigma_a = [\dirobjl{v_a}] \concat \sigma_{a,\m{t}}$, $\sigma_{a,\m{t}} : \m{ite}(v_a,A_1,A_2)$ \hfill (inversion)
    
    \begin{description}
      \item[Subcase:] $v_a = \etrue$, $\sigma_{a,\m{t}} : A_1$
      
      $A'$ is $\echoice$-free, $B'$ is $\ichoice$-free \hfill (I.H.) 
      
      $V \mid \chtype{a}{\sigma_{a,\m{t}}'} ; \chtype{b}{\sigma_b''} \vdash m_1 \evalp{w} v$ s.t. $w>0$, and
      $\sigma_{a,\m{t}} = \sigma_{a,\m{t}}' \concat \sigma_a'$, $\sigma_b = \sigma_b'' \concat \sigma_b'$, $\sigma_a' : A'$, $\sigma_b' : B'$ \hfill (I.H.)
      
      Let $\sigma_a'' \defeq [\dirobjl{\etrue}] \concat \sigma_{a,\m{t}}'$
      
      $V \mid \chtype{a}{\sigma_a''} ; \chtype{b}{\sigma_b''} \vdash \mbranchi{m_1}{m_2}{a} \evalp{w} v$ \hfill (\textsc{EM:Cond:Recv:L})
      
      \item[Subcase:] $v_a = \efalse$, $\sigma_{a,\m{t}} : A_2$
      
      $A'$ is $\echoice$-free, $B'$ is $\ichoice$-free \hfill (I.H.) 
      
      $V \mid \chtype{a}{\sigma_{a,\m{t}}'} ; \chtype{b}{\sigma_b''} \vdash m_2 \evalp{w} v$ s.t. $w>0$, and
      $\sigma_{a,\m{t}} = \sigma_{a,\m{t}}' \concat \sigma_a'$, $\sigma_b = \sigma_b'' \concat \sigma_b'$, $\sigma_a' : A'$, $\sigma_b' : B'$ \hfill (I.H.)
      
      Let $\sigma_a'' \defeq [\dirobjl{\efalse}] \concat \sigma_{a,\m{t}}'$
      
      $V \mid \chtype{a}{\sigma_a''} ; \chtype{b}{\sigma_b''} \vdash \mbranchi{m_1}{m_2}{a} \evalp{w} v$ \hfill (\textsc{EM:Cond:Recv:L})
    \end{description}
    
    \item[Case:]
    The reasoning below also works for \textsc{(TM:Sample:Recv:R)} and \textsc{(TM:Sample:Send:*)}.
    \[\small
    \Rule{TM:Sample:Recv:L}
    { \Gm \vdash e : \tdist{\tau}
    }
    { \Gm \mid \chtype{a}{\tau \wedge A'} ; \chtype{b}{B} \vdash \msamplei{e}{a} \dotsim \tau \mid \chtype{a}{A'} ; \chtype{b}{B'} }
    \]
    
    $V : \Gm$, $\Gm \vdash e : \tdist{\tau}$ \hfill (assumption)
    
    $V \vdash e \evalto d$ for some $d$ s.t. $d : \tdist{\tau}$ \hfill (\cref{The:SoundnessExp})
    
    $\sigma_a : \tau \wedge A'$ \hfill (assumption)
    
    $\sigma_a = [ \msgobjl{v} ] \concat \sigma_a'$, $v : \tau$, $\sigma_a' : A'$ \hfill (inversion)
    
    $v \in d.\mathrm{support}$, $d.\mathrm{density}(v) > 0$ \hfill (\cref{Lem:SoundnessDist})
    
    Let $w \defeq d.\mathrm{density}(v)$
    
    $V \mid \chtype{a}{[\msgobjl{v} ]} ; \chtype{b}{[]} \vdash \msamplei{e}{a} \evalp{w} v$ \hfill (\textsc{EM:Sample:Recv:L})
    
    $A'$ is $\echoice$-free, $B$ is $\ichoice$-free, $\sigma_a' : A'$, $\sigma_b : B$ \hfill (assumption)
  \end{description}
\end{proof}

\begin{corollary*}[\cref{Cor:SoundnessII}]
  If
  $\cdot \mid \chtype{a}{A} ; \chtype{b}{B} \vdash m \dotsim \tau \mid \chtype{a}{\one} ; \chtype{b}{\one}$,
  $A$ is $\echoice$-free,
  $B$ is $\ichoice$-free,
  $\sigma_a : A$,
  and $\sigma_b : B$,
  then
  there exist $w,v$ such that
  $\emptyset \mid \chtype{a}{\sigma_a} ; \chtype{b}{\sigma_b} \vdash m \evalp{w} v$,
  $v : \tau$,
  and $w>0$.
\end{corollary*}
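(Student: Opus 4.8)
The plan is to obtain \cref{Cor:SoundnessII} as a direct specialization of the normalization theorem \cref{The:SoundnessII}. First I would instantiate that theorem with the empty context $\Gm = \cdot$ and the empty environment $V = \emptyset$; the hypothesis $V : \Gm$ is then discharged by rule (\textsc{TC:Empty}), which gives $\emptyset : \cdot$. The remaining hypotheses of \cref{The:SoundnessII}---namely that $A$ is $\echoice$-free, $B$ is $\ichoice$-free, $\sigma_a : A$, and $\sigma_b : B$---coincide exactly with the hypotheses of the corollary. Applying the theorem therefore yields a weight $w > 0$, a value $v$, and residual traces $\sigma_a'', \sigma_a', \sigma_b'', \sigma_b'$ such that
\[
\emptyset \mid \chtype{a}{\sigma_a''} ; \chtype{b}{\sigma_b''} \vdash m \evalp{w} v,
\]
with $\sigma_a = \sigma_a'' \concat \sigma_a'$ and $\sigma_b = \sigma_b'' \concat \sigma_b'$; moreover, since the continuation types of the command are $\one$ in both positions, we also obtain $\sigma_a' : \one$ and $\sigma_b' : \one$.

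The next step is to collapse the residual traces. By inversion on the trace-typing judgment, the only rule that can conclude $\sigma : \one$ is (\textsc{TT:$\one$}), whose conclusion forces $\sigma = []$. Hence $\sigma_a' = []$ and $\sigma_b' = []$, so $\sigma_a'' = \sigma_a$ and $\sigma_b'' = \sigma_b$. Substituting these equalities into the evaluation judgment above gives precisely
\[
\emptyset \mid \chtype{a}{\sigma_a} ; \chtype{b}{\sigma_b} \vdash m \evalp{w} v
\]
with $w > 0$, which establishes the two probability-related conclusions of the corollary.

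Finally, to obtain $v : \tau$, I would appeal to \cref{Lem:SoundnessValue}, supplying the original typing derivation $\cdot \mid \chtype{a}{A} ; \chtype{b}{B} \vdash m \dotsim \tau \mid \chtype{a}{\one} ; \chtype{b}{\one}$, the evaluation judgment just derived, and the fact $\emptyset : \cdot$. Since all of the substantive work---constructing a strictly-positive-weight evaluation and threading $\echoice$-/$\ichoice$-freeness through the derivation---has already been performed inside \cref{The:SoundnessII}, I do not expect any genuine obstacle at the level of this corollary; the only step demanding any care is the inversion argument that eliminates the residual traces, which hinges on $\one$ admitting only the empty trace as an inhabitant. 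This mirrors, for the strengthened normalization statement, the way \cref{Cor:SoundnessI} is recovered from \cref{The:SoundnessI}.
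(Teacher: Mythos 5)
Your proposal is correct and matches the paper's own proof, which simply appeals to \cref{The:SoundnessII} and \cref{Lem:SoundnessValue}; you have merely spelled out the details the paper leaves implicit (the instantiation $\Gm = \cdot$, $V = \emptyset$, and the inversion showing that a trace of type $\one$ must be empty, so the residual traces vanish). No gaps.
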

\begin{proof}
  Appeal to \cref{The:SoundnessII,Lem:SoundnessValue}.
\end{proof}

To justify that our operational semantics correctly keeps track of \emph{possible} traces for running a program,
we introduce a reduction relation $V \mid \chtype{a}{\sigma_a} ; \chtype{b}{\sigma_b} \red m \evalto v$ that ignores all the information about probabilities.
\cref{Fig:RedRules} presents reduction rules for commands.

\begin{figure}
\begin{mathpar}\small
  \Rule{RM:Ret}
  { V \vdash e \evalto v
  }
  { V \mid \chtype{a}{[]} ; \chtype{b}{[]} \red \mret{e} \evalto v }
  \and
  \Rule{RM:Bnd}
  { V \mid \chtype{a}{\sigma_a} ; \chtype{b}{\sigma_b} \red m_1 \evalto v_1 \\
    V[x \mapsto v_1] \mid \chtype{a}{\sigma_a'} ; \chtype{b}{\sigma_b'} \red m_2 \evalto v_2
  }
  { V \mid \chtype{a}{\sigma_a \concat \sigma_a'} ; \chtype{b}{\sigma_b \concat \sigma_b'} \red \mbnd{m_1}{x}{m_2} \evalto v_2 }
  \and
  \Rule{RM:Sample:Recv:L}
  { V \vdash e \evalto d \\
    v \in d.\mathrm{support}
  }
  { V \mid \chtype{a}{[\msgobjl{v}]} ; \chtype{b}{[]} \red \msamplei{e}{a} \evalto v }
  \and
  \Rule{RM:Sample:Send:R}
  { V \vdash e \evalto d \\
    v \in d.\mathrm{support}
  }
  { V \mid \chtype{a}{[]} ; \chtype{b}{[\msgobjl{v}]} \red \msampleo{e}{b} \evalto v }
  \and
  \Rule{RM:Sample:Send:L}
  { V \vdash e \evalto d \\
    v \in d.\mathrm{support}
  }
  { V \mid \chtype{a}{[\msgobjr{v}]} ; \chtype{b}{[]} \red \msampleo{e}{a} \evalto v }
  \and
  \Rule{RM:Sample:Recv:R}
  { V \vdash e \evalto d \\
    v \in d.\mathrm{support}
  }
  { V \mid \chtype{a}{[]} ; \chtype{b}{[\msgobjr{v}]} \red \msamplei{e}{b} \evalto v }
  \and
  \Rule{RM:Cond:Recv:L}
  { i = \m{ite}(v_a, 1, 2) \\
    V \mid \chtype{a}{\sigma_a} ; \chtype{b}{\sigma_b} \red m_i \evalto v
  }
  { V \mid \chtype{a}{[\dirobjl{v_a}] \concat \sigma_a} ; \chtype{b}{\sigma_b} \red \mbranchi{m_1}{m_2}{a} \evalto v }
  \and
  \Rule{RM:Cond:Send:R}
  { V \vdash e \evalto v_e \\
    i = \m{ite}(v_e, 1, 2) \\
    V \mid \chtype{a}{\sigma_a} ; \chtype{b}{\sigma_b} \red m_i \evalto v
  }
  { V \mid \chtype{a}{\sigma_a} ; \chtype{b}{[\dirobjl{v_e}] \concat \sigma_b} \red \mbrancho{e}{m_1}{m_2}{b} \evalto v }
  \and
  \Rule{RM:Cond:Send:L}
  { V \vdash e \evalto v_e \\
    i = \m{ite}(v_e, 1, 2) \\
    V \mid \chtype{a}{\sigma_a} ; \chtype{b}{\sigma_b} \red m_i \evalto v
  }
  { V \mid \chtype{a}{[\dirobjr{v_e}] \concat \sigma_a} ; \chtype{b}{\sigma_b} \red \mbrancho{e}{m_1}{m_2}{a} \evalto v }
  \and
  \Rule{RM:Cond:Recv:R}
  { i = \m{ite}(v_b, 1, 2) \\
    V \mid \chtype{a}{\sigma_a} ; \chtype{b}{\sigma_b} \red m_i \evalto v
  }
  { V \mid \chtype{a}{\sigma_a} ; \chtype{b}{[\dirobjr{v_b}] \concat \sigma_b} \red \mbranchi{m_1}{m_2}{b} \evalto v }
  \and
  \Rule{RM:Call}
  { \calD(f) = \fundec{f}{\tau_1}{\tau_2}{x_f}{m_f}{a}{b} \\
    V \vdash e \evalto v_1 \\
    \emptyset[x_f \mapsto v_1] \mid \chtype{a}{\sigma_a} ; \chtype{b}{\sigma_b} \red m_f \evalto v_2
  }
  { V \mid \chtype{a}{[\foldobj] \concat \sigma_a} ; \chtype{b}{[\foldobj] \concat \sigma_b} \red \mcall{f}{e} \evalto v_2 }
\end{mathpar}
\caption{Reduction rules for commands.}\label{Fig:RedRules}
\end{figure}

\begin{theorem}\label{Lem:SoundnessIII}
  Suppose that
  $\Gm \mid \chtype{a}{A} ; \chtype{b}{B} \vdash m \dotsim \tau \mid \chtype{a}{A'} ; \chtype{b}{B'}$ and
  $V : \Gm$.
  Then for 
  any $\sigma_a, \sigma_b, v$,
  we have
  $V \mid \chtype{a}{\sigma_a} ; \chtype{b}{\sigma_b} \red m \evalto v$
  if and only if
  $V \mid \chtype{a}{\sigma_a} ; \chtype{b}{\sigma_b} \vdash m \evalp{w} v$ for some $w > 0$.
\end{theorem}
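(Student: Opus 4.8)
The plan is to prove the two directions of the biconditional separately, each by structural induction on the relevant derivation, using inversion on the typing judgment $\Gm \mid \chtype{a}{A} ; \chtype{b}{B} \vdash m \dotsim \tau \mid \chtype{a}{A'} ; \chtype{b}{B'}$ to pin down the shape of $m$. For the forward direction ($\red$ implies $\evalp{w}$ with $w>0$), I would induct on the derivation of $V \mid \chtype{a}{\sigma_a} ; \chtype{b}{\sigma_b} \red m \evalto v$; for the backward direction, I would induct on the derivation of $V \mid \chtype{a}{\sigma_a} ; \chtype{b}{\sigma_b} \vdash m \evalp{w} v$, additionally carrying the hypothesis $w>0$. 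Because the command syntax determines the rule family in both systems (e.g.\ a receive-sample $\msamplei{e}{a}$ forces a receive rule and a send-branch $\mbrancho{e}{m_1}{m_2}{a}$ a send rule), the reduction rules of \cref{Fig:RedRules} and the evaluation rules of \cref{Fig:FullEvalElab} are in one-to-one correspondence, so each inductive step rebuilds the matching derivation in the other system and merely tracks the weight.

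The routine cases are (\textsc{RM/EM:Ret}), where the weight is $1>0$, and the four sampling families (\textsc{RM/EM:Sample:*}): here both systems share the premise $v \in d.\mathrm{support}$, and the evaluation rule additionally records $w = d.\mathrm{density}(v)$. By \cref{Lem:SoundnessDist}, $v \in d.\mathrm{support}$ holds exactly when $d.\mathrm{density}(v) > 0$, so the support premise already guarantees positivity of the recorded weight; the forward direction invokes this implication and the backward direction simply drops the weight. For (\textsc{RM/EM:Bnd}) and (\textsc{RM/EM:Call}) the induction hypothesis on the second subderivation must be applied under the extended environment $V[x \mapsto v_1]$, so I first need $V[x \mapsto v_1] : (\Gm, x:\tau_1)$; I obtain $v_1 : \tau_1$ by applying the induction hypothesis to the first subderivation to produce an evaluation and then invoking \cref{Lem:SoundnessValue}. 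In the backward direction for $\mbnd{m_1}{x}{m_2}$ the key arithmetic fact is that $w_1 \cdot w_2 > 0$ forces both $w_1 > 0$ and $w_2 > 0$, so both induction hypotheses apply.

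The main obstacle is the branch-send case (\textsc{RM/EM:Cond:Send:L}) and its mirror on channel $b$, since this is the \emph{only} place where the two systems do more than relabel weights. The reduction rule evaluates the predicate $e$ to $v_e$, uses $i = \m{ite}(v_e,1,2)$, and writes $v_e$ itself into the trace as $\dirobjr{v_e}$; the evaluation rule instead reads a branch value $v_a$ from the trace, uses $i = \m{ite}(v_a,1,2)$, and multiplies the weight by the Iverson bracket $[v_a = v_e]$. In the forward direction I set $v_a := v_e$, so the indices coincide, $[v_a = v_e] = 1$, and the weight is left unchanged and positive by the induction hypothesis. In the backward direction, positivity of $w' \cdot [v_a = v_e]$ forces $[v_a = v_e] = 1$, i.e.\ $v_a = v_e$; this makes $\m{ite}(v_a,1,2) = \m{ite}(v_e,1,2)$ and rewrites the trace $[\dirobjr{v_a}] \concat \sigma_a$ as $[\dirobjr{v_e}] \concat \sigma_a$, exactly aligning the evaluation derivation with (\textsc{RM:Cond:Send:L}), while the residual $w' > 0$ feeds the induction hypothesis. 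This interplay between the Iverson bracket and the branch index is precisely where \emph{positive weight} in the probabilistic semantics coincides with \emph{possible} in the reduction semantics, which is what \cref{Lem:JustificationOfIncludingModel} ultimately relies on.
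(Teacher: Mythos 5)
Your proposal is correct and takes essentially the same route as the paper's proof: induction on the relevant derivation with inversion on the typing judgment, \cref{Lem:SoundnessDist} to pass between support membership and strictly positive density in the sample cases, \cref{Lem:SoundnessValue} with (\textsc{TC:Extend}) to type the extended environment in the bind case, factoring $w_1 \cdot w_2 > 0$ into $w_1 > 0$ and $w_2 > 0$, and the Iverson-bracket argument forcing $v_a = v_e$ in the (\textsc{Cond:Send}) cases. The only detail you gloss over is the call case, where $v_1 : \tau_1$ comes from expression soundness (\cref{The:SoundnessExp}) rather than the command induction hypothesis, the body runs in $\emptyset[x_f \mapsto v_1]$ rather than $V[x \mapsto v_1]$, and applying the induction hypothesis to the procedure body requires the substitution lemma (\cref{Lem:GuideSubst}) to instantiate the type operators' continuation variables---all steps the paper supplies and that slot directly into your outline.
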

\begin{proof}
  By induction on the derivation of the evaluation judgment,
  followed by inversion on $\Gm \mid \chtype{a}{A} ; \chtype{b}{B} \vdash m \dotsim \tau \mid \chtype{a}{A'} ; \chtype{b}{B'}$.
  
  We show several nontrivial cases; others are similar to one of these cases.
  
  \begin{description}[labelindent=\parindent]
    \item[Case:]
    \begin{mathpar}\small
    \Rule{TM:Ret}
    { \Gm \vdash e : \tau
    }
    { \Gm \mid \chtype{a}{A} ; \chtype{b}{B} \vdash \mret{e} \dotsim \tau \mid \chtype{a}{A} ; \chtype{b}{B} }
    \end{mathpar}
    
    \begin{itemize}
      \item The ``if'' direction:
      \begin{mathpar}\small
      \Rule{EM:Ret}
      { V \vdash e \evalto v
      }
      { V \mid \chtype{a}{[]} ; \chtype{b}{[]} \vdash \mret{e} \evalp{1} v }  
      \end{mathpar}
      
      $V \vdash e \evalto v$ \hfill (assumption)
      
      $V \mid \chtype{a}{[]} ; \chtype{b}{[]} \red \mret{e} \evalto v$ \hfill (\textsc{RM:Ret})
      
      \item The ``only if'' direction:
      \begin{mathpar}\small
      \Rule{RM:Ret}
      { V \vdash e \evalto v
      }
      { V \mid \chtype{a}{[]} ; \chtype{b}{[]} \red \mret{e} \evalto v }  
      \end{mathpar}
      
      $V \vdash e \evalto v$ \hfill (assumption)
      
      $V \mid \chtype{a}{[]} ; \chtype{b}{[]} \vdash \mret{e} \evalp{1} v$ with $1 > 0$ \hfill (\textsc{EM:Ret})
    \end{itemize}
    
    \item[Case:]
    \begin{mathpar}\small
      \Rule{TM:Bnd}
      { \Gm \mid \chtype{a}{A} ; \chtype{b}{B} \vdash m_1 \dotsim \tau_1 \mid \chtype{a}{A''} ; \chtype{b}{B''} \\\\
        \Gm, x : \tau_1 \mid \chtype{a}{A''} ; \chtype{b}{B''} \vdash m_2 \dotsim \tau \mid \chtype{a}{A'} ; \chtype{b}{B'}
      }
      { \Gm \mid \chtype{a}{A} ; \chtype{b}{B} \vdash \mbnd{m_1}{x}{m_2} \dotsim \tau \mid \chtype{a}{A'} ; \chtype{b}{B'} }
    \end{mathpar}
    
    \begin{itemize}
      \item The ``if'' direction:
      \begin{mathpar}\small
        \Rule{EM:Bnd}
        { V \mid \chtype{a}{\sigma_{a,1}} ; \chtype{b}{\sigma_{b,1}} \vdash m_1 \evalp{w_1} v_1 \\
          V[x \mapsto v_1] \mid \chtype{a}{\sigma_{a,2}} ; \chtype{b}{\sigma_{b,2}} \vdash m_2 \evalp{w_2} v
        }
        { V \mid \chtype{a}{\sigma_{a,1} \concat \sigma_{a,2}} ; \chtype{b}{\sigma_{b,1} \concat \sigma_{b,2}} \vdash \mbnd{m_1}{x}{m_2} \evalp{w_1 \cdot w_2} v }  
      \end{mathpar}

      $w_1 \cdot w_2 > 0$ \hfill (assumption)
      
      $w_1 > 0$ and $w_2 > 0$
      
      $V \mid \chtype{a}{\sigma_{a,1}} ; \chtype{b}{\sigma_{b,1}} \vdash m_1 \evalp{w_1} v_1$,
      $\Gm \mid \chtype{a}{A} ; \chtype{b}{B} \vdash m_1 \dotsim \tau_1 \mid \chtype{a}{A''} ; \chtype{b}{B''}$, $V : \Gm$ \hfill (assumption)
      
      $V \mid \chtype{a}{\sigma_{a,1}} ; \chtype{b}{\sigma_{b,1} } \red m_1 \evalto v_1$ \hfill (I.H.)
      
      $v_1 : \tau_1$ \hfill (\cref{Lem:SoundnessValue})
      
      $V[x \mapsto v_1] : (\Gm, x : \tau_1)$ \hfill (\textsc{TC:Extend})
      
      $V[x \mapsto v_1] \mid \chtype{a}{\sigma_{a,2}} ; \chtype{b}{\sigma_{b,2}} \vdash m_2 \evalp{w_2} v$, $\Gm,x:\tau_1 \mid \chtype{a}{A''} ; \chtype{b}{B''} \vdash m_2 \dotsim \tau \mid \chtype{a}{A'} ; \chtype{b}{B'}$ \hfill (assumption)
      
      $V[x \mapsto v_1] \mid \chtype{a}{\sigma_{a,2}} ; \chtype{b}{\sigma_{b,2}} \red m_2 \evalto v$ \hfill (I.H.)
      
      $V \mid \chtype{a}{\sigma_{a,1} \concat \sigma_{a,2}} ; \chtype{b}{\sigma_{b,1} \concat \sigma_{b,2}} \red \mbnd{m_1}{x}{m_2} \evalto v$ \hfill (\textsc{RM:Bnd})
      
      \item The ``only if'' direction:
      \begin{mathpar}\small
        \Rule{RM:Bnd}
        { V \mid \chtype{a}{\sigma_{a,1}} ; \chtype{b}{\sigma_{b,1}} \red m_1 \evalto v_1 \\
          V[x \mapsto v_1] \mid \chtype{a}{\sigma_{a,2}} ; \chtype{b}{\sigma_{b,2}} \red m_2 \evalto v
        }
        { V \mid \chtype{a}{\sigma_{a,1} \concat \sigma_{a,2}} ; \chtype{b}{\sigma_{b,1} \concat \sigma_{b,2}} \red \mbnd{m_1}{x}{m_2} \evalto v }  
      \end{mathpar}
      
      $V \mid \chtype{a}{\sigma_{a,1}} ; \chtype{b}{\sigma_{b,1}} \red m_1 \evalto v_1$, $\Gm \mid \chtype{a}{A} ; \chtype{b}{B} \vdash m_1 \dotsim \tau_1 \mid \chtype{a}{A''} ; \chtype{b}{B''}$, $V : \Gm$ \hfill (assumption)
      
      $V \mid \chtype{a}{\sigma_{a,1}} ; \chtype{b}{\sigma_{b,1}} \vdash m_1 \evalp{w_1} v_1$ for some $w_1 > 0$ \hfill (I.H.)
      
      $v_1 : \tau_1$ \hfill (\cref{Lem:SoundnessValue})
      
      $V[x \mapsto v_1] : (\Gm, x : \tau_1)$ \hfill (\textsc{TC:Extend})
      
      $V[x \mapsto v_1] \mid \chtype{a}{\sigma_{a,2}} ; \chtype{b}{\sigma_{b,2}} \red m_2 \evalto v$, $\Gm, x:\tau_1 \mid \chtype{a}{A''} ; \chtype{b}{B'' } \vdash m_2 \dotsim \tau \mid \chtype{a}{A'} ; \chtype{b}{B'}$ \hfill (assumption)
      
      $V[x \mapsto v_1] \mid \chtype{a}{\sigma_{a,2}} ; \chtype{b}{\sigma_{b,2}} \vdash m_2 \evalp{w_2} v$ for some $w_2 > 0$ \hfill (I.H.)

      $V \mid \chtype{a}{\sigma_{a,1} \concat \sigma_{a,2}} ; \chtype{b}{\sigma_{b,1} \concat \sigma_{b,2}} \vdash \mbnd{m_1}{x}{m_2} \evalp{w_1 \cdot w_2} v$ \hfill (\textsc{EM:Bnd})
      
      $w_1 > 0$, $w_2 > 0$ thus $w_1 \cdot w_2 > 0$
    \end{itemize}
        
    \item[Case:]
    \begin{mathpar}\small
      \Rule{TM:Call}
      { \Sg(f) = \tau_1 \leadsto \tau \mid \chtype{a}{T_a} ; \chtype{b}{T_b} \\
        \Gm \vdash e : \tau_1
      }
      { \Gm \mid \chtype{a}{T_a[A']} ; \chtype{b}{T_b[B']} \vdash \mcall{f}{e} \dotsim \tau \mid \chtype{a}{A'} ; \chtype{b}{B'} }
      \and
      \Rule{TP:Dec}
      { \m{typedef}(T_a.X_a.A''), \m{typedef}(T_b.X_b.B'') \in \calT \\\\
        x_f : \tau_1 \mid \chtype{a}{A''} ; \chtype{b}{B''} \vdash_{\Sg} m_f \dotsim \tau \mid \chtype{a}{X_a} ; \chtype{b}{X_b} 
      }
      { \vdash_{\Sg} \fundec{f}{\tau_1}{\tau}{x_f}{m_f}{a}{b} : \tau_1 \leadsto \tau \mid \chtype{a}{T_a} ; \chtype{b}{T_b} }
    \end{mathpar}
    
    \begin{itemize}
      \item The ``if'' direction:
      \begin{mathpar}\small
        \Rule{EM:Call}
        { V \vdash e \evalto v_1 \\
          \calD(f) = \fundec{f}{\tau_1}{\tau}{x_f}{m_f}{a}{b} \\
          \emptyset[x_f \mapsto v_1] \mid \chtype{a}{\sigma_{a,1}} ; \chtype{b}{\sigma_{b,1}} \vdash m_f \evalp{w} v 
        }
        { V \mid \chtype{a}{[\foldobj] \concat \sigma_{a,1}} ; \chtype{b}{[\foldobj] \concat \sigma_{b,1}} \vdash \mcall{f}{e} \evalp{w} v }
      \end{mathpar}
      
      $\Gm \vdash e : \tau_1$, $V \vdash e \evalto v_1$, $V : \Gm$ \hfill (assumption)
      
      $v_1 : \tau_1$ \hfill (\cref{The:SoundnessExp})
      
      $\emptyset[x_f \mapsto v_1] : (x_f : \tau_1)$ \hfill (\textsc{TC:Extend})
      
      $x_f : \tau_1 \mid \chtype{a}{A''} ; \chtype{b}{B''} \vdash_\Sg m_f \dotsim \tau \mid \chtype{a}{X_a} ; \chtype{b}{X_b}$ \hfill (assumption)
      
      $x_f : \tau_1 \mid \chtype{a}{[A'/X_a]A''} ; \chtype{b}{[B'/X_b]B''} \vdash m_f \dotsim \tau \mid \chtype{a}{A'} ; \chtype{b}{B'}$ \hfill (\cref{Lem:GuideSubst})
      
      $\emptyset[x_f \mapsto v_1] \mid \chtype{a}{\sigma_{a,1}} ; \chtype{b}{\sigma_{b,1}} \vdash m_f \evalp{w} v$ \hfill (assumption)
      
      $\emptyset[x_f \mapsto v_1] \mid \chtype{a}{\sigma_{a,1}} ; \chtype{b}{\sigma_{b,1}} \red m_f \evalto v$ \hfill (I.H.)
      
      $V \mid \chtype{a}{[\foldobj] \concat \sigma_{a,1}} ; \chtype{b}{[\foldobj] \concat \sigma_{b,1}} \red \mcall{f}{e} \evalto v$ \hfill (\textsc{RM:Call})

      \item The ``only if'' direction:
      \begin{mathpar}\small
        \Rule{RM:Call}
        { V \vdash e \evalto v \\
          \calD(f) = \fundec{f}{\tau_1}{\tau}{x_f}{m_f}{a}{b} \\
          \emptyset[x_f \mapsto v_1] \mid \chtype{a}{\sigma_{a,1}} ; \chtype{b}{\sigma_{b,1}} \red m_f \evalto v
        }
        { V \mid \chtype{a}{[\foldobj] \concat \sigma_{a,1}} ; \chtype{b}{[\foldobj] \concat \sigma_{b,1}} \red \mcall{f}{e} \evalto v }
      \end{mathpar}
      
      $\Gm \vdash e : \tau_1$, $V \vdash e \evalto v$, $V : \Gm$ \hfill (assumption)
      
      $v_1 : \tau_1$ \hfill (\cref{The:SoundnessExp})
      
      $\emptyset[x_f \mapsto v_1] : (x_f : \tau_1)$ \hfill (\textsc{TC:Extend})
      
      $x_f : \tau_1 \mid \chtype{a}{A''} ; \chtype{b}{B''} \vdash_\Sg m_f \dotsim \tau \mid \chtype{a}{X_a} ; \chtype{b}{X_b}$ \hfill (assumption)
      
      $x_f : \tau_1 \mid \chtype{a}{[A'/X_a]A''} ; \chtype{b}{[B'/X_b]B''} \vdash m_f \dotsim \tau \mid \chtype{a}{A'} ; \chtype{b}{B'}$ \hfill (\cref{Lem:GuideSubst})
      
      $\emptyset[x_f \mapsto v_1] \mid \chtype{a}{\sigma_{a,1}} ; \chtype{b}{\sigma_{b,1}} \red m_f \evalto v$ \hfill (assumption)
      
      $\emptyset[x_f \mapsto v_1] \mid \chtype{a}{\sigma_{a,1}} ; \chtype{b}{\sigma_{b,1}} \vdash m_f \evalp{w} v$ for some $w > 0$ \hfill (I.H.)
      
      $V \mid \chtype{a}{[\foldobj] \concat \sigma_{a,1}} ; \chtype{b}{[\foldobj] \concat \sigma_{b,1}} \vdash m_f \evalp{w} v$ \hfill (\textsc{EM:Call})
    \end{itemize}
    
    \item[Case:]
    The reasoning below also works for \textsc{(TM:Sample:Recv:R)} and \textsc{(TM:Sample:Send:*)}.
    \begin{mathpar}\small
      \Rule{TM:Sample:Recv:L}
      { \Gm \vdash e : \tdist{\tau}
      }
      { \Gm \mid \chtype{a}{\tau \wedge A'} ; \chtype{b}{B'} \vdash \msamplei{e}{a} \dotsim \tau \mid \chtype{a}{A'} ; \chtype{b}{B'} }
    \end{mathpar}
    
    \begin{itemize}
      \item The ``if'' direction:
      \begin{mathpar}\small
        \Rule{EM:Sample:Recv:L}
        { V \vdash e \evalto d \\
          v \in d.\mathrm{support} \\
          w = d.\mathrm{density}(v)
        }
        { V \mid \chtype{a}{[\msgobjl{v}]} ; \chtype{b}{[]} \vdash \msamplei{e}{a} \evalp{w} v } 
      \end{mathpar}
      
      $V \vdash e \evalto d$, $v \in d.\mathrm{support}$ \hfill (assumption)
      
      $V \mid \chtype{a}{[\msgobjl{v}]} ; \chtype{b}{[]} \red \msamplei{e}{a} \evalto v$ \hfill (\textsc{RM:Sample:Recv:L})
      
      \item The ``only if'' direction:
      \begin{mathpar}\small
        \Rule{RM:Sample:Recv:L}
        { V \vdash e \evalto d \\
          v \in d.\mathrm{support}
        }
        { V \mid \chtype{a}{[\msgobjl{v}] } ; \chtype{b}{[]} \red \msamplei{e}{a} \evalto v }
      \end{mathpar}
      
      $\Gm \vdash e : \tdist{\tau}$, $V \vdash e \evalto d$, $V : \Gm$ \hfill (assumption)
      
      $d : \tdist{\tau}$ \hfill (\cref{The:SoundnessExp})
      
      $v \in d.\mathrm{support}$ \hfill (assumption)
      
      $v : \tau$ and $d.\mathrm{density}(v) > 0$ \hfill (\cref{Lem:SoundnessDist})
      
      $V \mid \chtype{a}{[\msgobjl{v}]} ; \chtype{b}{[]} \vdash \msamplei{e}{a} \evalp{d.\mathrm{density}(v)} v$ \hfill (\textsc{EM:Sample:Recv:L})
    \end{itemize}
    
    \item[Case:]
    The reasoning below also works for \textsc{(TM:Cond:Recv:R)}.
    \begin{mathpar}\small
    \Rule{TM:Cond:Recv:L}
    { \Gm \mid \chtype{a}{A_1} ; \chtype{b}{B} \vdash m_1 \dotsim \tau \mid \chtype{a}{A'} ; \chtype{b}{B'} \\
      \Gm \mid \chtype{a}{A_2} ; \chtype{b}{B} \vdash m_2 \dotsim \tau \mid \chtype{a}{A'} ; \chtype{b}{B'}
    }
    { \Gm \mid \chtype{a}{A_1 \ichoice A_2} ; \chtype{b}{B} \vdash \mbranchi{m_1}{m_2}{a} \dotsim \tau \mid \chtype{a}{A'} ; \chtype{b}{B'} }
    \end{mathpar}
    
    \begin{itemize}
      \item The ``if'' direction:
      \begin{mathpar}\small
        \Rule{EM:Cond:Recv:L}
        { i = \m{ite}(v_a,1,2) \\
         V \mid \chtype{a}{\sigma_{a,1}} ; \chtype{b}{\sigma_b} \vdash m_i \evalp{w} v
        }
        { V \mid \chtype{a}{[ \dirobjl{v_a} ] \concat \sigma_{a,1}} ; \chtype{b}{\sigma_b} \vdash \mbranchi{m_1}{m_2}{a} \evalp{w} v }  
      \end{mathpar}
      
      $V \mid \chtype{a}{\sigma_{a,1}} ; \chtype{b}{\sigma_b} \vdash m_i \evalp{w} v$, $\Gm \mid \chtype{a}{A_i} ; \chtype{b}{B} \vdash m_i \dotsim \tau \mid \chtype{a}{A'} ; \chtype{b}{B'}$, $V : \Gm$ \hfill (assumption)
      
      $V \mid \chtype{a}{\sigma_{a,1}} ; \chtype{b}{\sigma_b} \red m_i \evalto v$ \hfill (I.H.)
      
      $V \mid \chtype{a}{[\dirobjl{v_a}] \concat \sigma_{a,1}} ; \chtype{b}{\sigma_b} \red \mbranchi{m_1}{m_2}{a} \evalto v$ \hfill (\textsc{RM:Cond:Recv:L})
      
      \item The ``only if'' direction:
      \begin{mathpar}\small
        \Rule{RM:Cond:Recv:L}
        { i = \m{ite}(v_a,1,2) \\
          V \mid \chtype{a}{\sigma_{a,1}} ; \chtype{b}{\sigma_b} \red m_i \evalto v
        }
        {  V \mid \chtype{a}{[\dirobjl{v_a}] \concat \sigma_{a,1}} ; \chtype{b}{\sigma_b} \red \mbranchi{m_1}{m_2}{a} \evalto v  }
      \end{mathpar}
      
      $V \mid \chtype{a}{\sigma_{a,1}}; \chtype{b}{\sigma_b} \red m_i \evalto v$, $\Gm \mid \chtype{a}{A_i} ; \chtype{b}{B} \vdash m_i \dotsim \tau \mid \chtype{a}{A'} ; \chtype{b}{B'}$, $V : \Gm$ \hfill (assumption)
      
      $V \mid \chtype{a}{\sigma_{a,1}} ; \chtype{b}{\sigma_b} \vdash m_i \evalp{w} v$ for some $w>0$ \hfill (I.H.)
      
      $V \mid \chtype{a}{[\dirobjl{v_a}] \concat \sigma_{a,1}} ; \chtype{b}{\sigma_b} \vdash \mbranchi{m_1}{m_2}{a} \evalp{w} v$ \hfill (\textsc{EM:Cond:Recv:L})
    \end{itemize}
        
    \item[Case:]
    The reasoning below also works for (\textsc{TM:Cond:Send:R}).
    \begin{mathpar}\small
    \Rule{TM:Cond:Send:L}
    { \Gm \vdash e : \tbool \\
      \Gm \mid \chtype{a}{A_1} ; \chtype{b}{B} \vdash m_1 \dotsim \tau \mid \chtype{a}{A'} ; \chtype{b}{B'} \\
      \Gm \mid \chtype{a}{A_2} ; \chtype{b}{B} \vdash m_2 \dotsim \tau \mid \chtype{a}{A'} ; \chtype{b}{B'}
    }
    { \Gm \mid \chtype{a}{A_1 \echoice A_2} ; \chtype{b}{B} \vdash \mbrancho{e}{m_1}{m_2}{a} \dotsim \tau \mid \chtype{a}{A'} ; \chtype{b}{B'} }
    \end{mathpar}
    
    \begin{itemize}
      \item The ``if'' direction:
      \begin{mathpar}\small
        \Rule{EM:Cond:Send:L}
        { V \vdash e \evalto v_e \\
          i = \m{ite}(v_a, 1, 2) \\
          V \mid \chtype{a}{\sigma_{a,1}} ; \chtype{b}{\sigma_b} \vdash m_i \evalp{w} v
        }
        { V \mid \chtype{a}{[\dirobjr{v_a}] \concat \sigma_{a,1}} ; \chtype{b}{\sigma_b} \vdash \mbrancho{e}{m_1}{m_2}{a} \evalp{w \cdot [v_a=v_e]} v }  
      \end{mathpar}
      
      $w \cdot [v_a = v_e] > 0$ \hfill (assumption)
      
      $w > 0$ and $v_a = v_e$
      
      $V \mid \chtype{a}{\sigma_{a,1}} ; \chtype{b}{\sigma_b} \vdash m_i \evalp{w} v$, $\Gm \mid \chtype{a}{A_i} ; \chtype{b}{B} \vdash m_i \dotsim \tau \mid \chtype{a}{A'} ; \chtype{b}{B'}$, $V : \Gm$ \hfill (assumption)
      
      $V \mid \chtype{a}{\sigma_{a,1}} ; \chtype{b}{\sigma_b} \red m_i \evalto v$ \hfill (I.H.)
      
      $V \mid \chtype{a}{[\dirobjr{v_e}] \concat \sigma_{a,1}} ; \chtype{b}{\sigma_b} \red \mbrancho{e}{m_1}{m_2}{a} \evalto v$ \hfill (\textsc{RM:Cond:Send:L})

      \item The ``only if'' direction:
      \begin{mathpar}\small
        \Rule{RM:Cond:Send:L}
        { V \vdash e \evalto v_e \\
          i = \m{ite}(v_e, 1, 2) \\
          V \mid \chtype{a}{\sigma_{a,1}} ; \chtype{b}{\sigma_b} \red m_i \evalto v
        }
        { V \mid \chtype{a}{[\dirobjr{v_e}] \concat \sigma_{a,1}} ; \chtype{b}{\sigma_b} \red \mbrancho{e}{m_1}{m_2}{a} \evalto v }
      \end{mathpar}
      
      $V \mid \chtype{a}{\sigma_{a,1}} ; \chtype{b}{\sigma_b} \red m_i \evalto v$, $\Gm \mid \chtype{a}{A_i} ; \chtype{b}{B} \vdash m_i \dotsim \tau \mid \chtype{a}{A'} ; \chtype{b}{B'}$, $V : \Gm$ \hfill (assumption)
      
      $V \mid \chtype{a}{\sigma_{a,1}} ; \chtype{b}{\sigma_b} \vdash m_i \evalp{w} v$ for some $w > 0$ \hfill (I.H.)
      
      $V \vdash e \evalto v_e$ \hfill (assumption)
      
      $[v_e = v_e] = 1$ thus $w \cdot [v_e = v_e] = w > 0$
      
      $V \mid \chtype{a}{[\dirobjr{v_e}] \concat \sigma_{a,1}} ; \chtype{b}{\sigma_b} \vdash \mbrancho{e}{m_1}{m_2}{a} \evalp{w} v$ \hfill (\textsc{EM:Cond:Send:L})
    \end{itemize}
  \end{description}
\end{proof}

\begin{corollary}\label{Cor:SoundnessIII}
  Suppose that
  $\cdot \mid \chtype{a}{A} ; \chtype{b}{B} \vdash m \dotsim \tau \mid \chtype{a}{\one} ; \chtype{b}{\one}$,
  Then for any $\sigma_a,\sigma_b,v$,
  we have
  $\emptyset \mid \chtype{a}{\sigma_a} ; \chtype{b}{\sigma_b} \red m \evalto v$
  if and only if
  $\emptyset \mid \chtype{a}{\sigma_a} ; \chtype{b}{\sigma_b} \vdash m \evalp{w} v$ for some $w > 0$.
\end{corollary}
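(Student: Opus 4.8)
The plan is to obtain \cref{Cor:SoundnessIII} as an immediate specialization of the more general \cref{Lem:SoundnessIII}, which has already established the desired equivalence between the reduction relation $\red$ and positive-weight evaluation $\evalp{w}$ for an \emph{arbitrary} typing context $\Gm$ together with any environment $V$ satisfying $V : \Gm$. The corollary is just the closed instance of that theorem, so no new induction is required; the entire argument reduces to checking that the theorem's hypotheses hold under the right instantiation.

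First, I would instantiate \cref{Lem:SoundnessIII} by taking $\Gm \defeq \cdot$, $V \defeq \emptyset$, and $A' \defeq \one$, $B' \defeq \one$. Under this choice, the typing premise of the theorem, namely $\Gm \mid \chtype{a}{A} ; \chtype{b}{B} \vdash m \dotsim \tau \mid \chtype{a}{A'} ; \chtype{b}{B'}$, becomes exactly the hypothesis $\cdot \mid \chtype{a}{A} ; \chtype{b}{B} \vdash m \dotsim \tau \mid \chtype{a}{\one} ; \chtype{b}{\one}$ assumed by the corollary. Consequently, the only remaining proof obligation is the side condition $V : \Gm$, which in this instance is $\emptyset : \cdot$. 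This is discharged directly by the rule \textsc{(TC:Empty)} of \cref{Fig:TypingAux}, which asserts $\emptyset : \cdot$ with no premises. With both premises of \cref{Lem:SoundnessIII} satisfied, its conclusion yields precisely the biconditional $\emptyset \mid \chtype{a}{\sigma_a} ; \chtype{b}{\sigma_b} \red m \evalto v$ iff $\emptyset \mid \chtype{a}{\sigma_a} ; \chtype{b}{\sigma_b} \vdash m \evalp{w} v$ for some $w > 0$, for all $\sigma_a, \sigma_b, v$, which is the statement of the corollary.

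I do not anticipate any genuine obstacle, since the corollary is a routine specialization and all substantive reasoning has already been carried out in the proof of \cref{Lem:SoundnessIII}. It is worth emphasizing \emph{why} the theorem had to be stated for an open context $\Gm$ rather than being proved directly in the closed form: the inductive argument for \textsc{(TM:Bnd)} extends the context to $\Gm, x : \tau_1$ (using \cref{Lem:SoundnessValue} and \textsc{(TC:Extend)} to re-establish the environment--context invariant), and the \textsc{(TM:Call)} case resets the context to the singleton $x_f : \tau_1$. Generalizing over $\Gm$ is exactly what lets the induction hypothesis apply in those cases, and the corollary simply reads off the closed instance $\Gm = \cdot$, $V = \emptyset$, for which the terminal channel types are forced to be $\one$.
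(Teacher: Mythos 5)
Your proof is correct and is exactly the paper's approach: the paper's proof of \cref{Cor:SoundnessIII} is simply ``Appeal to \cref{Lem:SoundnessIII},'' i.e., the closed-context instantiation $\Gm = \cdot$, $V = \emptyset$, $A' = B' = \one$ that you spell out. Your additional check of $\emptyset : \cdot$ via \textsc{(TC:Empty)} and your remark on why the theorem is stated for open contexts are accurate elaborations of the same argument.
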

\begin{proof}
  Appeal to \cref{Lem:SoundnessIII}.
\end{proof}


\section{Full Development of Sound Bayesian Inference}
\label{Se:FullInference}

\paragraph{Semantic domains}
For each scalar type $\tau$, we equip it with a standard Borel space $\interp{\tau}$ on inhabitants of $\tau$:
\begin{align*}
  \interp{\tunit} & \defeq ( \{ \etriv \}, \wp(\{ \etriv \}) ) \\
  \interp{\tbool} & \defeq ( \{ \etrue, \efalse \}, \wp(\{ \etrue, \efalse \}) ) \\
  \interp{\tureal} & \defeq ( (0,1), \calB(0,1) )  \\
  \interp{\tpreal} & \defeq ( (0,\infty), \calB(0,\infty) ) \\
  \interp{\treal} & \defeq ( ({-\infty}, \infty), \calB({-\infty}, \infty) ) \\
  \interp{\tnat_n} & \defeq ( \{0,1,\cdots,n-1\}, \wp( \{ 0,1,\cdots, n-1\} ) ) \\
  \interp{\tnat} & \defeq ( \{0,1,\cdots \}, \wp( \{0, 1, \cdots \}) )
\end{align*}

For each guide type $A$, we construct a standard Borel space $\interp{A}$ on guidance traces of type $A$.
We first present a parameterized construction $\interp{A}_{(C,\calC)}$, where $(C,\calC)$ is a standard Borel
space, for guide types \emph{excluding} type-level applications:
\begin{align*}
  \interp{\one}_{(C,\calC)} & \defeq (C, \calC) \ \\
  \interp{\tau \wedge A}_{(C,\calC)} & \defeq \msgobjl{ \interp{\tau} } \otimes \interp{A}_{(C,\calC)} \\
  \interp{\tau \supset A}_{(C,\calC)} & \defeq \msgobjr{ \interp{\tau} } \otimes \interp{A}_{(C,\calC)} \\
  \interp{A \ichoice B}_{(C,\calC)} & \defeq (\dirobjl{\etrue} \otimes \interp{A}_{(C,\calC)} ) \amalg (\dirobjl{\efalse} \otimes \interp{B}_{(C,\calC)}) \\
  \interp{A \echoice B}_{(C,\calC)} & \defeq (\dirobjr{\etrue} \otimes \interp{A}_{(C,\calC)} ) \amalg (\dirobjr{\efalse} \otimes \interp{B}_{(C,\calC)})
\end{align*}
We use product measurable spaces to construct trace spaces by treating a trace as a pair of its head and tail,
and coproduct measurable spaces to join trace spaces from different branches.
Then, for every type definition $\m{typedef}(T.X.A)$, we define a function $f_T$ that maps a standard Borel space to another one:
\[
f_T \defeq \lambda(C, \calC). \foldobj \otimes \interp{A}_{(C,\calC)}.
\]
We can now construct a map $F_T$ that computes fixed points:
\[
F_T \defeq \lambda(C,\calC). \coprod_{n=0}^\infty f_T^n(C,\calC).
\]
Because standard Borel spaces are closed under countable coproducts, we know that $F_T$ is
well-defined.
We can then add the construction for type-level applications:
\begin{align*}
  \interp{T[B]}_{(C,\calC)} & \defeq F_T(\interp{B}_{(C,\calC)}).
\end{align*}

For closed programs, we usually set the continuation space $(C,\calC)$ to $(\{ [] \}, \wp(\{ [] \}))$.
Thus, we obtain the following definitions:
\begin{align*}
  \interp{\one} & \defeq (\{ [] \}, \wp(\{[]\}) ) \\
  \interp{\tau \wedge A} & \defeq \msgobjl{ \interp{\tau} } \otimes \interp{A} \\
  \interp{\tau \supset A} & \defeq \msgobjr{ \interp{\tau} } \otimes \interp{A} \\
  \interp{A \ichoice B} & \defeq (\dirobjl{\etrue} \otimes \interp{A} ) \amalg (\dirobjl{\efalse} \otimes \interp{B}) \\
  \interp{A \echoice B} & \defeq (\dirobjr{\etrue} \otimes \interp{A} ) \amalg (\dirobjr{\efalse} \otimes \interp{B}) \\
  \interp{T[B]} & \defeq F_T(\interp{B})
\end{align*}

\paragraph{Stock measures}
For each scalar type $\tau$, we equip it with a stock measure $\lambda_{\interp{\tau}}$ on its semantic domain
$\interp{\tau}$.
For nullary products $\tunit$, Booleans $\tbool$, integer rings $\tnat_n$, and natural numbers $\tnat$,
we define $\lambda_{\interp{\tau}}$ to be the counting measure, i.e., $\lambda S. |S|$.
For unit interval $\tureal$, positive real line $\tpreal$, and real line $\treal$,
we define $\lambda_{\interp{\tau}}$ to be the Lebesgue measure $\mathbf{Leb}$, i.e., the
unique measure that satisfies $\mathbf{Leb}([a,b]) = b-a$ for any interval $[a,b]$.
All these measures are $\sigma$-finite.

For each guide type $A$, we construct a stock measure $\lambda_{\interp{A}}$ on its semantic domain
$\interp{A}$.
Similar to the construction of semantic domains, we first present a parameterized construction
$\lambda_{\interp{A}; {\mu_C}}$, where $\mu_C$ is a measure on a standard Borel space $(C,\calC)$,
for guide types \emph{excluding} type-level applications:
\begin{align*}
  \lambda_{\interp{1}; {\mu_C}} & \defeq \mu_C \\
  \lambda_{\interp{\tau \wedge A};{\mu_C}} & \defeq \msgobjl{\lambda_{\interp{\tau}}} \otimes \lambda_{\interp{A};{\mu_C}} \\
  \lambda_{\interp{\tau \supset A}; {\mu_C}} & \defeq \msgobjr{\lambda_{\interp{\tau}}} \otimes \lambda_{\interp{A};{\mu_C}} \\
  \lambda_{\interp{A \ichoice B}; \mu_C} & \defeq (\dirobjl{\etrue} \otimes \lambda_{\interp{A};\mu_C}) \amalg (\dirobjl{\efalse} \otimes \lambda_{\interp{B};\mu_C}) \\
  \lambda_{\interp{A \echoice B}; \mu_C} & \defeq (\dirobjr{\etrue} \otimes \lambda_{\interp{A};\mu_C}) \amalg (\dirobjr{\efalse} \otimes \lambda_{\interp{B};\mu_C})
\end{align*}
We use product measures to construct sequencing trace spaces, and coproduct measures to join measures for
trace spaces from different branches.
Then, for every type definition $\m{typedef}(T.X.A)$, we define a function $\mathscr{f}_T$ that maps a measure to another one:
\[
\mathscr{f}_T \defeq \lambda(\mu_C). \foldobj \otimes \lambda_{\interp{A}; \mu_C}.
\]
We can now construct a map $\scrF_T$ that computes fixed points:
\[
\scrF_T \defeq \lambda(\mu_C). \coprod_{n=0}^\infty \mathscr{f}_T^n(\mu_C).
\]
Because arbitrary coproduct of measures is well-defined, we know that $\scrF_T$ is also well-defined.
We can then add the construction for type-level applications:
\begin{align*}
  \lambda_{\interp{T[B]}; {\mu_C}} & \defeq \scrF_T(\lambda_{\interp{B};{\mu_C}}).
\end{align*}

Finally, we drop the $\mu_C$ parameter to obtain the following constructions:
\begin{align*}
  \lambda_{\interp{\one}} & \defeq \text{counting measure on $\interp{\one}$} \\
  \lambda_{\interp{\tau \wedge A}} & \defeq \msgobjl{ \lambda_{\interp{\tau}} } \otimes \lambda_{\interp{A}} \\
  \lambda_{\interp{\tau \supset A}} & \defeq \msgobjr{ \lambda_{\interp{\tau}} } \otimes \lambda_{\interp{A}} \\
  \lambda_{\interp{A \ichoice B}} & \defeq (\dirobjl{\etrue} \otimes \lambda_{\interp{A}} ) \amalg (\dirobjl{\efalse} \otimes \lambda_{\interp{B}}) \\
  \lambda_{\interp{A \echoice B}} & \defeq (\dirobjr{\etrue} \otimes \lambda_{\interp{A}} ) \amalg (\dirobjr{\efalse} \otimes \lambda_{\interp{B}}) \\
  \lambda_{\interp{T[B]}} & \defeq \scrF_T(\lambda_{\interp{B}})
\end{align*}
In addition, $\lambda_{\interp{A}}$ is a $\sigma$-finite measure for any guide type $A$, because
(i) the stock measure $\lambda_{\interp{\tau}}$ for any scalar type $\tau$ is $\sigma$-finite,
(ii) binary product of $\sigma$-finite measures is still $\sigma$-finite, and
(ii) countable coproduct of $\sigma$-finite measures still $\sigma$-finite.

\paragraph{Denotation of commands}
For a well-typed closed command $m$, i.e.,
$\cdot \mid \chtype{a}{A} ; \chtype{b}{B} \vdash_{\Sg} m \dotsim \tau \mid \chtype{a}{\one} ; \chtype{b}{\one}$,
we define the \emph{density function} of $m$ as
\[
  \mathbf{P}_m(\sigma_a,\sigma_b) \defeq \begin{dcases*}
   w & if $\emptyset \mid \chtype{a}{\sigma_a} ; \chtype{b}{\sigma_b} \vdash m \evalp{w} v$ \\
   0 & otherwise
 \end{dcases*}.
\]

\begin{proposition}\label{Lem:Meaurable}
  $\mathbf{P}_m$ is measurable.
\end{proposition}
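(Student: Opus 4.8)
The plan is to prove measurability by induction on the typing derivation of $m$, after generalizing the statement to open commands so that the environment threading of the bind rule can be handled. Concretely, for a well-typed open command $\Gm \mid \chtype{a}{A};\chtype{b}{B} \vdash m \dotsim \tau \mid \chtype{a}{\one};\chtype{b}{\one}$, I would show the strengthened claim that \emph{both} the weight and the result value are measurable: the map $\interp{\Gm} \otimes \interp{A} \otimes \interp{B} \to [0,\infty)$ sending $(V,\sigma_a,\sigma_b)$ to the weight $w$ of the unique judgment $V \mid \chtype{a}{\sigma_a};\chtype{b}{\sigma_b} \vdash m \evalp{w} v$ (and to $0$ when none exists) is measurable, and so is the partial map $(V,\sigma_a,\sigma_b) \mapsto v$ on the region where evaluation succeeds. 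The proposition is the special case $\Gm = \cdot$. That $\mathbf{P}_m$ is single-valued (hence a genuine function) follows from the determinism of evaluation for fixed traces, guaranteed by the normalization results \cref{Cor:SoundnessI,Cor:SoundnessW}.

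The key structural fact is that the semantic domains $\interp{A}$ are assembled compositionally from products ($\tau \wedge A$, $\tau \supset A$), binary coproducts ($\ichoice$, $\echoice$), and countable coproducts (the fixed-point construction $F_T = \coprod_n f_T^n$ for recursive types). Because the product $\sigma$-algebra makes coordinate projections measurable and concatenation is a measurable isomorphism onto its image, the head/tail decomposition of a trace and the discrimination of which coproduct summand a trace inhabits are both measurable. With this in hand, most cases are routine. For $\mret{e}$ the weight is the indicator of the measurable set $\{([],[])\}$. For the sample rules the weight is $d.\mathrm{density}(v)$, measurable because $d.\mathrm{density}$ is a measurable function of the drawn value (a property of primitive distributions, cf.\ \cref{Lem:SoundnessDist}) precomposed with the measurable head-projection $\sigma_a \mapsto v$. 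For the conditional rules the weight is defined separately on the two measurable summands of a coproduct and glued, with the extra Iverson factor $[v_a = v_e]$ being measurable. For $\mbnd{m_1}{x}{m_2}$ the intermediate type from the typing of $m_1$ pins down a measurable splitting $\sigma_a = \sigma_{a,1} \concat \sigma_{a,2}$, $\sigma_b = \sigma_{b,1} \concat \sigma_{b,2}$; the weight then factors as the product of $\mathbf{P}_{m_1}$ at $(V,\sigma_{a,1},\sigma_{b,1})$ and $\mathbf{P}_{m_2}$ at $(V[x \mapsto v_1],\sigma_{a,2},\sigma_{b,2})$, where $v_1$ is measurable in $(V,\sigma_{a,1},\sigma_{b,1})$ by the strengthened induction hypothesis, so $V \mapsto V[x \mapsto v_1]$ is a measurable map into $\interp{\Gm,x:\tau_1}$; a product of measurable functions is measurable. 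The environment generalization is precisely what lets this case go through.

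The main obstacle is procedure calls, since procedures are (mutually) recursive and a plain structural induction on $m$ need not terminate. The plan is to exploit that each guidance trace is \emph{finite} and that every unfolding of a call consumes exactly one $\foldobj$ message on each channel (rule \textsc{EM:Call}), so any fixed pair $(\sigma_a,\sigma_b)$ triggers only boundedly many nested calls. Formally I would introduce depth-bounded approximants $\mathbf{P}_m^{(k)}$ that permit at most $k$ nested calls and return $0$ once the budget is exhausted; each $\mathbf{P}_m^{(k)}$ is measurable by induction on $k$ together with the structural induction above (at level $k{+}1$ the call case reduces to the callee body evaluated at level $k$). Since evaluation of any finite trace uses only finitely many unfoldings, $\mathbf{P}_m^{(k)}(\sigma_a,\sigma_b)$ is eventually constant in $k$ and equals $\mathbf{P}_m(\sigma_a,\sigma_b)$ at every point, whence $\mathbf{P}_m = \lim_k \mathbf{P}_m^{(k)}$ pointwise and is measurable as a pointwise limit of measurable functions. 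Equivalently, one may read the recursive domain $\interp{T[A]} = \coprod_n f_T^n(\cdots)$ as a countable measurable partition on each piece of which $\mathbf{P}_m$ agrees with a bounded-depth (hence measurable) computation, and measurability on a countable measurable partition yields global measurability.
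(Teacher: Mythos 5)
Your proposal is correct in outline, but note that it differs from the paper mainly in that the paper does not actually present an argument: its entire proof is a one-sentence deferral to the proof strategy of \citet{ICFP:BLG16}, who establish measurability of trace-indexed weight functions for an untyped probabilistic lambda calculus. Your sketch is, in effect, the typed, coroutine-adapted version of that strategy spelled out: structural induction over typing derivations, exploiting that the semantic domains $\interp{A}$ are assembled from products and countable coproducts, plus a countable-limit argument (your depth-bounded approximants $\mathbf{P}_m^{(k)}$) to tame recursion --- which mirrors how the prior work handles unbounded computation by taking pointwise limits of indexed measurable approximations. What the paper's citation buys is economy and reliance on an established result; what your version buys is a self-contained proof organized by guide-type structure, which is arguably what a careful reader needs here, since the typed trace spaces with the $\coprod_{n} f_T^n$ fixed-point construction do not literally appear in the cited work. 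Two loose ends to tighten: (i) in the $\mbnd{m_1}{x}{m_2}$ case, measurability of the splitting $(\sigma_a,\sigma_b) \mapsto (\sigma_{a,1},\sigma_{a,2},\sigma_{b,1},\sigma_{b,2})$ should be built into the strengthened induction hypothesis, or replaced by a countable sum over split positions in which determinism guarantees at most one nonzero summand, rather than asserted from the intermediate type alone; (ii) your citations do not support what you use them for --- \cref{Lem:SoundnessDist} concerns supports, not measurability of $d.\mathrm{density}$ (which must be taken as a standing assumption on primitive distributions, jointly in parameters and argument), and \cref{Cor:SoundnessI,Cor:SoundnessW} give normalization and preservation, not determinism of evaluation (which instead follows from the syntax-directedness of the evaluation rules).
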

\begin{proof}
  We follow the proof strategy of~\citet{ICFP:BLG16}, where they proved measurability of density functions
  in an untyped probabilistic lambda calculus.
\end{proof}

Then, we construct a measure denotation $\interp{m}$ for $m$, by integrating $\mathbf{P}_m$ with
respect to the stock measure on the product space $\interp{A} \otimes \interp{B}$, i.e.,
\[
\interp{m}(S_{a,b}) \defeq \int_{S_{a,b}} \mathbf{P}_m(\sigma_a,\sigma_b) \lambda_{\interp{A} \otimes \interp{B}}(d (\sigma_a,\sigma_b)),
\]
where $S_{a,b}$ is a measurable set in $\interp{A} \otimes \interp{B}$.

\paragraph{Bayesian inference}
Let us fix a well-typed model program $m_{\m{m}}$ that consumes latent random variables on
a channel \id{latent} and provides observations on a channel \id{obs}, i.e.,
\[
\cdot \mid \chtype{latent}{A}; \chtype{obs}{B} \vdash_{\Sg} m_{\m{m}} \dotsim \tau_{\m{m}} \mid \chtype{latent}{\one}; \chtype{obs}{\one}.
\]
Usually, the program $m_{\m{m}}$ does \emph{not} receive any branch selections, i.e.,
$A$ is $\ichoice$-free and $B$ is $\echoice$-free.
Given a concrete observation $\sigma_o : B$,
Bayesian inference is the problem of approximating the \emph{posterior} $\interp{m_{\m{m}}}_{\sigma_o}$,
a measure \emph{conditioned} with respect to $\sigma_o$, defined by
\begin{equation*}
\interp{m_{\m{m}}}_{\sigma_o}(S_\ell) \defeq \frac{ \int_{S_\ell} \mathbf{P}_{m_{\m{m}}}(\sigma_\ell,\sigma_o) \lambda_{\interp{A}}(d \sigma_\ell) }{ \int \mathbf{P}_{m_{\m{m}}}(\sigma_\ell, \sigma_o) \lambda_{\interp{A}}(d \sigma_\ell) },
\end{equation*}
where $S_\ell$ is a measurable set in $\interp{A}$, i.e., a set of guidance traces of type $A$.

\paragraph{Guide programs}
In our system, we implement a guide program $m_{\m{g}}$ as
a coroutine that works with the model program $m_{\m{m}}$ and provides the
\id{latent} channel with guide type $A$ that $m_{\m{m}}$ consumes, i.e.,
\begin{gather*}
\cdot \mid \varnothing ; \chtype{latent}{A} \vdash_{\Sg} m_{\m{g}} \dotsim \tau_{\m{g}} \mid \varnothing ; \chtype{latent}{\one}, \\
\cdot \mid \chtype{latent}{A}; \chtype{obs}{B} \vdash_{\Sg} m_{\m{m}} \dotsim \tau_{\m{m}} \mid \chtype{latent}{\one}; \chtype{obs}{\one}.
\end{gather*}

The coroutine-based paradigm folds the model and guide programs into a single entity;
thus, during the inference, both the model and guide coroutines execute.
However, to distinguish the two measures defined by the model and guide, respectively,
we define a denotation for the guide $m_{\m{g}}$, accompanied by the model $m_{\m{m}}$ and conditioned
on a concrete observation $\sigma_o : B$, as a measure defined by
\[
  \interp{m_{\m{g}} }^{m_{\m{m}}}_{\sigma_o}(S_\ell) \defeq \int_{S_\ell} [ \mathbf{P}_{m_{\m{m}}}(\sigma_\ell,\sigma_o) \neq 0 ] \cdot \mathbf{P}_{m_{\m{g}}}(\sigma_\ell) \lambda_{\interp{A}}(d\sigma_\ell),
\]
where $S_\ell$ is a measurable set in $\interp{A}$, i.e., a set of guidance traces of type $A$.

To justify the inclusion of $[\mathbf{P}_{m_\m{m}}(\sigma_\ell,\sigma_o) \neq 0]$ in the denotation of the guide program, we consider \emph{possible} traces for a model-guide system.
A combination of traces $(\sigma_\ell,\sigma_o)$ is said to be
\emph{possible} for the model program $m_\m{m}$ and the guide program $m_\m{g}$, if
$\emptyset \mid \chtype{latent}{\sigma_\ell} ; \chtype{obs}{\sigma_o} \red m_\m{m} \evalto v_\m{m}$ and $\emptyset \mid \varnothing; \chtype{latent}{\sigma_\ell} \red m_\m{g} \evalto v_\m{g}$ for some values $v_\m{m}$ and $v_\m{g}$.

\begin{lemma*}[\cref{Lem:JustificationOfIncludingModel}]
  Suppose that $A$ is $\ichoice$-free, $B$ is $\echoice$-free, and
  \begin{align*}
    \cdot \mid \varnothing; \chtype{latent}{A} & \vdash_\Sg m_{\m{g}} \dotsim \tau_{\m{g}} \mid \varnothing; \chtype{latent}{\one}, \\
    \cdot \mid \chtype{latent}{A} ; \chtype{obs}{B} & \vdash_\Sg m_{\m{m}} \dotsim \tau_{\m{m}} \mid \chtype{latent}{\one} ; \chtype{obs}{\one}.
  \end{align*}
  Then a combination of traces $(\sigma_\ell, \sigma_o)$ is possible for the model $m_\m{m}$ and the guide $m_\m{g}$
  if and only if
  $\mathbf{P}_{m_\m{m}}(\sigma_\ell,\sigma_o) \neq 0$.
\end{lemma*}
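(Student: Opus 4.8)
The plan is to reduce the statement to the three corollaries that relate reduction, evaluation, and trace well-formedness for guide types, namely \cref{Cor:SoundnessW} (well-typed evaluations produce well-typed traces), \cref{Cor:SoundnessII} (the strengthened normalization that yields a \emph{strictly positive} weight), and \cref{Cor:SoundnessIII} (reduction holds iff evaluation holds with strictly positive weight). The crucial observation is that the right-hand condition $\mathbf{P}_{m_\m{m}}(\sigma_\ell,\sigma_o) \neq 0$ constrains only the model, whereas possibility demands that \emph{both} the model and the guide reduce. Hence the heart of the argument is to show that whenever the model reduces under $(\sigma_\ell,\sigma_o)$ the guide can automatically reduce under $\sigma_\ell$, and conversely.

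For the forward direction I would start from a possible combination, which in particular supplies the model reduction $\emptyset \mid \chtype{latent}{\sigma_\ell} ; \chtype{obs}{\sigma_o} \red m_\m{m} \evalto v_\m{m}$. Applying \cref{Cor:SoundnessIII} to the well-typed model turns this into an evaluation $\emptyset \mid \chtype{latent}{\sigma_\ell} ; \chtype{obs}{\sigma_o} \vdash m_\m{m} \evalp{w} v_\m{m}$ with $w > 0$, so that $\mathbf{P}_{m_\m{m}}(\sigma_\ell,\sigma_o) = w \neq 0$ by definition of the density function. (The guide reduction from the ``possible'' hypothesis is not even needed here.)

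For the backward direction I would proceed in three steps. First, from $\mathbf{P}_{m_\m{m}}(\sigma_\ell,\sigma_o) \neq 0$ I obtain a model evaluation of strictly positive weight, and \cref{Cor:SoundnessIII} converts it back into the model reduction. Second, \cref{Cor:SoundnessW} applied to that model evaluation certifies the traces are well-typed; in particular $\sigma_\ell : A$, where $A$ is the latent type on the model's consumed side. Third, I invoke \cref{Cor:SoundnessII} for the guide $m_\m{g}$: its provided type on \id{latent} is $A$ and it consumes no channel, so the provided type $A$ is $\ichoice$-free by hypothesis while the absent consumed side is vacuously $\echoice$-free; together with $\sigma_\ell : A$ this yields a guide evaluation $\emptyset \mid \varnothing ; \chtype{latent}{\sigma_\ell} \vdash m_\m{g} \evalp{w_\m{g}} v_\m{g}$ with $w_\m{g} > 0$. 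A last application of \cref{Cor:SoundnessIII} to the guide produces the guide reduction, so $(\sigma_\ell,\sigma_o)$ is possible.

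The main obstacle will be this guide step in the backward direction, where the $\ichoice$-freeness assumption on $A$ is essential. The weaker normalization \cref{Cor:SoundnessI} would already give a guide evaluation consuming all of $\sigma_\ell$, but only with a weight $w_\m{g} \geq 0$ that could be zero — and a zero weight is exactly what arises when the guide, as \emph{provider} of \id{latent}, sends a branch selection that disagrees with the one recorded in $\sigma_\ell$ (the Iverson-bracket factor in the \textsc{EM:Cond:Send} rules collapses to $0$). Since \cref{Cor:SoundnessIII} converts to a reduction only from a \emph{strictly positive} weight, I need \cref{Cor:SoundnessII}, whose $\ichoice$-freeness hypothesis on $A$ guarantees the guide never originates such a selection and thus forces $w_\m{g} > 0$. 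I would also take care to note that the trace $\sigma_\ell$ witnessing $\sigma_\ell : A$ from the model's consumed side is literally the same object required by the guide's provided side: trace well-formedness $\sigma : A$ is a single relation shared by both endpoints, with duality residing only in how the two programs interpret sends and receives, not in the traces themselves.
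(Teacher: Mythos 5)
Your proposal is correct and follows essentially the same route as the paper's own proof: both directions chain \cref{Cor:SoundnessIII} with \cref{Cor:SoundnessII} (for the guide) and the definition of $\mathbf{P}_{m_\m{m}}$, with $\ichoice$-freeness of $A$ doing exactly the work you identify. You are in fact slightly more explicit than the paper, which leaves implicit the step of obtaining $\sigma_\ell : A$ via \cref{Cor:SoundnessW} before invoking \cref{Cor:SoundnessII} on the guide.
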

\begin{proof}\
  \begin{itemize}
    \item The ``if'' direction:
    
    $\mathbf{P}_{m_\m{m}}(\sigma_\ell,\sigma_o) \neq 0$ \hfill (assumption)
    
    $\emptyset \mid \chtype{latent}{\sigma_\ell} ; \chtype{obs}{\sigma_o} \vdash m_\m{m} \evalp{w} v$ for some $v,w$ such that $w>0$ \hfill (definition)
    
    $\emptyset \mid \chtype{latent}{\sigma_\ell} ; \chtype{obs}{\sigma_o} \red m_\m{m} \evalto v$ \hfill (\cref{Cor:SoundnessIII})
    
    $A$ is $\ichoice$-free \hfill (assumption)
    
    $\emptyset \mid \varnothing ; \chtype{latent}{\sigma_\ell} \vdash m_\m{g} \evalp{w'} v'$ for some $v',w'$ such that $w'>0$ \hfill (\cref{Cor:SoundnessII})
    
    $\emptyset \mid \varnothing ; \chtype{latent}{\sigma_\ell} \red m_\m{g} \evalto v'$ \hfill (\cref{Cor:SoundnessIII})
    
    \item The ``only if'' direction:
    
    $\emptyset \mid \chtype{latent}{\sigma_\ell} ; \chtype{obs}{\sigma_o} \red m_\m{m} \evalto v$ for some $v$ \hfill (assumption)
    
    $\emptyset \mid \chtype{latent}{\sigma_\ell} ; \chtype{obs}{\sigma_o} \vdash m_\m{m} \evalp{w} v$ for some $w > 0 $ \hfill (\cref{Cor:SoundnessIII})
    
    $\sigma_\ell : A$, $\sigma_o : B$ \hfill (\cref{Cor:SoundnessW})
    
    $\mathbf{P}_{m_\m{m}}(\sigma_\ell,\sigma_o) = w > 0$ \hfill (definition)
  \end{itemize}
\end{proof}

\paragraph{Absolute continuity}
Recall that a measure $\mu$ is said to be \emph{absolutely continuous} with respect to a measure $\nu$,
if $\mu$ and $\nu$ are defined on the same measurable space, and
$\nu(S) \neq 0$ for every measurable set $S$ for which $\mu(S) \neq 0$.

We prove that for a model-guide pair, guide types serve as certificates for
absolute continuity.

\begin{theorem*}[\cref{The:AbsoluteContinuity}]
  Suppose that
  \begin{align*}
    \cdot \mid \varnothing; \chtype{latent}{A} & \vdash_\Sg m_{\m{g}} \dotsim \tau_{\m{g}} \mid \varnothing; \chtype{latent}{\one}, \\
    \cdot \mid \chtype{latent}{A} ; \chtype{obs}{B} & \vdash_\Sg m_{\m{m}} \dotsim \tau_{\m{m}} \mid \chtype{latent}{\one} ; \chtype{obs}{\one},
  \end{align*}
  $A$ is $\ichoice$-free, $B$ is $\echoice$-free,
  and $\sigma_o : B$ such that $\int \mathbf{P}_{m_{\m{m}}}(\sigma_\ell, \sigma_o) \lambda_{\interp{A}}(d\sigma_\ell) > 0$.
  Then
  the measure $\interp{m_{\m{m}}}_{\sigma_o}$ is absolutely continuous with respect to
  the measure $\interp{m_{\m{g}}}^{m_{\m{m}}}_{\sigma_o}$, and vice versa.
\end{theorem*}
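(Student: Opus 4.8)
The plan is to reduce both absolute-continuity claims to pointwise statements about the nonnegative density functions $\mathbf{P}_{m_{\m{m}}}$ and $\mathbf{P}_{m_{\m{g}}}$, and then to invoke \cref{Lem:JustificationOfIncludingModel} to match their supports. First I would observe that, because the denominator $Z \defeq \int \mathbf{P}_{m_{\m{m}}}(\sigma_\ell, \sigma_o) \lambda_{\interp{A}}(d\sigma_\ell)$ is a fixed positive constant, $\interp{m_{\m{m}}}_{\sigma_o}(S_\ell) = 0$ holds exactly when $\int_{S_\ell} \mathbf{P}_{m_{\m{m}}}(\sigma_\ell, \sigma_o) \lambda_{\interp{A}}(d\sigma_\ell) = 0$. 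Since $\mathbf{P}_{m_{\m{m}}}(\cdot, \sigma_o)$ and $\sigma_\ell \mapsto [\mathbf{P}_{m_{\m{m}}}(\sigma_\ell, \sigma_o) \neq 0]\cdot\mathbf{P}_{m_{\m{g}}}(\sigma_\ell)$ are nonnegative and measurable (by \cref{Lem:Meaurable}), a standard fact of Lebesgue integration shows that each of the two measures assigns $0$ to a measurable set $S_\ell$ precisely when its integrand vanishes $\lambda_{\interp{A}}$-almost everywhere on $S_\ell$. This converts the theorem into an almost-everywhere comparison of the two integrands.

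The single nonroutine ingredient is the support-matching implication
\[
\mathbf{P}_{m_{\m{m}}}(\sigma_\ell, \sigma_o) \neq 0 \;\Longrightarrow\; \mathbf{P}_{m_{\m{g}}}(\sigma_\ell) \neq 0,
\]
which I would derive directly from \cref{Lem:JustificationOfIncludingModel}: if $\mathbf{P}_{m_{\m{m}}}(\sigma_\ell, \sigma_o) \neq 0$, then $(\sigma_\ell, \sigma_o)$ is a \emph{possible} trace combination, so in particular $\emptyset \mid \varnothing; \chtype{latent}{\sigma_\ell} \red m_{\m{g}} \evalto v_{\m{g}}$ for some $v_{\m{g}}$; by \cref{Cor:SoundnessIII} this means $m_{\m{g}}$ evaluates under $\sigma_\ell$ with strictly positive weight, i.e.\ $\mathbf{P}_{m_{\m{g}}}(\sigma_\ell) \neq 0$. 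The hypotheses that $A$ is $\ichoice$-free and $B$ is $\echoice$-free are exactly those needed to apply \cref{Lem:JustificationOfIncludingModel}.

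With these in hand, the \emph{vice versa} direction---$\interp{m_{\m{g}}}^{m_{\m{m}}}_{\sigma_o}$ absolutely continuous with respect to $\interp{m_{\m{m}}}_{\sigma_o}$---is immediate and needs no support matching: if $\interp{m_{\m{m}}}_{\sigma_o}(S_\ell) = 0$ then $\mathbf{P}_{m_{\m{m}}}(\sigma_\ell, \sigma_o) = 0$ for $\lambda_{\interp{A}}$-almost every $\sigma_\ell \in S_\ell$, so the indicator $[\mathbf{P}_{m_{\m{m}}}(\sigma_\ell, \sigma_o) \neq 0]$---and hence the whole guide integrand---vanishes almost everywhere on $S_\ell$, giving $\interp{m_{\m{g}}}^{m_{\m{m}}}_{\sigma_o}(S_\ell) = 0$. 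For the primary direction, suppose $\interp{m_{\m{g}}}^{m_{\m{m}}}_{\sigma_o}(S_\ell) = 0$ and set $S_\ell^{+} \defeq S_\ell \cap \{ \sigma_\ell : \mathbf{P}_{m_{\m{m}}}(\sigma_\ell, \sigma_o) \neq 0 \}$, which is measurable. On $S_\ell^{+}$ the indicator equals $1$, so almost-everywhere vanishing of the guide integrand forces $\mathbf{P}_{m_{\m{g}}}(\sigma_\ell) = 0$ for almost every $\sigma_\ell \in S_\ell^{+}$; but the support-matching implication gives $\mathbf{P}_{m_{\m{g}}}(\sigma_\ell) \neq 0$ at \emph{every} point of $S_\ell^{+}$. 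These are compatible only if $\lambda_{\interp{A}}(S_\ell^{+}) = 0$, whence $\mathbf{P}_{m_{\m{m}}}(\sigma_\ell, \sigma_o) = 0$ almost everywhere on $S_\ell$ and $\interp{m_{\m{m}}}_{\sigma_o}(S_\ell) = 0$.

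The main obstacle is packaging the semantic-level support match cleanly; once \cref{Lem:JustificationOfIncludingModel} and \cref{Cor:SoundnessIII} are in place, the rest is elementary measure theory. The one care point is keeping the almost-everywhere qualifiers consistent---note that the easy direction uses only the indicator, while the converse crucially uses support matching---and confirming measurability of $S_\ell^{+}$, which follows from measurability of $\mathbf{P}_{m_{\m{m}}}(\cdot,\sigma_o)$.
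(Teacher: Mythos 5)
Your proof is correct and follows essentially the same strategy as the paper's: establish a pointwise matching of the supports of $\mathbf{P}_{m_{\m{m}}}(\cdot,\sigma_o)$ and the guide integrand $[\mathbf{P}_{m_{\m{m}}}(\cdot,\sigma_o)\neq 0]\cdot\mathbf{P}_{m_{\m{g}}}(\cdot)$, then conclude by the standard fact that a nonnegative measurable function has zero integral over a set iff it vanishes almost everywhere on that set. The only cosmetic difference is that you obtain the support match from \cref{Lem:JustificationOfIncludingModel} composed with \cref{Cor:SoundnessIII}, whereas the paper re-derives it directly from \cref{Cor:SoundnessII} (the same machinery underlying that lemma), and you package the measure-theoretic step as two implications with an explicit set $S_\ell^{+}$ rather than the paper's single chain of equivalences.
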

\begin{proof}  
  We first claim that for any $\sigma_\ell \in \interp{A}$, it holds that $\mathbf{P}_{m_\m{m}}(\sigma_\ell,\sigma_o) \neq 0$ if and only if $[ \mathbf{P}_{m_{\m{m}}}(\sigma_\ell,\sigma_o) \neq 0 ] \cdot \mathbf{P}_{m_{\m{g}}}(\sigma_\ell) \neq 0$.
  \begin{description}[labelindent=\parindent]
    \item[($\implies$)]
    Assume $\mathbf{P}_{m_\m{m}}(\sigma_\ell,\sigma_o) \neq 0$.
    Thus, there exists $v_1$ such that $\emptyset \mid \chtype{latent}{\sigma_\ell}; \chtype{obs}{\sigma_o} \vdash m_\m{m} \evalp{w_1} v_1$, where $w_1 \defeq \mathbf{P}_{m_\m{m}}(\sigma_\ell,\sigma_o)$.
    %
    %
    Because $A$ is $\ichoice$-free, we can apply \cref{Cor:SoundnessII} to program $m_\m{g}$.
    Thus, there exist $w_2,v_2$ such that $\emptyset \mid \varnothing; \chtype{latent}{\sigma_\ell} \vdash m_\m{g} \evalp{w_2} v_2$ and $w_2 > 0$.
    In other words, we have $\mathbf{P}_{m_\m{g}}(\sigma_\ell) > 0$.
    Then, we conclude that $[ \mathbf{P}_{m_{\m{m}}}(\sigma_\ell,\sigma_o) \neq 0 ] \cdot \mathbf{P}_{m_{\m{g}}}(\sigma_\ell) \neq 0$.
    
    \item[($\impliedby$)]
    Assume $[ \mathbf{P}_{m_{\m{m}}}(\sigma_\ell,\sigma_o) \neq 0 ] \cdot \mathbf{P}_{m_{\m{g}}}(\sigma_\ell) \neq 0$.
    Thus, we have both $\mathbf{P}_{m_{\m{m}}}(\sigma_\ell,\sigma_o) \neq 0$ and $\mathbf{P}_{m_{\m{g}}}(\sigma_\ell) \neq 0$.
    Then, we conclude that $\mathbf{P}_{m_{\m{m}}}(\sigma_\ell,\sigma_o) \neq 0$ directly.
  \end{description}
  
  Fix a measurable set $S_\ell$ in $\interp{A}$.
  It suffices to show that $\interp{m_\m{m}}_{\sigma_o}(S_\ell) = 0$ if and only if $\interp{m_\m{g}}^{m_\m{m}}_{\sigma_o}(S_\ell) = 0$.
  We conclude by the following reasoning:
  \begin{align*}
    \interp{m_\m{m}}_{\sigma_o}(S_\ell) = 0 & \iff \int_{S_\ell} \mathbf{P}_{m_{\m{m}}}(\sigma_\ell,\sigma_o) \lambda_{\interp{A}}(d \sigma_\ell) = 0 \\
    & \iff \lambda_{\interp{A}}( \mathbf{P}_{m_\m{m}}({\cdot},\sigma_o) \cdot \mathrm{I}_{S_\ell} ) = 0 \\
    & \iff \lambda_{\interp{A}}( \{ \mathbf{P}_{m_\m{m}}({\cdot},\sigma_o) \cdot \mathrm{I}_{S_\ell} > 0 \} ) = 0 \\
    & \iff \lambda_{\interp{A}}( \{ \mathbf{P}_{m_\m{m}}({\cdot},\sigma_o) > 0 \wedge \mathrm{I}_{S_\ell} > 0 \}) = 0 \\
    & \iff \lambda_{\interp{A}}( \{ ( [  \mathbf{P}_{m_{\m{m}}}({\cdot},\sigma_o) \neq 0 ] \cdot \mathbf{P}_{m_{\m{g}}}({\cdot}) ) > 0 \wedge \mathrm{I}_{S_\ell} > 0 \}) = 0 \\
    & \iff \lambda_{\interp{A}}( \{  [ \mathbf{P}_{m_{\m{m}}}({\cdot},\sigma_o) \neq 0 ] \cdot \mathbf{P}_{m_{\m{g}}}({\cdot})  \cdot \mathrm{I}_{S_\ell} > 0 \}) = 0 \\
    & \iff \lambda_{\interp{A}}( [ \mathbf{P}_{m_{\m{m}}}({\cdot},\sigma_o) \neq 0 ] \cdot \mathbf{P}_{m_{\m{g}}}({\cdot})  \cdot \mathrm{I}_{S_\ell} ) = 0 \\
    & \iff \int_{S_\ell} [ \mathbf{P}_{m_{\m{m}}}(\sigma_\ell,\sigma_o) \neq 0 ] \cdot \mathbf{P}_{m_{\m{g}}}(\sigma_\ell) \lambda_{\interp{A}}(d \sigma_\ell) = 0 \\
    & \iff \interp{m_{\m{g}}}^{m_{\m{m}}}_{\sigma_o}(S_\ell) = 0.
  \end{align*}
\end{proof}

\paragraph{Importance sampling (IS)}
Recall the operational rule below for a single step in the IS algorithm:
given a model program $m_{\m{m}}$, a guide program $m_{\m{g}}$, and a concrete observation $\sigma_o$,
IS performs joint execution of the two programs to draw a sample $\sigma_\ell$ with density $w_{\m{g}}$
and compute $\frac{w_{\m{m}}}{w_{\m{g}}}$ as the importance of $\sigma_\ell$.
\begin{mathpar}
  \inferrule
  { \emptyset \mid \varnothing ; \chtype{latent}{\sigma_\ell} \vdash m_{\m{g}} \evalp{w_{\m{g}}} \_ \\
    \emptyset \mid \chtype{latent}{\sigma_\ell} ; \chtype{obs}{\sigma_o} \vdash m_{\m{m}} \evalp{w_\m{m}} \_
  }
  { m_{\m{g}}; m_{\m{m}}; \sigma_o \vdash_{\textsc{is}}^{w_{\m{g}}}  \tuple{\sigma_\ell, \sfrac{w_{\m{m}}}{w_{\m{g}}} } }
\end{mathpar}
Define a density function $\textsc{is}(\sigma_\ell) \defeq w_\m{g} \cdot \frac{w_\m{m}}{w_\m{g}} = w_\m{m}$ on
the space $\interp{A}$ of guidance traces for the computation of IS.
Then, the measure defined by IS can be defined as $\mu_{\textsc{is}}(S_\ell) \defeq \int_{S_\ell} \textsc{is}(\sigma_\ell) \lambda_{\interp{A}}(d \sigma_\ell)$.

\begin{lemma}\label{Lem:SoundIS}
  Suppose that
  \begin{align*}
    \cdot \mid \varnothing; \chtype{latent}{A} & \vdash_\Sg m_{\m{g}} \dotsim \tau_{\m{g}} \mid \varnothing; \chtype{latent}{\one}, \\
    \cdot \mid \chtype{latent}{A} ; \chtype{obs}{B} & \vdash_\Sg m_{\m{m}} \dotsim \tau_{\m{m}} \mid \chtype{latent}{\one} ; \chtype{obs}{\one},
  \end{align*}
  $A$ is $\ichoice$-free, $B$ is $\echoice$-free,
  and $\sigma_o : B$ such that $\int \mathbf{P}_{m_{\m{m}}}(\sigma_\ell, \sigma_o) \lambda_{\interp{A}}(d\sigma_\ell) > 0$.
  Then $\mu_\textnormal{\textsc{is}} \propto \interp{m_{\m{m}}}_{\sigma_o}$.
\end{lemma}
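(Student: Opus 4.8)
The plan is to show that the density $\textsc{is}$ underlying the measure $\mu_\textsc{is}$ coincides \emph{pointwise} with the unnormalized model density $\sigma_\ell \mapsto \mathbf{P}_{m_\m{m}}(\sigma_\ell, \sigma_o)$, after which proportionality to the posterior follows immediately by reading off the normalizing constant. First I would unfold the two definitions: by construction $\mu_\textsc{is}(S_\ell) = \int_{S_\ell} \textsc{is}(\sigma_\ell)\, \lambda_{\interp{A}}(d\sigma_\ell)$, while the posterior is
\[
\interp{m_\m{m}}_{\sigma_o}(S_\ell) = \frac{1}{Z}\int_{S_\ell} \mathbf{P}_{m_\m{m}}(\sigma_\ell, \sigma_o)\, \lambda_{\interp{A}}(d\sigma_\ell),
\]
where $Z \defeq \int \mathbf{P}_{m_\m{m}}(\sigma_\ell, \sigma_o)\, \lambda_{\interp{A}}(d\sigma_\ell) > 0$ by hypothesis. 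Thus it suffices to prove $\textsc{is}(\sigma_\ell) = \mathbf{P}_{m_\m{m}}(\sigma_\ell, \sigma_o)$ for every $\sigma_\ell \in \interp{A}$, since then $\mu_\textsc{is} = Z \cdot \interp{m_\m{m}}_{\sigma_o}$, which is exactly $\mu_\textsc{is} \propto \interp{m_\m{m}}_{\sigma_o}$ with constant $Z$.

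The hard part will be justifying that the defining equation $\textsc{is}(\sigma_\ell) = w_\m{g} \cdot \tfrac{w_\m{m}}{w_\m{g}} = w_\m{m}$ is legitimate even on the region where the guide cannot produce $\sigma_\ell$, i.e.\ where $w_\m{g} = \mathbf{P}_{m_\m{g}}(\sigma_\ell) = 0$ and the importance ratio is formally a division by zero. I would handle this by a case split on whether $\mathbf{P}_{m_\m{m}}(\sigma_\ell, \sigma_o) = 0$. When $\mathbf{P}_{m_\m{m}}(\sigma_\ell, \sigma_o) \neq 0$, the support-match property — precisely the $(\implies)$ direction of the claim established inside the proof of \cref{The:AbsoluteContinuity}, which invokes \cref{Cor:SoundnessII} on $m_\m{g}$ (applicable because $A$ is $\ichoice$-free) — forces $w_\m{g} > 0$; hence the IS rule fires with a well-defined importance ratio, the division is legal, and $\textsc{is}(\sigma_\ell) = w_\m{m} = \mathbf{P}_{m_\m{m}}(\sigma_\ell, \sigma_o)$. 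When $\mathbf{P}_{m_\m{m}}(\sigma_\ell, \sigma_o) = 0$, the importance-weighted density is $w_\m{g} \cdot \tfrac{0}{w_\m{g}} = 0$ whenever $w_\m{g} > 0$, and is taken to be $0$ otherwise (the contrapositive of the support-match property guarantees $w_\m{g} = 0$ already forces $w_\m{m} = 0$, so this convention introduces no discrepancy); thus $\textsc{is}(\sigma_\ell) = 0 = \mathbf{P}_{m_\m{m}}(\sigma_\ell, \sigma_o)$ as well. In both cases the two densities agree.

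Combining, the pointwise identity $\textsc{is} = \mathbf{P}_{m_\m{m}}({\cdot}, \sigma_o)$ yields, for every measurable $S_\ell$,
\[
\mu_\textsc{is}(S_\ell) = \int_{S_\ell} \mathbf{P}_{m_\m{m}}(\sigma_\ell, \sigma_o)\, \lambda_{\interp{A}}(d\sigma_\ell) = Z \cdot \interp{m_\m{m}}_{\sigma_o}(S_\ell),
\]
which establishes $\mu_\textsc{is} \propto \interp{m_\m{m}}_{\sigma_o}$. The only substantive ingredient is the appeal to absolute continuity (via \cref{The:AbsoluteContinuity} and \cref{Cor:SoundnessII}) to rule out the pathological region where the model assigns positive density but the guide assigns zero; everything else is definitional bookkeeping, with measurability of the integrands already supplied by \cref{Lem:Meaurable}. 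I expect the proof to be short, with essentially all the weight carried by that single structural guarantee.
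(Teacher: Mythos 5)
Your proof is correct and takes essentially the same route as the paper's: both reduce the claim to the pointwise identity $\textsc{is}(\sigma_\ell) = \mathbf{P}_{m_\m{m}}(\sigma_\ell,\sigma_o)$, justified by the support-match guarantee coming from \cref{The:AbsoluteContinuity} (ultimately \cref{Cor:SoundnessII} applied to $m_\m{g}$, using $\ichoice$-freeness of $A$), and then read off proportionality with the normalizing constant $Z = \int \mathbf{P}_{m_\m{m}}(\sigma_\ell,\sigma_o)\,\lambda_{\interp{A}}(d\sigma_\ell)$. If anything, your explicit case split on $\mathbf{P}_{m_\m{m}}(\sigma_\ell,\sigma_o) = 0$, which handles the division-by-zero region and yields the identity $\mu_{\textsc{is}} = Z \cdot \interp{m_\m{m}}_{\sigma_o}$ on \emph{all} measurable sets, is slightly more careful than the paper's argument, which only writes the equality for sets of nonzero posterior measure.
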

\begin{proof}
  By \cref{The:AbsoluteContinuity}, we know that the posterior $\interp{m_{\m{m}}}_{\sigma_o}$ is absolutely continuous with respect to
  $\interp{m_{\m{g}}}^{m_{\m{m}}}_{\sigma_o}$.
  Thus, $\interp{m_{\m{m}}}_{\sigma_o}$ is also absolutely continuous with respect to $\mu_{\textsc{is}}$.
  In other words, the density function $\textsc{is}({\cdot})$ is positive at all possible latent variables
  $\sigma_\ell$ in the posterior.
  Then, for any $S_\ell$ such that $\interp{m_{\m{m}}}_{\sigma_o}(S_\ell) \neq 0$,
  we have
  \[
  \interp{m_{\m{m}}}_{\sigma_o}(S_\ell) = \frac{ \int_{S_\ell} \mathbf{P}_{m_{\m{m}}}(\sigma_\ell,\sigma_o) \lambda_{\interp{A}}(d \sigma_\ell) }{ \int \mathbf{P}_{m_{\m{m}}}(\sigma_\ell, \sigma_o) \lambda_{\interp{A}}(d \sigma_\ell) } = \frac{ \int_{S_\ell} \textsc{is}(\sigma_\ell) \lambda_{\interp{A}}(d \sigma_\ell) }{ \int \mathbf{P}_{m_{\m{m}}}(\sigma_\ell, \sigma_o) \lambda_{\interp{A}}(d \sigma_\ell) } = \frac{\mu_{\textsc{is}}(S_\ell)}{ \int \mathbf{P}_{m_{\m{m}}}(\sigma_\ell, \sigma_o) \lambda_{\interp{A}}(d \sigma_\ell) },
  \]
  and we conclude by the fact the the denominator is a constant.
\end{proof}

\paragraph{Variational inference (VI)}
Recall that we parameterize the guide program $m_{\m{g}, \theta}$ by a vector $\theta \in \Theta$ of parameters,
and use KL divergence as the distance metric, which is defined by
\[
\mathrm{KL}(\mu \parallel \nu) \defeq \int p_{\mu}(\sigma_\ell) (\log p_{\mu}(\sigma_\ell) - \log p_{\nu}(\sigma_\ell)) \lambda_{\interp{A}}(d\sigma_\ell),
\] 
where $\mu$ and $\nu$ are measures on a space $\interp{A}$ of guidance traces of type $A$ with
densities $p_{\mu}$ and $p_{\nu}$, respectively,
and $\mu$ is absolutely continuous with respect to $\nu$.
The rule below formulates the computation of KL divergence for a specific $\theta$,
via joint execution of the two programs.
\begin{mathpar}
  \inferrule
  { \emptyset \mid \varnothing ; \chtype{latent}{\sigma_\ell} \vdash m_{\m{g},\theta} \evalp{w_{\m{g}}} \_ \\
    \emptyset \mid \chtype{latent}{\sigma_\ell} ; \chtype{obs}{\sigma_o} \vdash m_{\m{m}} \evalp{w_\m{m}} \_
  }
  { m_{\m{g},\theta}; m_{\m{m}}; \sigma_o \vdash_{\textsc{vi}}^{w_{\m{g}}}  \tuple{\sigma_\ell,  \log w_{\m{m}} - \log w_{\m{g}} } }
\end{mathpar}
The rule can be seen as defining a map $\sigma_\ell \mapsto w_{\m{g}} \cdot (\log w_{\m{m}} - \log w_{\m{g}})$,
which is the integrand of the divergence $\mathrm{KL}(\interp{m_{\m{g},\theta}}^{m_{\m{m}}}_{\sigma_o} \parallel \interp{m_{\m{m}}}_{\sigma_o})$.

\begin{lemma}\label{Lem:SoundVI}
  Suppose that
  \begin{align*}
    \cdot \mid \varnothing; \chtype{latent}{A} & \vdash_\Sg m_{\m{g},\theta} \dotsim \tau_{\m{g}} \mid \varnothing; \chtype{latent}{\one}, \\
    \cdot \mid \chtype{latent}{A} ; \chtype{obs}{B} & \vdash_\Sg m_{\m{m}} \dotsim \tau_{\m{m}} \mid \chtype{latent}{\one} ; \chtype{obs}{\one},
  \end{align*}
  $A$ is $\ichoice$-free, $B$ is $\echoice$-free,
  and $\sigma_o : B$ such that $\int \mathbf{P}_{m_{\m{m}}}(\sigma_\ell, \sigma_o) \lambda_{\interp{A}}(d\sigma_\ell) > 0$.
  Then, $\mathrm{KL}(\interp{m_{\m{g},\theta}}^{m_{\m{m}}}_{\sigma_o} \parallel \interp{m_{\m{m}}}_{\sigma_o})$
  is well-defined.
\end{lemma}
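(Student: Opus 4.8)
The plan is to unpack the stated definition of KL divergence and verify each of the conditions it imposes. The definition requires three things of the pair $(\mu,\nu) = (\interp{m_{\m{g},\theta}}^{m_{\m{m}}}_{\sigma_o}, \interp{m_{\m{m}}}_{\sigma_o})$: that both measures live on the same measurable space, that each admits a density with respect to the stock measure, and that $\mu$ is absolutely continuous with respect to $\nu$. The first condition is immediate, since both denotations are constructed as measures on $\interp{A}$, the space of guidance traces of type $A$ on the channel \id{latent}. So the real content is producing the densities and establishing absolute continuity in the correct direction.

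First I would read the densities directly off the definitions of the two denotations. The posterior $\interp{m_{\m{m}}}_{\sigma_o}$ is defined as an integral against $\lambda_{\interp{A}}$, so it has density $p_\nu(\sigma_\ell) \defeq \mathbf{P}_{m_{\m{m}}}(\sigma_\ell,\sigma_o) / Z$ with respect to $\lambda_{\interp{A}}$, where $Z \defeq \int \mathbf{P}_{m_{\m{m}}}(\sigma_\ell',\sigma_o)\,\lambda_{\interp{A}}(d\sigma_\ell') > 0$ is the positive normalizing constant supplied by hypothesis. Likewise the guide denotation $\interp{m_{\m{g},\theta}}^{m_{\m{m}}}_{\sigma_o}$ is defined as an integral against $\lambda_{\interp{A}}$, giving density $p_\mu(\sigma_\ell) \defeq [\mathbf{P}_{m_{\m{m}}}(\sigma_\ell,\sigma_o) \neq 0] \cdot \mathbf{P}_{m_{\m{g},\theta}}(\sigma_\ell)$. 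Measurability of both densities follows from \cref{Lem:Meaurable}, which establishes that each $\mathbf{P}_m$ is a measurable function, together with the fact that indicator functions of measurable sets and scalar normalization preserve measurability.

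Next I would invoke \cref{The:AbsoluteContinuity}. Its hypotheses are exactly those of the present lemma (the two well-typedness judgments with $\one$-continuations, $A$ being $\ichoice$-free, $B$ being $\echoice$-free, and $Z > 0$), with $m_{\m{g}}$ instantiated to $m_{\m{g},\theta}$. The theorem yields mutual absolute continuity between $\interp{m_{\m{m}}}_{\sigma_o}$ and $\interp{m_{\m{g},\theta}}^{m_{\m{m}}}_{\sigma_o}$; the direction I need is $\mu \ll \nu$, i.e.\ $\interp{m_{\m{g},\theta}}^{m_{\m{m}}}_{\sigma_o}$ absolutely continuous with respect to $\interp{m_{\m{m}}}_{\sigma_o}$, which is precisely the precondition the definition of KL divergence demands. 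This completes all three requirements, so the integrand $p_\mu(\sigma_\ell)(\log p_\mu(\sigma_\ell) - \log p_\nu(\sigma_\ell))$ is a well-defined measurable function and the divergence is well-defined as an element of $[0,\infty]$.

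The step I expect to be the main obstacle is the delicate point hiding inside the phrase ``well-defined'': one must rule out an ill-formed $+\infty \cdot (-\infty)$ contribution arising where $p_\mu > 0$ but $p_\nu = 0$, which would make $\log p_\nu = -\infty$. This is exactly where absolute continuity does the work. Expressed through the common dominating measure $\lambda_{\interp{A}}$, $\mu \ll \nu$ forces $p_\mu = 0$ holding $\lambda_{\interp{A}}$-almost everywhere on the set $\{p_\nu = 0\}$, so on the support of $\mu$ the term $\log p_\nu$ is finite; on $\{p_\mu = 0\}$ the integrand is taken to be $0$ by the usual convention. Hence the integrand is a genuine measurable function into $(-\infty,\infty]$ whose integral exists (possibly $+\infty$), and I would close by remarking that admitting the value $+\infty$ is consistent with the definition, so no further finiteness argument is required.
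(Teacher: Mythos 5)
Your proposal is correct and takes essentially the same route as the paper: the paper's own proof consists of invoking \cref{The:AbsoluteContinuity} to conclude that $\interp{m_{\m{g},\theta}}^{m_{\m{m}}}_{\sigma_o}$ is absolutely continuous with respect to $\interp{m_{\m{m}}}_{\sigma_o}$, and then declaring the KL divergence well-defined. Your extra steps---reading off the two densities from the definitions and ruling out the $\log$-of-zero issue on the set where the posterior density vanishes---only make explicit what the paper leaves implicit in the phrase ``well-defined.''
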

\begin{proof}
  By \cref{The:AbsoluteContinuity}, we know that $\interp{m_{\m{g},\theta}}^{m_{\m{m}}}_{\sigma_o}$ is absolutely continuous with respect to $\interp{m_{\m{m}}}_{\sigma_o}$.
  Thus, the KL divergence used in VI is well-defined.
\end{proof}

\paragraph{Markov-Chain Monte Carlo (MCMC)}
We focus on \emph{Metropolis-Hastings} (MH), which constructs the transition
kernel from a \emph{proposal} subroutine, and a probabilistic decision subroutine that either
accepts the proposed random sample, or rejects it and keeps the old one.
To implement proposal subroutines in our system, we extend the core calculus such that guidance
traces can be used as first-class data.
Then we implement the proposal subroutine as a procedure $g$ whose argument is a guidance trace
on the channel for latent random variables.
The operational rule below formulates a single step in the MH algorithm;
given a proposal procedure $g$, a model program $m_\m{m}$, a concrete observation $\sigma_o$,
and the current latent trace $\sigma_\ell$, MH first performs joint execution of $\mcall{g}{\sigma_\ell}$
and $m_\m{m}$ to generate a new latent trace $\sigma_\ell'$ with density $w_{\m{fwd}}$, 
and then uses the new $\sigma_\ell'$ and the old $\sigma_\ell$ to calculate a \emph{backward} density $w_{\m{bwd}}$.
MH then uses these densities to compute an \emph{acceptance ratio} $\alpha$, and then accepts
the new sample $\sigma_\ell'$ with probability $\alpha$.
\begin{mathpar}
  \inferrule
  { \emptyset \mid \varnothing ; \chtype{latent}{\sigma_\ell'} \vdash \mcall{g}{\sigma_\ell} \evalp{w_{\m{fwd}}} \_ \\
    \emptyset \mid \chtype{latent}{\sigma_\ell'} ; \chtype{obs}{\sigma_o} \vdash m_\m{m} \evalp{w'_\m{m}} \_ \\
    \emptyset \mid \varnothing ; \chtype{latent}{\sigma_\ell} \vdash \mcall{g}{\sigma_\ell'} \evalp{w_{\m{bwd}}} \_ \\
    \emptyset \mid \chtype{latent}{\sigma_\ell} ; \chtype{obs}{\sigma_o} \vdash m_\m{m} \evalp{w_\m{m}} \_ \\
    \alpha = \min(1, \frac{w'_\m{m} \cdot w_{\m{bwd}}}{w_\m{m} \cdot w_{\m{fwd}}})
  }
  { g ;m_\m{m}; \sigma_o \vdash_{\textsc{mh}} \sigma_\ell \xRightarrow{w_{\m{fwd}} \cdot \alpha}  \sigma_\ell' }
\end{mathpar}
The MH algorithm specifies a transition density $f_1(\sigma_\ell, \sigma_\ell') \defeq w_{\m{fwd}} \cdot \alpha$ and
an unchanged density $f_2(\sigma_\ell, \sigma_\ell') \defeq w_{\m{fwd}} \cdot (1-\alpha)$.
Then we can use the two density functions to construct the MH kernel
$\kappa_{\textsc{mh}}(\sigma,S) \defeq \int_S (f_1(\sigma,\sigma') + f_2(\sigma,\sigma') \cdot [\sigma \in S]) \lambda_{\interp{A}}(d \sigma')$.

\begin{lemma}\label{Lem:SoundMH}
  Suppose that
  \begin{align*}
    \id{old} : |A| \mid \varnothing; \chtype{latent}{A} & \vdash_\Sg m_\m{g} \dotsim \tau_{\m{g}} \mid \varnothing; \chtype{latent}{\one}, \\
    \cdot \mid \chtype{latent}{A} ; \chtype{obs}{B} & \vdash_\Sg m_{\m{m}} \dotsim \tau_{\m{m}} \mid \chtype{latent}{\one} ; \chtype{obs}{\one},
  \end{align*}
  $m_\m{g}$ is the procedure body of the proposal program $g$, which has a single parameter $\id{old}$ with
  type $|A|$ that describes first-class guidance traces,
  $A$ is $\ichoice$-free, $B$ is $\echoice$-free,
  and $\sigma_o : B$ such that $\int \mathbf{P}_{m_{\m{m}}}(\sigma_\ell, \sigma_o) \lambda_{\interp{A}}(d\sigma_\ell) > 0$.
  Then the posterior $\interp{m_{\m{m}}}_{\sigma_o}$ is stationary for the kernel $\kappa_{\textnormal{\textsc{mh}}}$.
\end{lemma}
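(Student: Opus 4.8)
The plan is to reduce the claim to the classical \emph{detailed-balance} (reversibility) identity for Metropolis--Hastings. Write $\pi^\star(\sigma) \defeq \mathbf{P}_{m_\m{m}}(\sigma,\sigma_o)$ for the unnormalized posterior density and $q(\sigma,\sigma') \defeq \mathbf{P}_{\mcall{g}{\sigma}}(\sigma')$ for the proposal density, so that by \eqref{Eq:Posterior} the target $\interp{m_\m{m}}_{\sigma_o}$ has $\lambda_{\interp{A}}$-density $\sigma \mapsto \pi^\star(\sigma)/Z$ with $Z \defeq \int \pi^\star(\sigma_\ell)\,\lambda_{\interp{A}}(d\sigma_\ell) > 0$. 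Reading off the \textsc{mh} rule, $w_\m{fwd} = q(\sigma_\ell,\sigma_\ell')$, $w_\m{bwd} = q(\sigma_\ell',\sigma_\ell)$, $w_\m{m} = \pi^\star(\sigma_\ell)$ and $w'_\m{m} = \pi^\star(\sigma_\ell')$, so $\alpha(\sigma,\sigma') = \min\!\big(1,\ \tfrac{\pi^\star(\sigma')\,q(\sigma',\sigma)}{\pi^\star(\sigma)\,q(\sigma,\sigma')}\big)$, $f_1(\sigma,\sigma') = q(\sigma,\sigma')\,\alpha(\sigma,\sigma')$, and $f_2 = q - f_1$. I would first record that $\kappa_\textsc{mh}(\sigma,\cdot)$ decomposes into a ``move'' part carried by the density $f_1$ and a ``stay'' part that keeps the state at $\sigma$ with total rejection mass $r(\sigma) \defeq \int_{\interp{A}} f_2(\sigma,\sigma')\,\lambda_{\interp{A}}(d\sigma')$, and that $\kappa_\textsc{mh}$ is a genuine probability kernel: measurability of $f_1,f_2$ in $(\sigma,\sigma')$ follows from \cref{Lem:Meaurable} applied to the commands $\mcall{g}{\cdot}$ and $m_\m{m}$, and total mass one follows once $q(\sigma,\cdot)$ is shown to be a probability measure.

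Next I would discharge the well-definedness of $\alpha$ and the reachability of the whole posterior support, using the open-program variant of \cref{The:AbsoluteContinuity} alluded to in \cref{Se:SoundInference}. Because the proposal body is typed as $\id{old}:|A| \mid \varnothing;\chtype{latent}{A} \vdash m_\m{g} \dotsim \tau_\m{g} \mid \varnothing;\chtype{latent}{\one}$ with $A$ being $\ichoice$-free, the open analogue of \cref{Cor:SoundnessII} gives, for every stored trace $\sigma$ and every $\sigma' : A$, an evaluation $\emptyset \mid \varnothing;\chtype{latent}{\sigma'} \vdash \mcall{g}{\sigma} \evalp{w} \_$ with $w>0$; hence $q(\sigma,\sigma') > 0$ for all $\sigma,\sigma' \in \interp{A}$. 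This full-support property ensures the ratio defining $\alpha$ is well-defined (its denominator is positive on $\{\pi^\star(\sigma)>0\}$) and that no posterior-positive state is unreachable. Since $m_\m{g}$ only samples and sends on $\id{latent}$, performs no observation reweighting, and receives no branch selections, its denotation is mass-preserving, so $\int q(\sigma,\sigma')\,\lambda_{\interp{A}}(d\sigma') = 1$; I would prove this by induction on the typing derivation of $m_\m{g}$, normalizing each sampling step by the fact that primitive distributions integrate to one.

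With these facts in hand, detailed balance is a one-line consequence of the identity $a\,\min(1,b/a) = \min(a,b)$: multiplying $\alpha(\sigma,\sigma')$ by $\pi^\star(\sigma)\,q(\sigma,\sigma')$ yields $\pi^\star(\sigma)\,q(\sigma,\sigma')\,\alpha(\sigma,\sigma') = \min\!\big(\pi^\star(\sigma)q(\sigma,\sigma'),\ \pi^\star(\sigma')q(\sigma',\sigma)\big)$, which is symmetric in $\sigma \leftrightarrow \sigma'$; equivalently $\pi^\star(\sigma)\,f_1(\sigma,\sigma') = \pi^\star(\sigma')\,f_1(\sigma',\sigma)$. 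To conclude, I would compute $\int \interp{m_\m{m}}_{\sigma_o}(d\sigma)\,\kappa_\textsc{mh}(\sigma,S)$ for an arbitrary measurable $S \subseteq \interp{A}$ by splitting $\kappa_\textsc{mh}$ into its move and stay parts. Using Fubini and the reversibility of $f_1$, the move part contributes $\int_S \tfrac{\pi^\star(\sigma')}{Z}\big(1-r(\sigma')\big)\,\lambda_{\interp{A}}(d\sigma')$, where $1 - r(\sigma') = \int_{\interp{A}} f_1(\sigma',\sigma)\,\lambda_{\interp{A}}(d\sigma)$ since $\int q(\sigma',\cdot)=1$, and the stay part contributes $\int_S \tfrac{\pi^\star(\sigma)}{Z}\,r(\sigma)\,\lambda_{\interp{A}}(d\sigma)$; the two add to $\int_S \pi^\star(\sigma')/Z\,\lambda_{\interp{A}}(d\sigma') = \interp{m_\m{m}}_{\sigma_o}(S)$, which is exactly the stationarity equation.

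The main obstacle is not the detailed-balance algebra, which is standard, but the measure-theoretic bookkeeping around the proposal kernel: establishing the open-program version of \cref{The:AbsoluteContinuity} (so that $g$ may read the current trace from its environment) and proving that $q(\sigma,\cdot)$ is simultaneously everywhere-positive and normalized. The positivity is what $\ichoice$-freeness of $A$ buys through the normalization theorems, while normalization requires care because first-class guidance traces are now passed as ordinary data; I expect the bulk of the effort to lie in re-deriving \cref{Cor:SoundnessII,Cor:SoundnessW} for commands typed in a nonempty context and then transporting those guarantees through the construction of $\kappa_\textsc{mh}$.
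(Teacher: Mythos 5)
Your overall route is legitimate and genuinely different from the paper's: the paper proves this lemma by extending \cref{The:AbsoluteContinuity} to open programs (replacing \cref{Cor:SoundnessII} by \cref{The:SoundnessII} so that the proposal may read the stored trace $\id{old}$ from its environment), deriving that $\interp{m_{\m{m}}}_{\sigma_o}$ is absolutely continuous with respect to $\kappa_{\textsc{mh}}(\sigma,{\cdot})$ for every $\sigma$, and then invoking the Metropolis--Hastings--Green theorem as a black box; you instead re-prove stationarity from scratch via detailed balance and Fubini. Your identification of the main technical burden---re-deriving the normalization and safety theorems for commands typed in a nonempty context---matches the paper exactly, and the detailed-balance algebra itself is standard and fine.

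However, one step fails as stated: the normalization claim $\int q(\sigma,\sigma')\,\lambda_{\interp{A}}(d\sigma') = 1$, justified by the assertion that $m_\m{g}$ ``receives no branch selections.'' That does not follow from the hypotheses: $\ichoice$-freeness of $A$ only forbids the guide (the \emph{provider} of \id{latent}) from \emph{sending} branch selections; $A$ may well contain $\echoice$, in which case the guide \emph{receives} selections that the model sends---this is exactly the situation of \cref{Fig:ProgAsCoro} and type \eqref{Eq:LatentType}, i.e., the branching scenario the whole paper is built to support. When $A$ contains $\echoice$, the trace space $\interp{A}$ is a coproduct over the two selections, the rule (\textsc{EM:Cond:Recv:R}) attaches no indicator weight, and each branch's density integrates to $1$ separately, so $\int q(\sigma,{\cdot})\,d\lambda_{\interp{A}}$ can be $2$ or more. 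Consequently your identity $1 - r(\sigma') = \int f_1(\sigma',\sigma)\,\lambda_{\interp{A}}(d\sigma)$, which is what makes the move and stay contributions sum to the posterior, breaks precisely in the interesting case. The repair is the same device the paper uses for IS and VI: restrict the proposal to model-possible traces by inserting the factor $[\mathbf{P}_{m_{\m{m}}}(\sigma',\sigma_o)\neq 0]$ (cf.\ \cref{Lem:JustificationOfIncludingModel} and the definition of $\interp{m_{\m{g}}}^{m_{\m{m}}}_{\sigma_o}$). On that set the received selections are determined by the model's predicates, so the restricted proposal \emph{is} normalized, and $f_1$ already vanishes off that set because $w'_\m{m}=0$ forces $\alpha=0$; alternatively, reorganize the stationarity integral so that the term $\int_S \pi^\star(\sigma)\left(\int f_1(\sigma,\sigma')\,\lambda_{\interp{A}}(d\sigma')\right)\lambda_{\interp{A}}(d\sigma)$ cancels directly against the reflected move term, which avoids needing any normalization of $q$ at all.
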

\begin{proof}
  We can extend \cref{The:AbsoluteContinuity} by allowing the environments to contain values and
  then using \cref{The:SoundnessII} instead of \cref{Cor:SoundnessII} in
  the proof.
  Then, we derive that for any latent variables $\sigma$,
  the posterior $\interp{m_{\m{m}}}_{\sigma_o}$ is absolutely continuous with respect to
  the measure $\kappa_{\textsc{mh}}(\sigma,{\cdot})$;
  that is, the guide program is able to sample any latent variables in the posterior,
  no matter what the current sample is.
  We then conclude by applying the Metropolis-Hastings-Green theorem~\cite{BIOMETRIKA:Green95}.
\end{proof}

\fi

\end{document}